\documentclass[draftcls,onecolumn]{IEEEtran}
\usepackage[utf8]{inputenc}
\usepackage{amsmath,amsfonts}
\usepackage{soul}
\usepackage{verbatim}
\usepackage{graphicx}
\usepackage{float}
\usepackage{amsthm}
\usepackage{subcaption}
\usepackage{paralist}
\usepackage{url}
\usepackage{tikz}
\usepackage{cite}
\newtheorem{Def}{Definition}
\newtheorem{theorem}{Theorem} 
\newtheorem{lemma}[theorem]{Lemma}
\newtheorem{corollary}[theorem]{Corollary} 

\graphicspath{{fig}} 

\newcommand{\myh}{h}
\newcommand{\mya}{a}

\newcommand{\dg}[1]{{\color{red}#1}}

\newcommand{\sj}[1]{{\color{red}#1}}

\usetikzlibrary {shapes.symbols}
\usetikzlibrary{positioning,tikzmark}
\usetikzlibrary{overlay-beamer-styles}
\usetikzlibrary{shapes.geometric} 
\usetikzlibrary{shadings}
\makeatletter
\pgfdeclareshape{hornedsquare}{
    \savedanchor\centerpoint{
        \pgf@x=0pt
        \pgf@y=0pt
    }
    \inheritsavedanchors[from=rectangle]  
    \inheritanchorborder[from=rectangle]
    \inheritanchor[from=rectangle]{center}
    \inheritanchor[from=rectangle]{north}
    \inheritanchor[from=rectangle]{south}
    \inheritanchor[from=rectangle]{east}
    \inheritanchor[from=rectangle]{west}
    \inheritanchor[from=rectangle]{north east}
    \inheritanchor[from=rectangle]{north west}
    \inheritanchor[from=rectangle]{south east}
    \inheritanchor[from=rectangle]{south west}

    \backgroundpath{
        \pgf@xa=-0.15cm 
        \pgf@xb=0.35cm  
        \pgf@ya=-0.15cm 
        \pgf@yb=0.35cm  

        \pgfpathmoveto{\pgfpoint{\pgf@xa+0cm}{\pgf@yb+0cm}}
        \pgfpathlineto{\pgfpoint{\pgf@xb}{\pgf@yb+0cm}}
        \pgfpathlineto{\pgfpoint{\pgf@xb+0cm}{\pgf@ya+0cm}}
        \pgfpathlineto{\pgfpoint{\pgf@xa+0cm}{\pgf@ya+0cm}}
        \pgfpathclose


        \pgfpathmoveto{\pgfpoint{\pgf@xa+0.031cm}{\pgf@yb+0.125cm}}
        \pgfpathlineto{\pgfpoint{\pgf@xa}{\pgf@yb+0cm}}
        \pgfpathlineto{\pgfpoint{\pgf@xa+0.125cm}{\pgf@yb+0cm}}
        \pgfpathclose
        
        \pgfpathmoveto{\pgfpoint{\pgf@xb-0.031cm}{\pgf@yb+0.125cm}}
        \pgfpathlineto{\pgfpoint{\pgf@xb+0cm}{\pgf@yb+0cm}}
        \pgfpathlineto{\pgfpoint{\pgf@xb-0.125cm}{\pgf@yb+0cm}}
        \pgfpathclose
    }
}
\pgfdeclareshape{hornedsquareempty}{
    \savedanchor\centerpoint{
        \pgf@x=0pt
        \pgf@y=0pt
    }
    \inheritsavedanchors[from=rectangle]  
    \inheritanchorborder[from=rectangle]
    \inheritanchor[from=rectangle]{center}
    \inheritanchor[from=rectangle]{north}
    \inheritanchor[from=rectangle]{south}
    \inheritanchor[from=rectangle]{east}
    \inheritanchor[from=rectangle]{west}
    \inheritanchor[from=rectangle]{north east}
    \inheritanchor[from=rectangle]{north west}
    \inheritanchor[from=rectangle]{south east}
    \inheritanchor[from=rectangle]{south west}

    \backgroundpath{
        \pgf@xa=-0.25cm 
        \pgf@xb=0.25cm  
        \pgf@ya=-0.25cm 
        \pgf@yb=0.25cm  

        \pgfpathmoveto{\pgfpoint{\pgf@xa+0cm}{\pgf@yb+0cm}}
        \pgfpathlineto{\pgfpoint{\pgf@xb}{\pgf@yb+0cm}}
        \pgfpathlineto{\pgfpoint{\pgf@xb+0cm}{\pgf@ya+0cm}}
        \pgfpathlineto{\pgfpoint{\pgf@xa+0cm}{\pgf@ya+0cm}}
        \pgfpathclose


        \pgfpathmoveto{\pgfpoint{\pgf@xa+0.031cm}{\pgf@yb+0.125cm}}
        \pgfpathlineto{\pgfpoint{\pgf@xa}{\pgf@yb+0cm}}
        \pgfpathlineto{\pgfpoint{\pgf@xa+0.125cm}{\pgf@yb+0cm}}
        \pgfpathclose
        
        \pgfpathmoveto{\pgfpoint{\pgf@xb-0.031cm}{\pgf@yb+0.125cm}}
        \pgfpathlineto{\pgfpoint{\pgf@xb+0cm}{\pgf@yb+0cm}}
        \pgfpathlineto{\pgfpoint{\pgf@xb-0.125cm}{\pgf@yb+0cm}}
        \pgfpathclose
    }
}
\makeatother

\tikzset{
  adversary/.style = {shape=house, minimum width=0.5cm, minimum height=0.5cm, fill=pink,
                     draw, align=center},
  honest/.style     = {shape=rectangle, minimum width=0.5cm, minimum height=0.5cm, fill=green, draw, align=center},
root/.style     = {honest},
	blkhonest/.style     = {shape=rectangle, minimum width=0.5cm, minimum height=0.5cm,fill=blue!0, draw, align=center},
 	blkad/.style     = {shape=hornedsquare, minimum width=0.5cm, minimum height=0.5cm, fill=red!0, draw, align=center},
     blkadempty/.style     = {shape=hornedsquareempty, minimum width=0.5cm, minimum height=0.5cm, fill=red!0, draw, align=center},
	rec/.style     = {shape=rectangle, minimum width=0.5cm, minimum height=0.5cm, fill=blue!0, draw, align=center},
 hidhonest/.style = {shape=rectangle, minimum width=0.5cm, minimum height=0.5cm, fill=black!15, 
  draw, align=center},
  hidad/.style = {shape=hornedsquare, minimum width=0.5cm, minimum height=0.5cm, fill=black!15,
    draw, align=center},
  hidadempty/.style = {shape=hornedsquareempty, minimum width=0.5cm, minimum height=0.5cm, fill=black!15,
    draw, align=center}
}
\tikzstyle{block} = [draw,rectangle] 


\title{Security, Latency, and Throughput of Proof-of-Work Nakamoto Consensus}

\author{
Shu-Jie Cao and Dongning Guo\thanks{The authors are with the Department of Electrical and Computer Engineering, Northwestern University, Evanston, Illinois.  Email: \{shujie.cao, dguo\}@northwestern.edu.}}

\begin{document}

\maketitle

\begin{abstract}
    This paper investigates 
    the fundamental trade-offs between 
    block safety, confirmation latency, and transaction throughput of proof-of-work (PoW) longest-chain fork-choice protocols, also known as PoW Nakamoto consensus. New upper and lower bounds are derived for the probability of block safety violations as a function of honest and adversarial mining rates, 
    a block propagation delay limit, and confirmation latency measured in both time and block depth. The results include the first non-trivial closed-form finite-latency bound applicable across all delays and mining rates up to the ultimate fault tolerance. Notably, the gap between these upper and lower bounds is narrower than previously established bounds for a wide range of parameters relevant to Bitcoin and its derivatives, including 
    Litecoin and Dogecoin, as well as 
    Ethereum Classic. Additionally, the 
    study uncovers a fundamental trade-off between transaction throughput and confirmation latency, ultimately determined by the desired fault tolerance and the rate at which block propagation delay increases with block size.
\end{abstract}

\thispagestyle{plain} %
\pagestyle{plain} 

\section{Introduction}
\label{s:intro}

Nakamoto consensus is a foundational protocol in blockchain technology, widely adopted across cryptocurrencies to enable decentralized agreement on the blockchain's state. In these systems, participants organize incoming transactions into blocks, with each block pointing to a unique preceding block, thereby forming a linked chain of blocks. Bitcoin and some of its derivatives use a proof-of-work (PoW) Nakamoto consensus mechanism, where miners continually compete to solve complex cryptographic puzzles; solving a puzzle grants a miner the right to add a new block to the system. The reader is referred to~\cite{antonopoulos2017mastering, narayanan2016bitcoin} for detailed descriptions of the consensus mechanism.

A fork occurs when two or more blocks point to the same parent block. Forks can arise from adversarial mining or naturally due to block propagation delays, leading honest miners to have differing views of the blockchain state. Nakamoto consensus is built around the longest-chain fork-choice rule, which mandates that an honest miner: 1) always mines at the tip of a longest chain in its view, and 2) immediately shares any new chain extensions with other miners through a peer-to-peer network.

The most widely adopted rule is to commit a block once it is confirmed by a sufficient number of subsequent blocks. A rule of thumb for Bitcoin is six block confirmations. As an alternative, a block on a longest chain may be committed if a sufficient amount of time has lapsed since the block was seen. In either scenario, the safety of a committed block could still be violated if another chain, which excludes that block, is to emerge as a longest chain at any subsequent time. Since such a violation presents a {\em double-spending} opportunity with potentially grave consequences, it is crucial to make the probability of violation very small under the adversary's strongest feasible attacks.

In 2020, Dembo et al.~\cite{dembo2020everything} and Ga\v{z}i et al.~\cite{gazi2020tight} used very different methods to determine the ultimate fault tolerance for the Nakamoto consensus. As an abstraction of the peer-to-peer network, suppose once a block is known to one honest miner, it is known to all honest miners within $\Delta$ seconds. Suppose also the adversary mines $a$ blocks per second and the honest miners collectively mine $h$ blocks per second, all through proof of work (PoW). Evidently, the corresponding inter-arrival times are $1/a$ and $1/h$ seconds, respectively. Then as long as
\begin{align} \label{eq:a<}          
    \frac1a
    >
    \Delta +  \frac1h,
\end{align}
every block is safe with arbitrarily high probability after sufficient block confirmations, regardless of the adversary's attack strategy; conversely, if $1/a<\Delta+1/h$,
no 
block is ever safe in the sense that a well-known attack can violate its safety with probability 1, regardless of the number of block confirmations.

If~\eqref{eq:a<} answers the ``capacity'' question, then a ``finite-blocklength'' question ensues: Suppose the parameters satisfy~\eqref{eq:a<},
how safe is a committed block after $k$ block confirmations? 
Ga\v{z}i, Ren, and Russell~\cite{gazi2022practical} 
provided a safety guarantee 
via numerically evaluating 
a Markov chain, 
which is only practical for 
small confirmation depths 
and parameters that are 
far away from the ultimate fault tolerance. 
By analyzing races between renewal processes, Li, Guo, and Ren
~\cite{li2021close} 
presented a closed-form result focusing on confirmation by time.
Guo and Ren~\cite{guo2022bitcoin} 
introduced a rigged model, where honest blocks mined within $\Delta$ seconds of any previous blocks are flipped to become adversarial blocks. With this additional advantage, the adversary's optimal attack reduces to the well-understood private mining attack.
A simple closed-form latency-security guarantee
is given in~\cite{guo2022bitcoin}, although the rigging technique also weakens the
fault tolerance from~\eqref{eq:a<} to $\beta < 1 - (1/2) e^{\lambda \Delta}$, where $\lambda=a+h$ denotes the total mining rate and $\beta=a/(a+h)$ denotes the fraction of adversarial mining.
Doger and Ulukus~\cite{doger2024refined} 
refined the rigged model 
of~\cite{guo2022bitcoin} 
by 
not flipping honest blocks which see no other honest blocks within the $\Delta$ period before its mining,
thereby improving the security-latency bounds.
The same authors generalized their study based on the rigged model to the cases of different network delay assumptions
in~\cite{doger2024transactioncapacitysecuritylatency}
and~\cite{doger2024powsecuritylatencyrandomdelays}.
However, 
all of the preceding rigged models 
transfer some honest mining rate to the adversary, yielding strictly positive losses in the fault tolerance, i.e.,~\eqref{eq:a<} is not achieved.

In this paper, we carefully define a continuous-time model for block mining processes, blockchains, and safety violation for blocks. We develop a new closed-form latency-security trade-off for the Nakamoto consensus which is the first to achieve the ultimate fault tolerance~\eqref{eq:a<}.
The new results here also substantially reduce the previous gap between upper and lower bounds on the latency-security trade-off in many scenarios of interest.
\begin{table}
\begin{center}
\begin{tabular}{|r||r|r|r|r|r|}
\hline
$\beta$ 
& new lower & new upper
& \cite{guo2022bitcoin} 
& \cite{doger2024refined} 
& \cite{gazi2022practical} 
\\ 
\hline
10\% & 6 &8&7&7&7 \\
\hline
20\% & 13 &16&17&15&15 \\
\hline
30\% & 32 &39&42& 38&N/A \\
\hline
40\% & 122 &156&187& 165&N/A \\
\hline
 49.2\% &26720 & 44024& $\infty$ & $\infty$ &N/A\\
\hline
\end{tabular}    
\end{center}
\caption{
Bounds on confirmation depth for 99.9\% safety.  $
\lambda=1/600$ (blocks/second); $\Delta=10$ seconds.}
\label{tb:depth}
\end{table}
In Table~\ref{tb:depth} we compare our new bounds to the best prior bounds on the minimum confirmation depth that guarantees 99.9\% safety.
The total mining rate is assumed to be Bitcoin's 1 block every 600 seconds.  The propagation delay bound $\Delta$ is set to 10 seconds.  
Our results encompass various adversarial mining fractions from $10\%$ to $49.2\%$.
We include bounds for 10\% and 20\% adversarial mining taken directly from~\cite{gazi2022practical}, which provides no result for higher adversarial mining rates because the numerical method therein is 
infeasible for parameters that are 
close to the 
fault tolerance.  Evidently, the new upper bound is much closer to the lower bound than prior bounds in challenging cases.
Moreover, the new bounds apply to all parameter values satisfying~\eqref{eq:a<}, which in this case is up to $
\beta=49.8\%$.  In contrast, the fault tolerance in~\cite{guo2022bitcoin} is $49.16\%$, hence for $49.2\%$, it is
indicated by $\infty$ in Table~\ref{tb:depth}.

The inverse of the right-hand side of~\eqref{eq:a<} is known to be the minimum growth rate of honest blocks' heights~\cite{pass2017analysis, dembo2020everything, gazi2020tight}.  
If this growth rate exceeds the adversarial mining rate (as in~\eqref{eq:a<}), the chance that more adversarial blocks are mined to overwhelm honest blocks to create a violation is very small for a confirmation latency sufficiently high.
However, even if fewer adversarial blocks are mined than the height growth of honest blocks (i.e., the adversary has a deficit), a violation is still possible when honest miners with different views extend different chains to balance each other's blocks on enough heights (i.e., the deficit is covered by those ``balanced heights''). This paper develops a novel approach to the latency-security problem by examining the number of balanced heights.

Upon each opportunity to have a balanced height,
there is a chance that 
all honest 
miners fail to balance, rather they synchronize to a common highest chain.
Since honest blocks' height growth rate exceeds the adversarial mining rate, there is a good chance that the adversary is left behind permanently, so no alternative highest chain is ever produced to cause a violation of the target block.
Using this insight, we show that the number of balanced heights is stochastically dominated by essentially a geometric random variable.
Interestingly, the aforementioned synchronizing block falls under the category of ``Nakamoto blocks'' defined in~\cite{dembo2020everything}.  Without using the complicated machinery of trees and tines as in~\cite{dembo2020everything, gazi2022practical}, our analysis yields 
tight, easily computable finite-latency security bounds 
which achieve and imply
the ultimate fault tolerance~\eqref{eq:a<}.

The 
tight 
latency-security trade-off enables a user to choose an efficient confirmation latency that aligns with their desired security level.
There has been ample evidence (e.g.,~\cite{fechner2022calibrating}) that the block propagation delay is approximately an affine function of the block size. 
In this paper, we explore the trade-off between the total throughput in kilobytes (KB) per second and the confirmation latency (in minutes), both of which depend on the block size and mining rate.  This paper provides improved guidance for selecting parameters that optimize key performance indicators.

The remainder of this paper is organized as follows:
Sec.~\ref{s:model}, 
introduces a mathematical model for the Nakamoto consensus.
Sec.~\ref{s:sufficiency} explains why the model is sufficient for our purposes. 
Sec.~\ref{s:theorem}
presents the main theorems. Sec.~\ref{s:essential} introduces the most essential elements of the proof, followed by a sketch of the analysis.
Sec.~\ref{s:numerical} presents the numerical results, showcasing the practical implications. 
Detailed proofs are relegated to the appendices.


\section{Model}
\label{s:model}
In this section, we 
put forth a mathematical model for block mining processes and blockchains, enabling a precise definition of key 
concepts such as
confirmation, latency, and safety violations. In Sec.~\ref{s:sufficiency}, we will delve into why the model is sufficient for the latency-security analysis.
\subsection{Block and Chain}
\label{s:block}
We consider a continuous-time model, where 
blocks can be mined at arbitrary times. 
Throughout this paper, ``by time $t$" means ``during $(0,t]$".
\begin{Def}[block] 
    The genesis block, also referred to as block 0, is mined first.
    Subsequent blocks are 
    numbered sequentially in the order they are mined, starting from
    block 1, block 2, and so on. 
    From block 1 onward,
    each block has a unique parent block,  
    which is mined strictly before it.
    The parent of block 1 is the genesis block.
    We use $t_b$ to denote the time when block $b$ is mined.
\end{Def}

\begin{Def}[blockchain and height]
    A sequence of 
    $n+1$ block numbers, $(b_0, b_1,\dots,b_n)$, defines a blockchain (or simply a chain) if $b_0=0$, and for every $i=1,\dots,n$, block $b_{i-1}$ is block $b_i$'s parent.
    As a convention, we say the height of this chain is $n$.
    Since every chain can be uniquely identified by its last block, 
    we often refer to the preceding chain as chain $b_n$. Also, we say block $b_{i+1}$ extends chain $b_i$ and that chain $b_i$ is a prefix of chain $b_n$ for every $i=0,\dots,n-1$.
    In general, we use $h_b$ to denote the height 
    of chain $b$, which is also referred to as the height of block $b$.
    Evidently, the genesis block is the only block on height $0$.
\end{Def}


\begin{figure}
    \centering
    \begin{tikzpicture}
    \draw[->] (0,2.0)--(7.75,2.0)node[above] {
    time};
    \draw[-] (0,2.05)--(0,1.95) node[above] {$0$}; 
    \draw[-] (0.95,2.05)--(0.95,1.95) node[above] {$t_1$};
    \draw[-] (2.64,2.05)--(2.64,1.95)node[below] {$s$};
    \draw[-] (5.84,2.05)--(5.84,1.95)node[below] {$s+t$};
    \draw[-] (2,2.05)--(2,1.95)node[above] {$t_2$};
    \draw[-] (3.3,2.05)--(3.3,1.95)node[above] {$t_3$};
    \draw[-] (5.1,2.0)--(5.1,2.0)node[above] {$...$};
    \draw[-] (6.6,2.05)--(6.6,1.95)node[above] {$t_8$};
  \node(0) [rec]{0};
    \node[ right=.5cm of 0] (1) [blkhonest] {1};
    \draw (0)--(1);
    \node[above right=0.5cm and 0.5cm of 1] (2) [blkad] {2};
  \draw (1)--(2);
  \node[ right=1.8cm of 1] (3) [blkhonest] {3};
  \draw (1)--(3);
      \node[right=2cm of 2] (4) [blkhonest] {4};
\draw (2)--(4);
  \node[ right=0.9cm of 3] (5) [blkhonest] {5};
  \draw (3)--(5);
   \node[below right=0.75cm and 0cm of 5] (6) [blkhonest] {6};
   \draw (1)--(6);
 \node[right=1.0cm of 5] (7) [hidad] {7};
 \draw (5)--(7);
  \node[right=1.5cm of 4] (8) [blkhonest] {8};
 \draw (4)--(8);
   \node[ right=0.6cm of 7] (9) [hidhonest] {9};
 \draw (7)--(9);
\node[right=0.8cm  of 8] (10) [hidhonest] {10};
 \draw (8)--(10);
 \end{tikzpicture}
     \caption{An example of 
     blocks and chains.
     Nonpublic blocks are shaded. Each adversarial block is marked by a pair of horns on top.}
     \label{fig:fig1}
\end{figure}
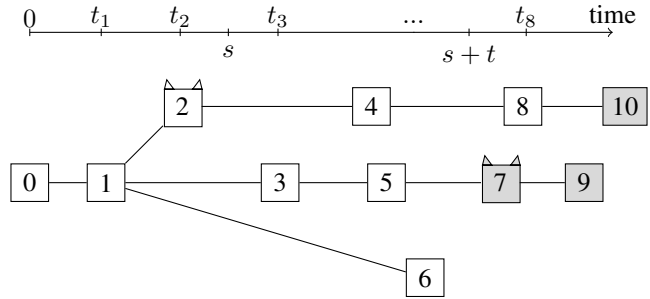

Fig.~\ref{fig:fig1} illustrates an example of chains formed by blocks 0 to 10 along with a time axis. 
It depicts a snapshot 
right after $t_{10}$ when block $10$ is mined.
Each block is horizontally positioned at its mining time and
is connected by a line to its unique parent to its left and to its children to its right (if any).
Chain $2$'s height is $h_2=2$; chain $5$'s height is $h_5=3$. We say chain $2$ is lower than chain $5$.
Even though block $6$ is mined after block $5$, chain~5 is higher than chain $6$.
Blocks $9$ and $10$ or chains $9$ and $10$ are both the highest at $t_{10}$. 

\begin{Def}[public and public height]
\label{def:public}
    A 
    chain 
    may become public at a specific 
    time 
    strictly after its creation 
    and will remain public at all times afterward.
    If a chain is public, all of its prefix chains must also be public.
   As a convention, the genesis block is public at time 0 and block 1 is mined afterward, i.e., $t_1>0$.
   At time $t$, the height of a highest public chain is referred to as the public height. 
\end{Def}

We use the notion of being public to capture the situation that a chain is known to all honest nodes.
Every chain 
consists of a public prefix chain 
followed by some nonpublic blocks (possibly none).
In Fig.~\ref{fig:fig1}, the snapshot at $t_{10}$, we shade blocks which are nonpublic.
Chain~5 is 
the public prefix of chain 9.  If chain 9 becomes public at a later time, then 
chain 7 must also be public by that time.

\begin{Def}[credible]
\label{def:credible}
    A chain mined by time $t$ is said to be credible at 
    $t$, or $t$-credible,
    if it is no lower than the public height at $t$.
\end{Def}

Being public or not is an inherent 
attribute of a chain, 
while credibility is a different attribute 
determined by the set of public chains.
Either public or not, a chain is $t$-credible if it is no shorter than any public chain at time $t$.
In the snapshot at $t_{10}$ depicted in Fig.~\ref{fig:fig1}, chain 8 is the highest public chain. Chains 4, 5, 6 are lower than chain 8, so they are not $t_{10}$-credible. Chains 7, 8, 9 and 10 are all $t_{10}$-credible.
If chain 10 becomes public at a later time, then chain 7 will no longer be credible, regardless of whether it is public at that time.

\subsection{Confirmation and Safety}
\label{s:confirmation}



\begin{Def}[confirmation by time] 
\label{def:conf_time}
    Let the confirmation time be fixed and denoted by $t>0$.
    Given a time $s$, if a chain $g$ is mined by $s$ and 
    is a prefix of a $(s+r)$-credible chain $c$ with 
    $r\geq t$, 
    then we say chain $c$ confirms chain $g$ by time of $t$.
\end{Def}

An example is depicted
in Fig.~\ref{fig:fig1} with $s$ and $s+t$ marked on the time axis. Suppose
chain 1 is mined by time $s$. 
Since chain~7 is credible at $t_7>s+t$, it confirms chain~1 by time of $t$. Likewise, chain~8 confirms chain~2 by time of $t$.

\begin{Def}[confirmation by depth] 
\label{def:conf_depth}
    Let the confirmation depth be a fixed natural number 
    denoted by $k$. 
    If a block $g$ is included in 
    a credible chain $b$ with $h_b\ge h_g+k-1$,
    then we say chain $b$ confirms block $g$ by depth of $k$ or block $g$ is $k$-confirmed by chain $b$.
\end{Def}

In Fig.~\ref{fig:fig1}, for example, block 2 is 3-confirmed by chain~8, and block~3 is 3-confirmed by chain 7 and 4-confirmed by chain 9.
With confirmation by either time or depth, we can define the violation of a 
block's safety as follows:

\begin{Def}[safety violation of a 
block]
\label{def:safety_vio}
    If there exist $r$ and $r'$ with $r' \geq r$  
    as well as
    \begin{itemize}
        \item an r-credible chain $c$ 
        which confirms block $g$,
        and 
        \item an 
        $r'$-credible chain which does not include 
        block $g$, 
    \end{itemize}
    then we say 
    block~$g$'s safety is violated.
\end{Def}

In Fig.~\ref{fig:fig1}, 
block~3 is 3-confirmed by $t_7$-credible chain~7, but is not included in chain~8, which is $t_8$-credible.
Thus block~3's safety is violated.

\subsection{Adversary and Propagation Delay}
\label{s:adversary}

\begin{Def}[types of blocks]
\label{def:types}
    A block other than the genesis block is either an A-block or an H-block.  The parent of an H-block must be credible when the H-block is mined.
\end{Def}

An A-block (resp.\ H-block) is a mathematical concept that represents a block mined by an adversarial (resp.\ honest) miner. According to Definitions~\ref{def:credible} and~\ref{def:types}, an H-block must be higher than all public 
chains at the time of its mining. There is no such 
constraint on an A-block. In fact, this is the only constraint on block heights.
In Fig.~1, we distinguish each A-block using a pair of horns on top.
Blocks 2 and 7 are A-blocks, while all the other 
blocks are H-blocks.


The propagation delays of 
blocks and chains depend on the network conditions and may also be manipulated by the adversary.  To establish a strong safety guarantee, we adopt the widely used $\Delta$-synchrony model, which is defined by the following constraint:

\begin{Def}[chain delay bound $\Delta$]
\label{def:delay}
    If block $b$ (mined at $t_b$) is an H-block, then chain~$b$ must be public at all times strictly\footnote{The H-block $b$ is nonpublic at $t_b$ according to Definition~\ref{def:public}.  In the degenerate case of $\Delta=0$, it becomes public right after $t_b$.} 
    after $t_b+\Delta$.
\end{Def}


\begin{Def}[Poisson mining processes]
    A-blocks 
    are mined 
    according to a Poisson point process with rate $a$ (blocks per second).  H-blocks 
    are mined 
    according to a Poisson point process with rate $h$ (blocks per second).  The two Poisson mining processes are independent.
\end{Def}

Let $A_{s,t}$ denote the number of A-blocks mined during $(s,t]$.  Let $H_{s,t}$ denote the number of H-blocks mined during $(s,t]$.  Evidently, $(A_{0,t}+H_{0,t})_{t\ge0}$ is a Poisson point process with rate 
\begin{align}
    \lambda = a+h.    
\end{align} 
As a convention, let $A_{s,t}=H_{s,t}=0$ for all $s>t$.  
Since the probability that any two or more blocks are mined at exactly the same time is equal to zero, we assume without loss of generality that all blocks are mined at different times. Hence $0=t_0<t_1<t_2<\dots$.

\section{Model Sufficiency}
\label{s:sufficiency}

\subsection{Sufficiency}

The prevailing model in this line of work (see, e.g.,~\cite{garay2015bitcoin, pass2017analysis, pass2017rethinking, garay2017bitcoin, dembo2020everything, gazi2020tight, saad2021revisiting, neu2022longest, gazi2022practical, kiffer2023security}) is based on the 
{\it view-based model}. Such a model defines a protocol as a set of algorithms executed in an environment that activates 
$n$ parties, which may be either honest or corrupted, each with its own view~\cite[Sec.~2]{pass2017analysis}. The protocol's execution is no simpler than a system of $n$ interactive Turing machines. 
The view-based model 
often assumes discrete time steps in execution.

The model proposed in Sec.~\ref{s:model} of this paper is entirely self-contained and much simpler than the view-based model. In contrast to previous works as well as~\cite{li2021close, guo2022bitcoin}, our model intentionally excludes miners and their views in definitions. In particular, with well-defined block mining processes, we are not concerned with the number of miners; neither do we keep track of which blocks an individual miner has mined or seen. Instead, our model 
keeps track of {\it the set of public chains} at every point in time, which 
represents {\it the weakest possible view} to any miner. 
In particular, every honest view is a superset of this weakest view, and conversely, the adversary in the view-based model can ensure at least one honest miner 
holds that weakest view.

Relative to the view-based model under $\Delta$-synchrony, our model has no loss of generality in the sense that, for a given realization (or sample path) of the A- and H-block mining processes, if the view-based model can produce a certain set of chains, so can our model. This is because both models respect the same constraints for chain creation described in Definitions~\ref{def:types} and~\ref{def:delay}.
For the latency-security analysis, our model is as powerful as the view-based model. 
To elaborate, consider the event in the view-based model where, first, a chain in an honest view confirms a block on height $h$, and then another chain in an honest view confirms a different block on the same height, creating a safety violation. Our model can produce the same pair of chains, each of which is credible to the weakest view at the time it confirms a conflicting block.
Conversely, if our model produces a pair of conflicting credible chains, then those chains can be used to convince two honest parties in the view-based model to confirm different blocks on the same height. Once we realize the role of the weakest view, there is no need to define individual 
views and their complicated 
interactions.

\subsection{Operational Meaning}

For concreteness and practicality, we relate the definitions in Sec.~\ref{s:model} to the operations of an individual miner (or node), who
assembles transactions into blocks and then chains by expending proof of work.
If the node is honest, it
follows the longest-chain fork-choice rule and immediately share new blocks via the peer-to-peer network; 
otherwise, it can use an arbitrary (possibly randomized) attack 
strategy. 
At time $t$, 
every node has its own view, which includes 
all transactions, blocks, and chains the node has seen by $t$.
The operations of this 
node is based on the entire history of this node's view, which
can only increase with time.

In general, an honest node will provide goods and services for a transaction only after it {\em commits} the transaction.
As committing a transaction has no bearing on chain production, nodes are free to choose their individual commit rules. A popular choice is to commit after $k$ block confirmations, where each node can select its own depth number $k$. In operations, after an honest node sees a new block $g$ extending one of the highest chains in the node's view, the node commits all new transactions in all blocks in chain $g$ as soon as any chain $b$ with height $h_b\ge h_g+k-1$ becomes a highest chain in its view and has chain $g$ as a prefix.\footnote{In case both block $g$ and its child block $c$ enter the node's view at once, the node can commit transactions in chain $g$ first according to this rule, and then commit additional transactions in chain $c$ in the same manner.} We emphasize that block $g$ is $k$-confirmed by chain $b$ by Definition~\ref{def:conf_depth}. In other words, committed transactions must be $k$-confirmed.


An alternative commit rule is by time of $t$: A node focuses on a block $g$ in its view at (the current) time $s$. If there ever is a time $r\ge t$, such that chain $g$ is a prefix of a highest chain in this node's view at time $s+r$, 
then the node commits all new transactions contained in chain $g$. By Definition~\ref{def:conf_time}, since this highest chain at time $s+r$ is credible,
chain $g$ is confirmed by time of $t$.

With confirmation by either depth or time, once we have a safety guarantee established rigorously for sufficiently confirmed chains in the mathematical model in Sec.~\ref{s:model}, the honest node enjoys the same operational safety guarantee for every committed transaction. There is in fact the subtlety that block $g$ is anchored to a current view at a point in time chosen by the honest node, so that the adversary cannot directly carry out an adaptive attack of the following form: Mine in private to gain a sufficiently high lead, then pick a block to attack with certainty to cause a violation.

\subsection{Delay Variations}

The seemingly rigid $\Delta$-synchrony model is quite practical. First, ``light'' nodes do not mine blocks or fork chains, so the delays they perceive are inconsequential to ledger safety as long as honest miners satisfy the delay requirement.  Second, if some blocks of honest miners suffer longer delays than $\Delta$, we simply regard those blocks as adversarial blocks to preserve $\Delta$-synchrony.  In addition, potential attacks like Erebus~\cite{Erebus} and SyncAttack~\cite{SyncAttack} seek to partition the peer-to-peer network. 
As long as those delay violations are independent of the block mining times, the honest and adversarial mining processes remain to be Poisson.  If block propagation delays have a long-tailed distribution~\cite{decker2013information, fechner2022calibrating}, we can use a delay bound that applies to a high percentage of blocks.
The main results here continue to apply with all out-of-sync mining power treated as being adversarial.

\section{Main Results}\label{s:theorem}


Recall that $a$ and $h$ represent as the adversarial and honest mining rates, respectively, in the number of 
blocks mined per second.
For confirmation by either time or depth, the following negative result is well understood (see, e.g.,~\cite{pass2017analysis, dembo2020everything, gazi2020tight}).

\begin{theorem}[unachievable fault tolerance]
\label{th:unsafe}
    If the mining rates and the delay upper bound satisfy
    \begin{align} \label{eq:a>}
        \frac1a < \Delta + \frac1{h},
    \end{align}
    then every 
    block is unsafe in the sense that the private-mining attack 
    will violate its safety with probability 1.
\end{theorem}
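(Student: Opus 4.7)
The plan is to track the private-mining attack directly and show that the adversary's private chain almost surely grows strictly faster than the longest credible public chain, forcing a safety violation in finite time. The proof naturally splits into (i) a lower bound on the growth of the adversary's chain, (ii) a matching upper bound on the growth of the honest public chain, and (iii) combining the two to conclude.

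Step (i) is immediate: since A-blocks form a Poisson point process of rate $a$ and under the private-mining attack the adversary extends only its own private chain, the private chain's height at time $T$ equals $A_{0,T}$, so the strong law of large numbers gives $A_{0,T}/T\to a$ almost surely.

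For step (ii), I would inductively characterize the height of the longest credible public chain under the attack. Because every H-block is delayed by the full $\Delta$ and every A-block is withheld, an H-block mined at time $t$ can build only on H-blocks mined by $t-\Delta$. Define $\sigma_1$ as the first H-block mining time and recursively $\sigma_{k+1}$ as the first H-block mining time strictly after $\sigma_k+\Delta$. By induction on $k$, every H-block mined in $[\sigma_k,\sigma_k+\Delta)$ has height exactly $k$, so the tallest credible public chain at time $T$ has height $N(T):=\max\{k:\sigma_k\le T\}$. Since the gaps $\sigma_{k+1}-\sigma_k$ are i.i.d.\ with mean $\Delta+1/h$, the elementary renewal theorem yields $N(T)/T\to 1/(\Delta+1/h)$ almost surely.

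For step (iii), the hypothesis~\eqref{eq:a>} rewrites as $a>1/(\Delta+1/h)$, giving $A_{0,T}-N(T)\to\infty$ almost surely. Since $N(T)\to\infty$, the target transaction (which becomes public at some finite time) is eventually included in an H-block $b_{\mathrm{tx}}$ on the honest chain and is confirmed (either by depth or by time) with probability one. At a subsequent almost-surely-finite time $T^{*}$ with $A_{0,T^{*}}>N(T^{*})$, the adversary publishes; the published chain is then credible, consists entirely of A-blocks, and hence excludes $b_{\mathrm{tx}}$. The honest confirming chain and the adversary chain are both credible and disagree on the target transaction, so safety is violated.

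The main obstacle is the height characterization in step (ii): one must verify by induction that every H-block mined in $[\sigma_k,\sigma_k+\Delta)$ has height exactly $k$. This follows from the credibility requirement (Definition~\ref{def:types}) combined with the maximal $\Delta$-delay of H-blocks, since such an H-block sees no public chain taller than $k-1$ at its mining time, and no H-block at height $>k$ has been produced by then; the bookkeeping is elementary but must be written out explicitly to rule out unexpected height jumps.
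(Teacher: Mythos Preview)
Your proposal is correct and takes essentially the same approach as the paper: both compare the adversary's Poisson block count to the pacer process (your $\sigma_k$ are precisely the paper's pacers), and both conclude via a law-of-large-numbers argument that the private chain outgrows the honest one under~\eqref{eq:a>}. The only stylistic difference is that you invoke almost-sure convergence (SLLN plus the elementary renewal theorem), whereas the paper bounds $P(E_j)$ directly and shows it tends to~$1$; your step~(ii) is in fact more explicit than the paper's about why the honest public height equals the pacer count under maximal delay.
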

For completeness, we prove Theorem~\ref{th:unsafe} in Appendix~\ref{a:unsafe}.

\subsection{Bounds for Given Confirmation Depth}
\label{subsec:bound_depth}


Let us define
  \begin{align}\label{eq:b()}
        b(\alpha,\eta)
        = \left( \frac{e^{\frac{5}{2}\gamma}\gamma(\eta-\alpha-\alpha \eta)}
        {\alpha \eta e^{\gamma} - (\alpha+\gamma)(\eta-\gamma)} \right)^2
        + 1
        - \frac{\eta-\alpha-\alpha\eta}{\eta e^\alpha},
    \end{align}
and
\begin{align}\label{eq:c()}
\begin{split}
    c(\alpha,\eta)
    =
    \left( 
    \ln \left( 1 + \frac{\gamma}{\alpha} \right)
    + \ln \left(1-\frac{\gamma}{\eta} \right)
    - \gamma\right) \cdot \left( 1 - \ln \left(1+\frac{\gamma}{\alpha} \right)
    \bigg/
    \ln \frac{\eta e^{\alpha}(e^{\eta}-1)}
    {\eta+\eta e^{\alpha}(e^{\eta}-1)-\alpha(1+\eta)}
    \right)^{-1}
\end{split}
\end{align}
    where $\gamma$ in~\eqref{eq:b()} and~\eqref{eq:c()} is a shorthand for
    \begin{align}\label{eq:mu}
        \gamma(\alpha,\eta) =\frac{2-\alpha+\eta-\sqrt{4+  (\alpha+\eta)^2}}{2} .
    \end{align}

The following latency-security trade-off is the first set of non-trivial safety guarantees that apply to all parameter values up to the ultimate fault tolerance.

\begin{theorem}[Safety guarantee 
for given confirmation depth]
\label{th:Cher_upper}
   Suppose the mining rates and the delay upper bound satisfy 
   \begin{align} \label{eq:repeat a<}
       \frac1a > \Delta + \frac1h .
   \end{align}
    Given a sufficiently large $s$, 
    the safety of a 
    block extending an $s$-credible chain and being $k$-confirmed 
    can be violated with a probability no higher than
    \begin{align}\label{eq:CherUpper}
        b(a\Delta,h\Delta) e^{-c(a\Delta,h\Delta) \cdot k}
    \end{align}
where $b(\cdot,\cdot)$ and $c(\cdot,\cdot)$ are defined in~\eqref{eq:b()} and~\eqref{eq:c()}, respectively.
\end{theorem}

Theorem~\ref{th:Cher_upper} is proved in Appendix~\ref{s:Cher_upper}, in which the following result is established as a byproduct in Appendix~\ref{s:cor:tolerance}.

\begin{corollary}[Achievable fault tolerance]
\label{cor:tolerance}
    As long as~\eqref{eq:repeat a<} holds, 
    every 
    block 
    is arbitrarily safe after sufficiently long confirmation latency (either in depth or time).
\end{corollary}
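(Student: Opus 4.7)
The plan is to derive Corollary~\ref{cor:tolerance} as an almost direct consequence of Theorem~\ref{th:Cher_upper}, with a short additional step to reduce the time-confirmation case to the depth-confirmation case. The essence is to show that under~\eqref{eq:repeat a<} the decay exponent $c(\bar a,\bar h)$ in~\eqref{eq:CherUpper} is strictly positive and the prefactor $b(\bar a,\bar h)$ is finite, so that the bound tends to $0$ as the confirmation depth $k\to\infty$.

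First I would rewrite condition~\eqref{eq:repeat a<} in the normalized variables $\bar a=a\Delta$ and $\bar h=h\Delta$. Multiplying $1/a>\Delta+1/h$ by $ah\Delta$ yields the equivalent form $\bar h-\bar a-\bar a\bar h>0$. A short calculation with~\eqref{eq:mu} shows that $\gamma(\bar a,\bar h)>0$ if and only if this same inequality holds: squaring $2+\bar h-\bar a>\sqrt{4+(\bar a+\bar h)^2}$ reduces, after cancellation, to $4(\bar h-\bar a-\bar a\bar h)>0$. Along the way one also verifies $0<\gamma<\bar h$, so every logarithm in~\eqref{eq:c()} has a strictly positive argument, and a parallel inspection of~\eqref{eq:b()} confirms that its denominator $\bar a\bar h e^{\gamma}-(\bar a+\gamma)(\bar h-\gamma)$ does not vanish; hence $b(\bar a,\bar h)<\infty$ as well.

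Next I would verify that the two factors composing $c(\bar a,\bar h)$ are both strictly positive in the region $\bar h-\bar a-\bar a\bar h>0$. Rather than attack~\eqref{eq:c()} head on, my plan is to revisit the Chernoff-type optimization from which Theorem~\ref{th:Cher_upper} is derived in Sec.~\ref{s:Cher_upper}: there $\gamma$ arises as the optimizer of a certain log-moment generating function, and by construction the corresponding large-deviation rate is strictly positive exactly when the mean drift of the underlying race favors the honest miners, that is, when~\eqref{eq:repeat a<} holds. Once $c(\bar a,\bar h)>0$ is in hand, sending $k\to\infty$ in~\eqref{eq:CherUpper} yields arbitrary safety for confirmation by depth.

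Finally, to handle confirmation by time I would reduce it to the depth case by noting that during an interval of length $t$ the honest miners produce a Poisson$(ht)$ number of H-blocks whose heights strictly increase under $\Delta$-synchrony; hence the depth above the target-containing block in any credible chain at time $s+t$ grows at least linearly in $t$, minus a sub-linear slack absorbed by $\Delta$. A standard Poisson tail bound then lets us replace a time-$t$ confirmation by a depth-$k_t$ confirmation with $k_t\to\infty$, so the depth result finishes the job. The main obstacle is the middle step: a crisp proof that $c(\bar a,\bar h)>0$ throughout the entire fault-tolerance region using the closed form alone is unpleasant, which is why routing through the underlying Chernoff derivation is much cleaner than manipulating~\eqref{eq:c()} directly.
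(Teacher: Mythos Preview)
Your route differs from the paper's. Rather than invoking the optimized exponent of Theorem~\ref{th:Cher_upper} and then arguing that $c(\bar a,\bar h)>0$, the paper (Appendix~\ref{s:cor:tolerance}) works directly from the bound~\eqref{eq:M+L+A4d} with an explicit, \emph{non}-optimal Chernoff parameter $\nu$ and fraction $c=\tfrac14\bigl(1-\bar a(1+\bar h)/\bar h\bigr)$, then checks via elementary inequalities that the base $g(\nu,c)$ of~\eqref{eq:g^k} is strictly below $1$. This sidesteps exactly the ``unpleasant'' verification you flag. Your strategy can nonetheless be made to work: the cleanest way to carry out your ``revisit the Chernoff derivation'' step is to note that $q(0)=1$ and $q'(0)=\bar a(1+\bar h)/\bar h-1<0$ under~\eqref{eq:repeat a<}, so $q(\nu^*)<1$ and hence $\xi(\bar a,\bar h)=-\ln q(\nu^*)>0$; the remaining factor in~\eqref{eq:c()} then lies in $(0,1)$ because $\nu^*=\ln(1+\gamma/\bar a)>0$ while the large logarithm there equals $\ln\bigl(\epsilon/(\epsilon+\delta-\epsilon\delta)\bigr)<0$.

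Your time-to-depth reduction contains a factual error: H-blocks do \emph{not} have strictly increasing heights under $\Delta$-synchrony---two H-blocks mined within $\Delta$ seconds can land on the same height, which is precisely how honest forks arise. What does grow is the pacer count, at rate $h/(1+h\Delta)$; replacing ``Poisson$(ht)$ H-blocks of strictly increasing height'' with pacers repairs the linear-growth claim. The paper does not carry out such a reduction; Appendix~\ref{s:cor:tolerance} handles only the depth case explicitly, and the time case is covered by the separate race analysis underlying Theorem~\ref{th:upper_time}.
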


The pair of opposing inequalities~\eqref{eq:a>} and~\eqref{eq:repeat a<} determine the exact boundary between the asymptotically achievable and unachievable latency-security trade-offs.
The right hand side of~\eqref{eq:a>} and~\eqref{eq:repeat a<} is exactly the expected inter-arrival time of the following type of blocks:

\begin{Def}[Pacer]
We say the genesis block is the $0$-th pacer. After the genesis block, every pacer is the first H-block mined at least $\Delta$ time after the previous pacer. We use $P_{s,t}$ to denote the number of pacers mined in $(s,t]$.
\end{Def}
By definition, every pacer becomes public before the next pacer 
arrives, and hence all pacers are on different heights. 
Since the expected pacer inter-arrival time is $\Delta + 1/\myh$, pacers arrive at the rate of $\myh/(1+\myh\Delta)$ in the long run.
As a consequence, the number of heights H-blocks occupy grows at least at that rate.
This approach enables us to use the growth rate of pacers as a lower bound for the height increase of H-blocks.

Let us introduce the following probability density function (pdf):
\begin{align} \label{eq:interpacertime}
    f(t) = \frac1\myh e^{-\myh(t-\Delta)} 1_{\{t>\Delta\}} .
\end{align}

\begin{lemma}[{\cite[Theorem 3.4]{Kroese_difference}}]
\label{lm:LS_trans}
    Let $A_r$ denote a Poisson point process with rate $\mya$.  Let $Q_r$ denote a renewal process whose inter-arrival time has a pdf given by~\eqref{eq:interpacertime}.
    Then the probability generating function of the maximum difference between the two processes
    \begin{align} \label{eq:defY}
        Y = \sup_{t\ge0} \{ A_t - Q_t \}
    \end{align}
    is equal to
    \begin{align}\label{eq:E(r)}
        \mathcal{E}(r)
        = \frac{(1-r)(\myh - \mya -\myh\mya \Delta)}{\myh - e^{(1-r)\mya \Delta}(\myh+\mya-\mya r)r} .
    \end{align}
\end{lemma}

We denote the probability mass function (pmf) of $Y$ as
\begin{align} \label{eq:e(i)}
    e(i) &= \frac{1}{i!} \mathcal{E}^{(i)}(0) , \qquad i=0,1,\dots
\end{align}
where 
$\mathcal{E}^{(i)}(0)$ represents the 
the $i$-th derivative of 
$\mathcal{E}(\cdot)$ 
evaluated at $r = 0$.

Throughout this paper, we let
\begin{align} 
    \epsilon &= 1-e^{-h\Delta} \label{eq:epsilon}
\end{align}
and
\begin{align}
    \delta &= 
    \left( 1-a\frac{1+h\Delta}{h} \right)^+ e^{-a\Delta} . \label{eq:delta}
\end{align}
Let us define
\begin{align}
    F^{(k)}(n) =     1- (1-\delta) \left(\frac{\epsilon}{\epsilon+\delta-\epsilon\delta}\right)^{n+1} \bar{F}_b(n;k, (\epsilon+\delta-\epsilon\delta)) \label{eq:Fkn}
\end{align}
where $\bar{F}_b(\cdot;k,p)$ denotes the complementary cumulative distribution function (cdf) of the binomial distribution with parameters $(k,p)$.
We also define
\begin{align} \label{eq:Fn}
    F(n)
    &= \lim_{k\to\infty} F^{(k)}(n) \\
    &=  1-(1-\delta)
    \left( \frac{\epsilon}{\epsilon+\delta-\epsilon\delta} \right)^{n+1} .
\end{align}

\begin{theorem}[Finer upper bound for the violation probability for given confirmation depth]\label{th:upper_depth}
   Given a sufficiently large $s$,
    the safety of a 
    block that extends an $s$-credible chain and is
    confirmed by the depth of $k$ can be violated with a probability no higher than
    \begin{align} \label{eq:upper_depth}
       \inf_{n}\bigg\{2-&\sum_{i=0}^{k-n}e(i)\sum_{j=0}^{k-n-i}e(j) \int_{0}^{\infty}F_1(k-n-i-j;a(t+3\Delta)) \cdot f_{2}(t-(k)\Delta;k,h)  \, dt -F^{(k)}(n-1) \bigg\} 
    \end{align}
    where $F^{(k)}$ is defined in~\eqref{eq:Fkn}, $e(i)$ is given by~\eqref{eq:e(i)}, $F_1(\cdot;\lambda)$ denotes the cdf of the Poisson distribution with mean $\lambda$, and $f_2(\cdot;k,h)$ denotes the pmf of the Erlang distribution with shape parameter $k$ and rate parameter $h$.
\end{theorem}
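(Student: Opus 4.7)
The plan is to upper bound the violation probability via a union bound over two complementary good events whose intersection precludes safety violation. Let $n$ be a free parameter to optimize at the end. First, define $\mathcal{A}_n=\{X^{(k)}\le n-1\}$; by Theorem~\ref{th:X}, $P(\mathcal{A}_n)\ge F^{(k)}(n-1)$, which accounts for the last term in~\eqref{eq:upper_depth}. Second, define $\mathcal{B}_n$ as the joint occurrence of: (i) the adversary's maximum lead $Y_1$ over the pacer process in the phase before the target transaction becomes public, (ii) the adversary's maximum lead $Y_2$ over the pacer process in the phase after the $k$-block confirmation window, and (iii) the number of A-blocks $N$ in a propagation-inflated confirmation window satisfying $N+Y_1+Y_2\le k-n$. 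On $\mathcal{A}_n\cap\mathcal{B}_n$, the adversary's budget of at most $n$ balanced heights plus $N+Y_1+Y_2$ A-blocks across the three phases is strictly less than the $k$ heights the honest chain climbs, so no alternative credible chain disagreeing with the target can match the confirmation chain. Consequently, $P(\text{violation})\le 2-P(\mathcal{A}_n)-P(\mathcal{B}_n)$, and taking the infimum over $n$ produces~\eqref{eq:upper_depth}.

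To compute $P(\mathcal{B}_n)$ I would invoke Lemma~\ref{lem:LS_trans}: the maximum of A-blocks minus pacers has pmf $e(\cdot)$, and the leads $Y_1,Y_2$ at the two boundary phases are independent because the Poisson A-block process and the renewal pacer process have independent increments across disjoint intervals. Let $T$ denote the time for the honest chain to mine its $k$ confirmation H-blocks; since each inter-pacer gap is $\Delta$ plus an $\exp(h)$ waiting time, $T-k\Delta$ follows the Erlang$(k,h)$ law with density $f_2(\cdot;k,h)$. The confirmation window is then inflated by $4\Delta$ to absorb propagation delays at both endpoints and the $\Delta$-ambiguity of when the transaction first became public; the A-block count in this widened window is Poisson with mean $a(t+4\Delta)$, so $P(N\le k-n-i-j\mid T=t,Y_1=i,Y_2=j)=F_1(k-n-i-j;a(t+4\Delta))$. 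Convolving over the independent distributions of $Y_1$, $Y_2$, $T$, and $N$ yields the double sum and integral appearing in~\eqref{eq:upper_depth}.

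The main obstacle is the structural claim that $\mathcal{A}_n\cap\mathcal{B}_n$ excludes safety violation under an arbitrary adversarial strategy. One must show, height by height, that every block on a competing chain up to depth $k$ which disagrees with the target is either an A-block mined inside the confirmation window, an H-block on a balanced height, or is paid for by the surplus lead accumulated in a boundary phase; this mirrors the attribution argument underlying Theorem~\ref{th:Cher_upper} but must be carried out at the level of individual heights and combined with the pacer comparison. Additional care is required to verify independence of $Y_1$, $Y_2$, $N$, $T$ under worst-case adversarial delay manipulation and to check that the $4\Delta$ inflation covers all the boundary ambiguities (roughly, one $\Delta$ for when the transaction becomes public, one $\Delta$ for the first visible H-block of the confirmation chain, and two $\Delta$'s at the closing endpoint to separate the pre- and post-window phases from the race itself). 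Once these are in place, the remaining manipulation is routine convolution.
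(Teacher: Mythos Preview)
Your proposal is correct and follows essentially the same approach as the paper: the union bound via Lemma~\ref{lem:x+y} with $X^{(k)}$ controlled by Theorem~\ref{th:X} and the race term decomposed into $L+M+A_{0,T+4\Delta}$ with $T-k\Delta$ Erlang$(k,h)$. The structural and independence obstacles you flag are exactly what the paper dispatches in Section~\ref{s:Cher_upper} (the lemma yielding~\eqref{eq:violates_aftertau}, the replacement of the adversary-dependent $\tau$ by the pacer-based time $T$, and the stochastic domination of $L_s$ by $L+A_{0,2\Delta}$ in Appendix~\ref{a:lead}), so the proof in Appendix~\ref{a:upper_depth} merely carries out the convolution you describe.
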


Theorem~\ref{th:upper_depth} is proved in Appendix~\ref{a:upper_depth}.
                                 
To ease computation, the infinite integral in~\eqref{eq:upper_depth} could be replaced by only integrating to a finite time limit, where the larger the limit, the tighter the upper bound.

A well-understood attack is the private-mining attack, which we will use to lower bound the probability of safety violation.  The attack is defined precisely as follows.

\begin{Def}[Private-mining attack on a target 
block]
\label{def:private-mining}
    The adversary delays the propagation of every H-block maximally (for $\Delta$ seconds). Starting from time 0, the adversary always mines in private to extend the highest chain which does not include the target 
    block.
\end{Def}

Recall that Definition~\ref{def:safety_vio} allows nonpublic credible chains to cause a violation. However, operationally, the adversary can carry out a private-mining attack and make the private chain public as soon as both of the following conditions hold:
\begin{inparaenum}
    \item the 
    block has been confirmed by a chain of H-blocks; and
    \item the private chain which does not contain the 
    block is credible.
\end{inparaenum}
In other words, as long as a safety violation occurs by Definition~\ref{def:safety_vio}, the adversary creates a violation operationally.

\begin{theorem}[Lower bounding the violation probability for given confirmation depth]
\label{th:lower_depth}
    Given a sufficiently large $s$, 
    the private-mining attack violates the safety of a target 
    block 
    which 
    extends an $s$-credible chain with at least the following probability:
\begin{align}\label{eq:lower_depth}
    1- \sum_{i=0}^{k} \left(1-\frac{a}{h}\right) \left(\frac{a}{h}\right)^i \cdot \sum_{j=0}^{k-i}e(j) \int_{0}^{\infty} F_1(k-(1+i+j);at) \cdot f_{2}(t-k\Delta;k,h)\, dt
\end{align}
where $e(\cdot)$, $F_1$, $f_2$ are defined as in Theorem~\ref{th:upper_depth} and $f_1(\cdot; \lambda)$ denotes the pmf of the Poisson distribution with mean $\lambda$.
\end{theorem}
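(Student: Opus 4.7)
The plan is to analyze the private-mining attack of Definition~\ref{def:private-mining} directly and decompose the non-violation event into a disjoint union whose probabilities admit closed forms in the distributions listed in the theorem statement.

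Under this attack, from time $0$ the adversary mines a private chain omitting the target transaction while delaying every H-block maximally by $\Delta$. I will assume, as the case least favorable to the adversary among those covered by the theorem, that the transaction becomes public exactly at time $s$. Because every H-block is delayed by the full $\Delta$ seconds, the longest public chain built by the honest miners advances by exactly one height per pacer: pacer inter-arrival times have density $f$ of~\eqref{eq:interpacertime}, and the time for $k$ consecutive pacers has density $f_2(t-k\Delta;k,h)$. This is the source of the outer integrals appearing in~\eqref{eq:lower_depth}.

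Next I would make the confirmation event explicit. By Definition~\ref{def:inclusion} the first H-block mined after time $s$ must include the transaction, and the honest chain reaches confirmation depth $k$ once $k$ additional pacer heights have appeared after $s$. I would then partition the non-violation event by two natural random variables: the maximum pre-$s$ lead of A-blocks over raw H-blocks, whose pmf under $a<h$ is the geometric term $(1-a/h)(a/h)^i$ coming from a classical gambler's-ruin analysis (this absorbs the extra A-blocks the adversary accumulates by exploiting pre-$s$ honest forks); and the maximum lead $Y=\sup_t\{A_t-Q_t\}$ of the adversary over the pacer process, whose pmf $e(j)$ is supplied by Lemma~\ref{lem:LS_trans} (this captures the maximum lead the adversary ever achieves over the effective, pacer-paced honest chain growth).

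Conditional on the lead values $i,j$ and on the time $t$ for the $k$ post-$s$ pacers, the adversary must still mine at least $k-(i+j)$ or $k-(1+i+j)$ additional A-blocks inside the remaining race window of effective length $t-\Delta$, producing the Poisson-cdf factor $F_1(\,\cdot\,;a(t-\Delta))$. The two branches weighted by $f_1(0,h\Delta)$ and $1-f_1(0,h\Delta)$ correspond to whether no H-block is mined in the final $\Delta$-window just before $s$, which determines whether the first transaction-carrying H-block sits at the next pacer height or one above, shifting the required gap by one. An additional term $\int_0^{\infty} e^{-h(t+2\Delta)}f_2(t-k\Delta;k,h)\,dt$ handles the degenerate branch in which no H-block is mined in a crucial window of length $t+2\Delta$, so that the non-violation race simply never initiates. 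Summing these disjoint contributions and subtracting from $1$ recovers~\eqref{eq:lower_depth}.

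\textbf{Main obstacle.} The technical core is to verify that the listed contributions are (i) genuinely disjoint so their probabilities add, and (ii) supported on non-overlapping portions of the two Poisson mining processes so that the product forms in the integrand factor correctly. In particular, separating the pre-$s$ geometric lead in $a/h$, the Lemma~\ref{lem:LS_trans} lead with pmf $e(j)$, and the post-confirmation Poisson race into three independent time slices requires careful bookkeeping of the honest-chain state at time $s$ and of the exact instant at which the transaction is published; aligning this bookkeeping with the specific coefficients $k-(i+j)$, $k-(1+i+j)$, and $a(t-\Delta)$ appearing in~\eqref{eq:lower_depth} is where most of the effort concentrates.
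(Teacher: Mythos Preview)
Your overall architecture---private-mining attack, Erlang time for $k$ pacers, post-confirmation supremum $M$ with pmf $e(\cdot)$, Poisson race factor $F_1$---matches the paper. But two points are off, and the first is a genuine gap.

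\emph{The geometric lead.} You write that $(1-a/h)(a/h)^i$ ``comes from a classical gambler's-ruin analysis'' and ``absorbs the extra A-blocks the adversary accumulates by exploiting pre-$s$ honest forks.'' This is not how the paper obtains it, and your rationale does not produce that distribution. The paper \emph{deliberately weakens} the attack by setting all H-block propagation delays to zero before time $s-\Delta$ (see the first sentence of Appendix~\ref{a:lower_depth}). With zero delay there are no honest forks at all before $s-\Delta$; the lead $L_{s-\Delta}$ is then exactly the supremum of a simple Poisson-vs-Poisson walk, whose stationary maximum is geometric with ratio $a/h$. If you keep maximal delays throughout (as Definition~\ref{def:private-mining} prescribes and as your write-up assumes), the pre-mining lead is \emph{not} geometric---it is only stochastically dominated by $M+A_{0,2\Delta}$ (Appendix~\ref{a:lead}), which is useless for a lower bound. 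So the geometric factor in~\eqref{eq:lower_depth} is purchased by giving up some adversarial power, not by an accounting trick; you need to make that sacrifice explicit.

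\emph{The structure $1-(\cdot)-(\cdot)$.} You frame~\eqref{eq:lower_depth} as $1$ minus a disjoint decomposition of non-violation, with the $e^{-h(t+2\Delta)}$ integral as one ``degenerate branch.'' The paper does not argue disjointness. It lower bounds the violation probability by $P(\{\text{race won}\}\cap\{\text{some pacer exists}\})\ge P(\text{race won})-P(\text{no pacer})$, a plain inclusion bound; the exponential term is $P(\text{no pacer in the relevant window})$, subtracted because without a pacer there is no honest confirmation to violate. Similarly, the $f_1(0,h\Delta)$ split is not about where the transaction-carrying block sits; it arises when the pacer count over $(s-\Delta,d]$ is broken into the $k$ pacers defining $\tau$, the post-$\tau$ supremum, and a residual indicator $P''_{\Delta,2\Delta}\in\{0,1\}$ for whether one extra pacer lands in a leftover $\Delta$-window. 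Your ``disjoint contributions'' framing would require you to prove inclusions you have not stated, and the independence/time-slicing you flag as the main obstacle is handled in the paper simply by the memorylessness of Poisson together with the renewal structure of pacers after the explicit weakening above.
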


Theorem~\ref{th:lower_depth} is proved in Appendix~\ref{a:lower_depth}.

\subsection{Throughput and Related Trade-offs}

An ideal protocol design should select parameters to yield high throughput, fast confirmation, and security against adversarial miners.
We refer to the average number of blocks mined during the maximum propagation delay, $\lambda\Delta$, as the natural fork number.
Throughout the paper, let $\beta$
denote the maximum tolerable fraction of adversarial mining. To operate within fault tolerance~\eqref{eq:a<},
the natural fork number must be 
bounded by
\begin{align} \label{eq:fork<}
    \lambda\Delta < \frac1{\beta} - \frac1{1-\beta} .
\end{align}
We remark that it is more convenient to discuss throughput in terms of $\beta$ and the total mining rate $\lambda$ in lieu of the adversarial and honest mining rates ($a$ and $h$).

To achieve a high throughput (in KB/second or transactions/second), the designer may select a high mining rate and/or a large block size, where the latter also implies a large block propagation delay.
The optimistic throughput, where all miners work to grow the highest chains and natural forks are orphaned, is equal to $\lambda B/ (1+\lambda \Delta)$, where $B$ represents the maximum allowable size of a block, ensuring that the propagation delay does not exceed the upper limit $\Delta$.
The following result is a straightforward consequence of~\eqref{eq:fork<}.

\begin{theorem}
    If the blocks are no larger than $B$ KB, the block propagation delays can be as large as $\Delta$ seconds, in order to tolerate adversarial mining fraction up to $\beta$, the throughput cannot exceed
    \begin{align}    \label{eq:throughput<}
        \frac{1-2\beta}{1-\beta-\beta^2}
    \frac{B}{\Delta} \quad\text{KB/s}.
    \end{align}
\end{theorem}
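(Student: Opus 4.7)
The plan is to stitch together two ingredients already in place: the optimistic throughput formula $\lambda B/(1+\lambda\Delta)$ quoted immediately before the theorem, and the upper bound on the natural fork number from \eqref{eq:fork<}, namely $\lambda\Delta < 1/\beta - 1/(1-\beta)$. Since any realizable throughput is at most the optimistic one---at best every block contributes $B$ KB and blocks arriving within a $\Delta$-synchrony window of each other are wasted to forks---it suffices to maximize $\lambda B/(1+\lambda\Delta)$ subject to the fault-tolerance constraint, with $B$ and $\Delta$ treated as fixed.

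The optimization is immediate after a change of variables. Setting $x = \lambda\Delta \ge 0$, the optimistic throughput reads $(B/\Delta)\cdot x/(1+x)$, which is a strictly increasing function of $x$, so its supremum under the constraint is attained in the limit. Combining the two fractions gives $1/\beta - 1/(1-\beta) = (1-2\beta)/(\beta(1-\beta))$, and at this boundary $1+x = (\beta(1-\beta) + 1 - 2\beta)/(\beta(1-\beta)) = (1-\beta-\beta^2)/(\beta(1-\beta))$, so that $x/(1+x) = (1-2\beta)/(1-\beta-\beta^2)$. Multiplying by $B/\Delta$ yields exactly the right-hand side of \eqref{eq:throughput<}, and the strictness of \eqref{eq:fork<} turns the limit into a strict upper bound.

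The main obstacle is not the algebra but rather the implicit step that the actual throughput can never exceed the optimistic throughput. In the context of this paper that fact mirrors the pacer analysis used for Corollary~\ref{cor:tolerance}: only one block per $\Delta$-synchrony window can be reliably locked into a unique longest chain, so even when all adversarial mining cooperates with the protocol, useful blocks arrive at rate at most $\lambda/(1+\lambda\Delta)$, and each carries at most $B$ KB. Once this upper bound on useful block rate is accepted, the theorem collapses to the one-line monotonicity argument above combined with \eqref{eq:fork<}.
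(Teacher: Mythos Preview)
Your proposal is correct and follows the same route as the paper, which simply declares the theorem ``a straightforward consequence of~\eqref{eq:fork<}'' together with the optimistic throughput formula $\lambda B/(1+\lambda\Delta)$; you have merely filled in the monotonicity-in-$x=\lambda\Delta$ step and the algebra that the paper leaves implicit.
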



Empirically, the block propagation delay is approximately an affine function of the block size~\cite{decker2013information, fechner2022calibrating} (even with a coded design~\cite{zhang2022speeding}).
That is, the propagation delay bound can be modeled as
\begin{align}\label{eq:throughput_delta}
    \Delta(B) = 
    \frac{B}r + \nu
\end{align}
where $B$ is the block size in KB and 
$r$ and $\nu$ are constants.
Here, $r$ represents a fundamental limitation on the end-to-end transmission rate of the peer-to-peer network.  Based on~\eqref{eq:throughput<}, this rate and the desired fault tolerance determine an upper bound for the optimistic throughput:
\begin{align} \label{eq:throughput_r}
    \frac{\lambda B}{1+\lambda\Delta}
    <
    \frac{1-2\beta}{1-\beta-\beta^2} \, r.
\end{align}


With a close upper bound on the probability of safety violation as a function of the system parameters and the confirmation latency, e.g., in the form of
    $p( \lambda \Delta, \beta, k )$,
a node 
may select the confirmation depth $k$ to achieve a desired safety. 
In particular, the creator of a highly-valued transaction 
will most likely tolerate a longer-than-usual confirmation latency. The theorems in this paper allow this choice to be made much more accurately 
than before.

It is also meaningful to consider optimizing the throughput by selecting parameters that guarantee a certain safety at a desired confirmation latency (in time) under a worst-case assumption of adversarial mining.
In particular, we can formulate the following problem:
\begin{subequations}
\label{eq:maximize}
\begin{align}
    \text{maximize}_{\lambda,B,k}
    \quad & \frac{ \lambda B }{ 1 + \lambda\, \Delta(B) }  \label{eq:throughput} \\
    \text{subject to}
    \quad 
    & p( \lambda\, \Delta(B), \beta, k ) \le q \label{eq:p<=q} \\
    & k \leq \frac{(1-\beta)\lambda}{1+(1-\beta)\lambda\Delta}d \label{eq:k<=}
\end{align}
\end{subequations}
where $q$ denotes the maximum tolerable probability of violation and $d$ denotes the expected confirmation time. Here~\eqref{eq:k<=} constrains $k$ such that at the slowest growth rate, gaining $k$ heights takes no more than $d$ units of time on average. In Sec.~\ref{s:numerical}, we give such a numerical example under assumption~\eqref{eq:throughput_delta} with empirically determined constants.



\subsection{Bounds for Given Confirmation Time}
\begin{theorem}[safety guarantee 
for given confirmation time]\label{th:upper_time}
    Suppose the mining rates and the delay upper bound satisfy~\eqref{eq:a<}.
    Given a sufficiently large $s$,
    the safety of a 
    block mined by time of $s$ and is confirmed by time \sj{of} $t$ can be violated with a probability no higher than
    \begin{align}\label{eq:upper}
        \inf_{n\in \{1,2,\dots\}}\{1-F(n-1)+G(n,t)\},
    \end{align}
    where $F$ is defined in~\eqref{eq:Fn}, and 
    \begin{align} \label{eq:Gnt}
    G(n,t) = \sum_{i=0}^\infty \sum_{j=0}^\infty \sum_{k=0}^\infty 
    f_1(i;\mya (t+2\Delta)) e(j) e(k)  \times (1-F_2(t-(i+j+k+n+2)\Delta;i+j+k+n,\myh))
    \end{align}
    where $e(\cdot)$ and $f_1$ are defined the same as in Theorem~\ref{th:lower_depth}, and $F_2(\cdot ; n, \myh)$ denotes the cdf of the Erlang distribution with shape parameter $n$ and rate parameter $\myh$.
\end{theorem}

Theorem~\ref{th:upper_time} is proved in Appendix~\ref{a:upper_time}.
In lieu of the infinite sums in~\eqref{eq:Gnt}, a computationally feasible bound is obtained by replacing $G(n,t)$ with its following upper bound
\begin{align} \label{eq:Gnt_}
\begin{split}
    \overline{G}(n,t) &= 1-
    F_1(i_0;\mya (t+2\Delta)) F_E(j_0) F_E(k_0) \\
      & + 
    \sum_{i=0}^{i_0}\sum_{j=0}^{j_0}\sum_{k=0}^{k_0}
    f_1(i;\mya (t+2\Delta))e(j)e(k)
    \times  (1-F_2(t-(i+j+k+n+2)\Delta;i+j+k+n,\myh))
\end{split}
\end{align}
where $F_1$, $F_E$ are the corresponding cdfs of variables with pmf of $f_{1}(i; \mya (t+2\Delta))$, $e(j)$ in~\eqref{eq:e(i)}, respectively.
And $i_0$, $j_0$ and $k_0$ are positive integers. The larger these integers are, 
the tighter the upper bound. 

For completeness, we include the following converse result, which is a variation of~\cite[Theorem 3.6]{li2021close} due to the slightly different definition of safety violation here.

\begin{theorem}[Lower bound 
for given confirmation time]
\label{th:lower_time}
Given a sufficiently large $s$,
the private-mining attack violates the safety of a target 
block mined by time $s$ 
and confirmed by time of $t$
with at least the following probability: 

\begin{align}\label{eq:lower}
\left(1-\frac{\mya}{\myh}\right)  e^{(\myh-\mya) t} \sum_{j=0}^{\infty} \sum_{k=0}^{\infty} e(j)\left(\frac{\mya}{\myh}\right)^{k} F_{1}(k ; \myh t)  \times(1-F_{2}(t-(j+k) \Delta ; j+k+1, \myh))- e^{-h(t-\Delta)}.
\end{align}

\end{theorem}

Theorem~\ref{th:lower_time} is proved in Appendix~\ref{a:lower_time}.
Evidently, if the infinite sums in~\eqref{eq:lower} are replaced by partial sums, 
the resulting (tighter) security level also remains unachievable.

\section{Essential Elements of the Analysis}
\label{s:essential}

In this section, we introduce the most essential technical elements, discuss intuitions, and provide a sketch of the statistical analysis.

\subsection{The Balancing Process}
\begin{Def}[Agree and disagree]
    Two chains that are no lower than a target block 
    are said to agree on the target block if either they both include 
    the target block or neither includes it; 
    otherwise, they are said to disagree on the target block.
\end{Def}

Agreement on a target 
block is an equivalence relation, which 
divides all blocks 
higher than the target block into two equivalence classes. In Fig.~\ref{fig:fig1},
agreement about target block 3 divides blocks on 
height 2 and above into two equivalence classes, namely, blocks $\{3,5,7,9\}$ and blocks $\{2,4,6,8,10\}$.

\begin{Def}[balancing candidate, candidacy]
Given a target 
block, if an H-block (denoted as block $b$) is the first block mined on height $n$ and disagrees on the target 
block with any height-$(n-1)$ block mined before block $b$ becomes public, then block $b$ is said to be a balancing candidate, or simply a candidate. The period from the first time the 
disagreement arises to the time block $b$ becomes public is referred to as its candidacy. Evidently, if any disagreeing height-$(n-1)$ block is mined before block $b$, its candidacy begins from $t_b$.
\end{Def}

\begin{Def}[balanced 
height]
    If 
    a disagreeing H-block is mined on the candidate’s height and no 
    A-block is mined on that height, then the candidate is said to be balanced. Accordingly, its height is said to be a balanced height. 
\end{Def}

Given a target block $g$, we use 
$X_{s,t}^g$ to denote the total number of balanced candidates 
created during $(s,t]$. We also define $X^g_t=X^g_{t_g,t}$ and
\begin{align} \label{eq:defX}
    X^g= \lim_{t\to\infty} X_{t}^g .
\end{align}

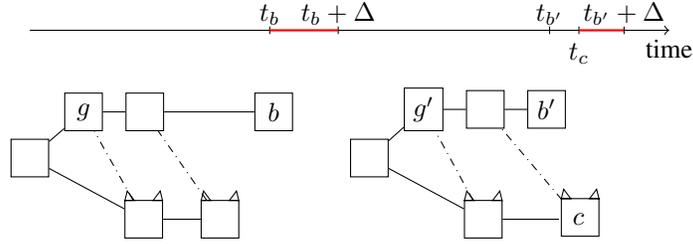
\begin{figure}
\begin{center}
    \begin{tikzpicture}
    \draw[->] (0,1.7)--(8.5,1.7)node[below] {
    time};
      \node(0) [rec]{};
        \node[below right=0.3cm and 1cm of 0] (1) [blkadempty] {};
          \draw (0)--(1);
    \node[right=0.5cm of 1] (2)[blkadempty] { };
      \draw (1)--(2);
    \node[above right= 0.1  and 0.2cm of 0](3)[blkhonest]{$g$};
    \node[right=0.3 of 3](4)[blkhonest]{};
       \draw (0)--(3);
        \draw (3)--(4);
        \node [right = 1.2 of 4](5)[blkhonest]{$b$};
        \draw (4)--(5);
        \draw[-] (3.19,1.75)--(3.19,1.65) node[above] {$t_b$};
         \draw[-] (4.1,1.75)--(4.1,1.65) node[above] {$t_b+\Delta$};
    \draw[-,thick,red] (3.19,1.7)--(4.1,1.7);

    \node [right = 4cm of 0](0')[rec]{};
    \node[below right=0.3cm and 1cm of 0'] (1') [blkadempty] { };
          \draw (0')--(1');
    \node[right=0.75cm of 1'] (2')[blkad] {$c$};
      \draw (1')--(2');
    \node[above right= 0.1  and 0.2cm of 0'](3')[blkhonest]{$g'$};
    \node[right=0.3 of 3'](4')[blkhonest]{};
       \draw (0')--(3');
        \draw (3')--(4');
        \node [right = 0.3 of 4'](5')[blkhonest]{$b'$};
        \draw (4')--(5');
        \draw[-] (7.3,1.75)--(7.3,1.65) node[below] {$t_c$};
        \draw[-] (7.9,1.75)--(7.9,1.65) node[above] {$t_{b'}+ \Delta$};
        \draw[-](6.91,1.75)--(6.91,1.65) node[above] {$t_{b'}$};
        \draw[-,thick,red] (7.3,1.7)--(7.9,1.7);
        \draw[dash dot] (1)--(3);
        \draw[dash dot] (2)--(4);
        \draw[dash dot] (1')--(3');
        \draw[dash dot] (2')--(4');
     \end{tikzpicture}
    \end{center}
    \caption{Balancing candidates $b$ and $b'$ for target block $g$ and $g'$, respectively.  
    Blocks on the same height are connected by dashed lines.  Left: block $b$ becomes a balancing candidate upon its mining; its candidacy is $[t_b,t_b+\Delta]$. Right: block $b'$ becomes a candidate after a lower block $c$ is mined, in disagreement with block $b'$. 
    Block $b'$ has a shorter candidacy $[t_c,t_{b'}+\Delta]$.
    }
    \label{fig:candidacy}
\end{figure}




A candidacy is a small window of opportunity to balance the corresponding candidate (and thus create a balanced height). 
Figs.~\ref{fig:candidacy} illustrates two different balancing candidacies at different times. 
Examples of balancing can be found in 
Fig.~\ref{fig:fig1}.
If block 3 is the target, blocks 4 and~5 balance each other on height 3; blocks 9 and 10 balance each other on height 5. No other heights are balanced. In particular, while H-blocks 3 and 6 are on the same height, they do not balance each other due to the existence of block 2 as an A-block on the same height.
We shall show that the total number of heights that can be balanced 
is essentially stochastically dominated by a geometric random variable.

\subsection{Violation Event under Confirmation by Depth}

\begin{Def}[Pre-mining lead] \label{def:premining}
    The pre-mining lead (of the adversary) at time $t$, denoted by $L_t$, is the height of the highest block mined by time $t$ minus the 
    public height at time $t$.
\end{Def}

The notion of pre-mining lead is to account for the potential lead the adversary may already have when the target 
block is included by a blockchain. In this work, $s$ is given so that the adversary cannot mount an adaptive attack (by influencing $s$) depending on the lead. Nonetheless, the adversary can try to create and maintain some lead (if possible), which is essentially characterized by a birth-death process.

\begin{Def}[Jumper]
\label{def:jumper}
    The first H-block mined on its height is called a jumper. We use $J_{s,t}$ to denote the number of jumpers mined in $(s,t]$.
\end{Def}



In the remainder of this section, we assume $s$ is a given time.
We focus on the case of confirmation by depth $k$.
It is easy to see that the height growth of H-blocks is no less than the number of jumpers mined. Intuitively, for a block's safety to be violated, there must exist an interval, during which the number of jumpers mined is fewer than the sum of the pre-mining lead, the number of balanced heights, and the number of A-blocks mined. This is made precise in the following lemma.


\begin{lemma}
\label{lm:errorevent}
    Given $s$, let $\tau$ denote the arrival time of the first H-block that is 
    at least $k$ higher than the public height at time $s$. If a block $g$ extends an $s$-credible chain
    and its safety is violated, then the following 
    must hold: 
    \begin{align}
        L_s + X_{t_f +\Delta}^g + \sup_{d \geq \tau} \{A_{s,d}-J_{\tau,d-\Delta}\} \geq k 
    \label{eq:violates_aftertau}
    \end{align}
    where block $f$ is the first block mined on height $h_g+k-1$, whose mining time is denoted as $t_f$.
\end{lemma}

\begin{proof}
We first show that~\eqref{eq:violates_aftertau} is implied by a slightly stronger inequality:
\begin{align}
     L_s + X_{t_f +\Delta}^g + \sup_{d \geq t_f} \{A_{s,d}-J_{\tau,d-\Delta}\} \geq k 
    \label{eq:violates_aftertf},
\end{align}
where the domain of optimization $d\ge\tau$ in~\eqref{eq:violates_aftertau} is replaced by $d\ge t_f$. 
In the case of $t_f\ge\tau$, the domain in~\eqref{eq:violates_aftertau} is larger, yielding a higher supremum value, hence~\eqref{eq:violates_aftertau} holds as long as~\eqref{eq:violates_aftertf} holds. In the remaining case of
$t_f < \tau$, we have
\begin{align}
    \sup_{d \geq t_f} \{A_{s,d}-J_{\tau,d-\Delta}\}
    &= 
    \max\left\{ \sup_{d > \tau} \{A_{s,d}-J_{\tau,d-\Delta}\},  \sup_{\tau\ge d\ge t_f} \{A_{s,d}-J_{\tau,d-\Delta}\} \right\} \\
    &= 
    \max\left\{ \sup_{d > \tau} \{A_{s,d}-J_{\tau,d-\Delta}\},  A_{s,\tau} \right\}
    \label{eq:maxsup1} \\
    &=
    \sup_{d \geq \tau} \{A_{s,d}-J_{\tau,d-\Delta}\}
    \label{eq:maxsup2}
\end{align}
where in~\eqref{eq:maxsup1} and~\eqref{eq:maxsup2} we have both used the fact that $J_{\tau,d-\Delta} = 0$ for $d\leq \tau$. Hence, in this case~\eqref{eq:violates_aftertau} and~\eqref{eq:violates_aftertf} are equivalent.


The remainder of this proof is devoted to establishing~\eqref{eq:violates_aftertf}.
Here we let all agreement relationship be with respect to block $g$.
Evidently, 
block~$g$ can not be $k$-confirmed before $t_f$.
Let chain~$b$ denote a highest public chain by time $s$. Since block $g$ extends an $s$-credible chain, we have $h_g>h_b$ and hence
\begin{align}
    h_f\ge h_b+k  \label{eq:hf>=hb+k}
\end{align}
by definition of block $f$ in the lemma.
Let block $c$ denote a highest H-block mined by $t_f$.
Let chain $z$ denote a highest chain that disagrees with chain $c$ at time $t_f$.
Definitions of some frequently used notations in this proof are listed in Table~\ref{tb:blocks}.

\begin{table}
\centering
\begin{tabular}{|r|l|}
\hline
Notation & Meaning \\ \hline
$s$ & a given time for commitment rule\\
\hline
chain~$b$ & a highest public chain by $s$ \\
\hline
block~$c$ & a highest H-block mined by $t_f$ \\
\hline
block~$f$ & the first block mined on height $h_g+k-1$ \\
\hline
block~$g$ & the target block \\ 
\hline
chain~$z$ & a highest chain that disagrees with chain~$c$ at $t_f$ \\ 
\hline
$\tau$ & the arrival time of the first H-block that is at least $k$ higher than chain~$b$ \\
\hline
\end{tabular}
\caption{Frequently used notations in the proof of Lemma~\ref{lm:errorevent}.}
\label{tb:blocks}
\end{table}



We first establish the following upper bound on 
the number of jumpers mined during $(\tau,t_f]$:
\begin{align}
    J_{\tau,t_f}
    \le
    h_f-(h_b+k) . 
    \label{eq:J>h_g-1-h_b}
\end{align}
If $\tau\ge t_f$, $J_{\tau,t_f}=0$ by convention, so that~\eqref{eq:J>h_g-1-h_b} holds trivially due to~\eqref{eq:hf>=hb+k}. If $\tau<t_f$, the jumpers between $\tau$ and $t_f$ must be between heights $h_b+k+1$ and $h_f$, hence~\eqref{eq:J>h_g-1-h_b} must hold.

We establish~\eqref{eq:violates_aftertf} in two separate cases
depending on whether $h_c \leq h_b+L_s$.

\begin{itemize}
\item [Case 1)] $h_c\leq h_b+L_s$.
In this case 
no H-blocks higher than $h_b+L_s$ are mined by $t_f$, so there is at least one A-block on each height from $h_b+L_s+1$ to $h_f$ mined between $s$ and $t_f$, i.e.,
\begin{align}
    A_{s,t_f} &\ge h_f - (h_b+L_s)\\
    &\ge k+J_{\tau,t_f}-L_s\label{eq:A>k+J-L},
\end{align}
where \eqref{eq:A>k+J-L} is due to \eqref{eq:J>h_g-1-h_b}.
Since 
$X_{t_f+\Delta}^g\ge0$ 
and $\sup_{d\ge t_f}\{A_{s,d}-J_{\tau,d-\Delta}\} \ge
A_{s,t_f}-J_{\tau,t_f-\Delta} \ge
A_{s,t_f}-J_{\tau,t_f}
$,~\eqref{eq:A>k+J-L} implies~\eqref{eq:violates_aftertf}.

\item[Case 2)] $h_c > h_b+L_s$. 
We divide the proof 
in this case
into two parts (2-a) and (2-b), where the first part focuses on blocks mined during $(s,t_f]$, and the second part focuses on blocks mined after $t_f$.
\begin{itemize}
    \item 
[Part 2-a)]
Since chains~$c$ and $z$ disagree, there exist at least two disagreeing chains of height $\min\{h_c,h_z\}$ by $t_f$. This 
implies that at least one A-block is mined during $(s,t_f]$ on every height between $h_b+L_s+1$ and $\min\{h_c,h_z\}$ except those heights that are balanced by time $t_f$.
The number of balanced heights is no larger than $X_{t_f}^g$. 
In addition, since no H-block is higher than $h_c$, 
at least one A-block is mined on each height from $h_c+1$ to $h_f$.
Therefore, the number of A-blocks is lower bounded:
\begin{align}
    A_{s,t_f} \geq ( \min\{h_c,h_z\}-(h_b+L_s)-X_{t_f}^g ) + (h_f - h_c).
    \label{eq:A>...max}
\end{align}

 \item 
 [Part 2-b)] We 
now focus on blocks mined after $t_f$ in two different 
 cases depending on the relative positions of blocks $c$ and $z$.
\begin{itemize}
    \item [Case $h_z \geq h_c$.] 
    Since $\min\{h_c,h_z\}=h_c$, \eqref{eq:A>...max} reduces to 
    \begin{align}
    A_{s,t_f} &\geq h_f-(h_b+L_s)-X_{t_f}^g \label{eq:A>h_c-h_b-L-X+h_f-h_c}.
    \end{align}
    Comparing~\eqref{eq:J>h_g-1-h_b} and~\eqref{eq:A>h_c-h_b-L-X+h_f-h_c} yields
    \begin{align}
     L_s + X_{t_f}^g + A_{s,t_f} -J_{\tau,t_f}
     &\geq k .
     \label{eq:no_d_tau}
\end{align}
Noting that $X^g_{t_f} \le X^g_{t_f+\Delta}$, $J_{\tau,t_f} \ge J_{\tau,t_f-\Delta}$, and reducing the domain $d\ge t_f$ in~\eqref{eq:violates_aftertf} to $d=t_f$ reduces the supremum, we have established~\eqref{eq:violates_aftertf} based on~\eqref{eq:no_d_tau} for Case $h_z\geq h_c$.


\item [Case $h_z<h_c$.] 
    In this case, block~$c$ must agree with block~$f$; otherwise chain $z$ cannot be the highest to disagree with chain $c$. Since block $f$ is the first 
    block that is $k-1$ higher than
    block $g$, whose 
    safety is subsequently violated, 
    the first chain that is credible and also disagrees with chains~$c$ and $f$ must be created at some time $d\geq t_f$ to create a safety violation. All jumpers mined during $(t_f,d-\Delta]$ become public by time $d$, so the highest public block at time $d$ is at least as high as $h_c+J_{t_f,d-\Delta}$. Because the highest disagreeing chain at $t_f$ is no higher than $h_z$, those
    blocks of the $d$-credible disagreeing chain between heights $h_z+1$ and $h_c+J_{t_f,d-\Delta}$ must be mined during $(t_f,d]$. Moreover, after $t_f+\Delta$, a violation occurs as soon as a balancing candidate is created (before it is even balanced). Hence
    at most $X_{t_f,t_f+\Delta}^g$ of heights $(h_z+1,\dots,h_c+J_{t_f,d-\Delta})$ 
    can be balanced, whereas at least one A-block is mined on the remaining heights. Putting it altogether, we have
    \begin{align}
        A_{t_f,d}&\geq h_c+J_{t_f,d-\Delta} - h_z-X_{t_f,t_f+\Delta}^g.
    \end{align}
    Adding \eqref{eq:A>...max} 
    with $\min\{h_c,h_z\}=h_z$ and canceling $h_z$ and $h_c$ on both sides, we have
    \begin{align}
       A_{s,d}
       &\geq
       J_{t_f,d-\Delta} + h_f - (h_b+L_s) - X_{t_f+\Delta}^g \\
       &\geq
       J_{t_f,d-\Delta} + J_{\tau,t_f} + k - L_s -X_{t_f+\Delta}^g \label{eq:A>J+J} \\
        &=
        J_{\tau,d-\Delta}+k-L_s-X_{t_f+\Delta}^g,\label{eq:A>J+k-L-X}
\end{align}
where~\eqref{eq:A>J+J} is due to \eqref{eq:J>h_g-1-h_b}.
Since $\sup_{d\geq t_f}\{A_{s,d}-J_{\tau,d-\Delta}\}$ in \eqref{eq:violates_aftertf}  is greater or equal to $A_{s,d}-J_{\tau,d-\Delta}$,~\eqref{eq:A>J+k-L-X} implies~\eqref{eq:violates_aftertf}.
\end{itemize}
\end{itemize}
\end{itemize}
\end{proof}

We caution that Lemma~\ref{lm:errorevent} is very general and at the same time quite delicate. While block $b$ in the proof is the highest public block by $s$, block $g$ can be arbitrarily higher than height $h_b+1$, and may or may not be mined after $s$. In fact, even block $f$ may or may not be mined before $s$, as there is the possibility that chain $f$ already $k$-confirms block $g$, yet all of those confirming blocks including block $g$ itself are nonpublic.
We illustrate~\eqref{eq:violates_aftertau} in Lemma~\ref{lm:errorevent} using a pair of concrete chains in Fig~\ref{fig:lemma10d}.

\begin{figure}
 \centering
 \begin{tikzpicture}
  \draw[->] (0,0.5)--(8.5,0.5)node[above] {$d$};
  \draw[-] (7,0.45)--(7,0.55)node[above] {$t_f$};
  \draw[-] (6,0.45)--(6,0.55)node[above] {$\tau$};
  \draw[-] (2.9,0.45)--(2.9,0.55)node[above] {$s$};
  \node(b) [blkhonest]{$b$};
  \node[right =1.9cm of b] (0) [blkhonest]{};
  \draw (0)--(b);
  \node[right =0.3cm of 0] (1) [blkhonest]{$g$};
  \draw (0)--(1);
  \node[right =0.2cm of 1] (2) [blkhonest]{};
  \draw (2)--(1);
  \node[right =0.5cm of 2] (3) [blkhonest]{};
  \draw (2)--(3);
  \node[right =0.5cm of 3] (4) [blkhonest]{};
  \draw (4)--(3);
  \node[right =0.5cm of 4] (5) [blkhonest]{$c$};
  \draw (4)--(5);
  \node[right =0.5cm of 5] (6) [hidhonest]{};
  \draw (6)--(5);
  \node[below=0.5cm of b](01) [blkad]{$e$};
  \draw[dashed] (b)--(01);
  \node[right=0.7cm of 01] (02) [blkadempty]{};
  \draw (01)--(02);
  \node[right=0.5cm of 02] (03) [blkadempty]{};
  \draw (03)--(02);
  \draw[dashed] (1)--(03);
  \node[right =0.7cm of 03] (04) [blkhonest]{};
  \draw (03)--(04);
  \node[right=0.5cm of 04] (05) [hidad]{$z$};
  \draw (05)--(04);
  \node[right=2.2cm of 05] (06) [hidadempty]{};
  \draw (05)--(06);
  \node[right=0.7cm of 06] (07) [hidad]{$v$};
  \draw (06)--(07);
 \end{tikzpicture}
\caption{
An example of~\eqref{eq:violates_aftertau} in Lemma~\ref{lm:errorevent}. Let the depth of confirmation $k=5$. Blocks $b$ and $e$ are on the same height. Block $b$ is the highest public chain at $s$ (block $g$'s parent is not yet public). The lead $L_s = 2$. Here $c$ and $f$ refer to the same block, i.e., $c=f$. At time $t_f$, block $g$ is $k$-confirmed by the highest chain $c$, as $h_c=h_g+k-1$. The highest disagreeing chain $z$ at $t_f$ is on height $h_z=h_g+2<h_c$. The (first) violation occurs at time 
    $d=t_v$, when chain $v$ is mined, which is credible as $h_v=h_c$. Height $h_g+1$ is balanced, with $X^g_{t_f}=1$.
    A single jumper is mined during $(\tau,d-\Delta]$. Three A-blocks are mined during $(s,d]$. 
    Therefore, we have $L_s+X_{t_f+\Delta}^g+A_{s,d}-J_{\tau,d-\Delta}=  5 = k$, i.e.,~\eqref{eq:violates_aftertau} holds in this case.}
   \label{fig:lemma10d}
\end{figure}

Based on Lemma~\ref{lm:errorevent}, 
we can characterize
the latency-security trade-off 
by examining the
competition between adversarial and honest nodes. 
While mining processes are completely 
memoryless, the height growths of honest and adversarial blocks are generally intertwined in a complex manner. In particular, when honest nodes have split views, the adversary may utilize some honest blocks to construct or maintain a competing chain, thereby perpetuating the split views among honest nodes. In general, the random variables in~\eqref{eq:violates_aftertau} are dependent. The challenge we must address is to decompose the event in Lemma~\ref{lm:errorevent} into separate, independent components. 

\subsection{A Sketch of the Statistical Analysis}
\label{subsec:sketch}

The remaining task is to analyze the probability of the event described by~\eqref{eq:violates_aftertau}. We need to understand the distribution of the individual elements as well as to account for their dependency.

Consider the balanced heights first.
As a convention, let
$(x)^+=\max\{0,x\}$.
 
\begin{theorem}
[Bounding the 
number of balanced heights]\label{thm:balanced heights}
\label{th:X}
    With respect to a given target block, the number of balanced heights up to the $k$-th candidacy, 
    denoted as $X^{(k)}$, 
    satisfies
    \begin{align}
        P(X^{(k)} \leq n) \geq F^{(k)}(n) \label{eq:P(Xk<n)_Fkn}
    \end{align}
    for every integer $n\ge0$, where $F^{(k)}$ is defined in~\eqref{eq:Fkn}.
As a special case, 
the total number of balanced heights defined in~\eqref{eq:defX} satisfies
\begin{align} \label{eq:PXFn}
  P(X 
  \leq n)  \geq F(n).
\end{align}
If~\eqref{eq:a>} holds, then $\delta=0$, so $F(n)=0$ and~\eqref{eq:PXFn} is vacuous.
\end{theorem}

Theorem~\ref{thm:balanced heights} is proved in Appendix~\ref{s:X}.  The key insight here is that the adversary can only take advantage of a bounded number of H-blocks to balance each other even in the long run.  In fact, with probability 1, one last balancing opportunity will be missed and the adversary is then behind future pacers permanently.  The final balancing candidate serves the role of the ``Nakamoto block'' in~\cite{dembo2020everything} and is a stronger notion than the ``convergence opportunity'' in e.g.,~\cite{pass2017rethinking}.  This observation underlies the safety guarantee of Theorem~\ref{th:Cher_upper}.

Let us define
\begin{align}
    Y^{(k)} = L_s - k + \sup_{d \geq \tau} \{A_{s,d}-J_{\tau,d-\Delta}\}.
    \label{eq:Yk}
\end{align}
While $Y^{(k)}$ and $X_{t_f+\Delta}^g$ are dependent, we can upper bound the probability of the event~\eqref{eq:violates_aftertau}, i.e., 
$X_{t_f+\Delta}^g+Y^{(k)}\ge0$, using a 
union bound:

\begin{lemma} \label{lm:x+y}
Let $\bar{F}_{X}(\cdot)$ denote the complementary cdf of the random variable $X$.
All integer-valued random variables $X$ and $Y$ satisfy
\begin{align} \label{lemma1}
      P(X + Y \geq 0)  \leq \inf _{n \in \mathbb{Z}} \{ \bar{F}_X(n-1) + \bar{F}_Y(-n)\} .
\end{align}
In addition, given two distributions of integer-valued random variables for which the right-hand side of~\eqref{lemma1} is strictly less than 1, there exists a joint distribution of $X$ and $Y$ that satisfy~\eqref{lemma1} with equality.
\end{lemma}

The proof of Lemma~\ref{lm:x+y} is relegated to Appendix~\ref{a:x+y}.

It suffices to upper bound the marginal complementary cdf's of 
$X_{t_f+\Delta}^g$ and $Y^{(k)}$.  Since all $X_{t_f+\Delta}^g$ balanced heights must be within the first $k$ heights above $h_b$, we have
\begin{align} \label{eq:PXn}
    P( X_{t_f+\Delta}^g > n-1 ) \le P( X^{(k)} > n-1 )
\end{align}
where the distribution of $X^{(k)}$ is given in~\eqref{eq:Fkn}.
Thus, we have 
\begin{align}
    P( Y^{(k)} & +X_{t_f+\Delta}^g \geq 0) \notag \\
    &\leq \inf _{n \in \mathbb{Z}} \{ \bar{F}_{X_{t_f+\Delta}^g}(n-1) + \bar{F}_{Y^{(k)}}(-n)\} \\
    & \leq  \inf _{n \in \mathbb{Z}} \{ \bar{F}_{X^{(k)}}(n-1) + \bar{F}_{Y^{(k)}}(-n)\} \label{eq:inf(FX+FY)}.
\end{align}

The remaining task is to understand the distribution of $Y^{(k)}$. A challenge here is that the jumper process depends on the H-block process. To decouple the dependence, we use the pacer process to lower bound the jumper process:

\begin{lemma}[There are more jumpers than pacers]\label{jp}
\begin{align}
    J_{s,t} \geq P_{s+\Delta,t},\quad \forall t > s .
\end{align}
\end{lemma}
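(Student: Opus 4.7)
The plan is to build an injection $\phi$ from the pacers mined during $(s+\Delta,t]$ into the jumpers mined during $(s,t]$. Given a pacer $b$ of height $n$, set $\phi(b)$ to be the (unique) jumper on height $n$, namely the first H-block ever mined at height $n$. Since distinct pacers occupy distinct heights, as already observed in the paper, $\phi$ is automatically injective, and the lemma reduces to verifying that $\phi(b)$ lies in $(s,t]$ whenever $b$ lies in $(s+\Delta,t]$.

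The upper bound $t_{\phi(b)} \leq t_b \leq t$ is immediate from the definition of a jumper. For the lower bound $t_{\phi(b)} > s$, I would split into two cases. If $\phi(b) = b$, then $t_{\phi(b)} = t_b > s + \Delta > s$. Otherwise $\phi(b)$ is a strictly earlier H-block on the same height $n$, and I would show $t_b - t_{\phi(b)} \leq \Delta$ by contradiction: suppose $t_b > t_{\phi(b)} + \Delta$; then by $\Delta$-synchrony the chain of $\phi(b)$, which has height $n$, is already public at $t_b$, so the credible parent of $b$ must have height at least $n$, forcing $h_b \geq n+1$ and contradicting $h_b = n$. Hence $t_{\phi(b)} \geq t_b - \Delta > s$, as required.

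The only genuine content is the contradiction step just sketched, which leverages both the $\Delta$-synchrony assumption on H-block propagation and the requirement that the parent of every H-block be credible at the H-block's mining time. Everything else is bookkeeping: the distinct-heights property of pacers supplies injectivity for free, and the $\Delta$ shift between the intervals $(s+\Delta,t]$ and $(s,t]$ is matched precisely by the $\Delta$ slack afforded by synchrony, which suggests the inequality is essentially tight.
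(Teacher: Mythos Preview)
Your argument is correct and matches the paper's proof in spirit: both map each pacer $p$ to the jumper on height $h_p$ and show that this jumper lies in $(t_p-\Delta,t_p]$, hence in $(s,t]$ whenever $t_p\in(s+\Delta,t]$. The paper simply asserts that the first H-block on $h_p$ must be mined strictly after $t_p-\Delta$, whereas you spell out the underlying $\Delta$-synchrony/credibility contradiction; and for injectivity the paper invokes the $\Delta$-separation of pacer mining times (so the windows $(t_p-\Delta,t_p]$ are disjoint) while you invoke the equivalent fact that distinct pacers occupy distinct heights. These are cosmetic differences only.
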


\begin{proof}
    We first show that for every pacer block $p$, there must exist a jumper mined during $(t_p-\Delta,t_p]$.  This claim becomes trivial if block $p$ is the first 
    H-block mined on its height; otherwise, the first 
    H-block mined on $h_p$ must be mined strictly after $t_p-\Delta$.  Because pacers are mined at least $\Delta$ time apart, the lemma is established.
\end{proof}

By Lemma~\ref{jp}, we have
\begin{align}
    Y^{(k)}
    &\le L_s - k + \sup_{d \geq \tau} \{A_{s,d}-P_{\tau+\Delta,d-\Delta}\} \\
    &= L_s - k + \sup_{c \geq 0} \{A_{s,\tau+2\Delta+c}-P_{\tau+\Delta,\tau+\Delta+c}\} .
    \label{eq:YkLAP}
\end{align}
It is easy to see that the lead $L_s$, the A-block process from $s$ onward, and the pacer process from $\tau+\Delta$ onward are mutually independent. Also, the supremum in~\eqref{eq:YkLAP} can be viewed as the extremum of the difference of two renewal processes, whose distribution is readily available from Lemma~\ref{lm:LS_trans}.

Once the distribution of $Y^{(k)}$ is bounded, we can combine it with Theorem~\ref{th:X} to upper bound the probability of violation using~\eqref{eq:inf(FX+FY)}.
A detailed proof of Theorem~\ref{th:Cher_upper} is given in Appendix~\ref{s:Cher_upper}.

\subsection{Operational Implications}

A brief discussion on the operational implications of theorems presented in Sec. 2 is warranted. First, the requirement for ``a sufficiently large $s$'' in the theorems serves the purpose of ensuring that the maximum random pre-mining lead is in a steady state. This assumption is easily applicable in most practical scenarios.

More importantly, we define $s$ as ``given'' in the sense that it is not influenced by mining processes. Specifically, it is not selected by the adversary. If the adversary were to choose $s$, it could 
choose a target block one height above the highest public height once they have established a substantial lead (this might take an exceedingly long time, but it would eventually occur with a probability of 1). By doing so, a safety violation would be guaranteed. In this paper, we eliminate such {\it adaptive} attacks by assuming that 
the target block needs to 
extend a credible chain
that is beyond the adversary's control.

Operationally, we have the flexibility to select an arbitrary
time $s$. If an honest node sees the target block mined on top of a highest block in its view at time $s$, the node can be sure that the block is mined on $s$-credible chain, and hence satisfies the conditions in the theorems given in Sec.~\ref{s:theorem}. 


Following the preceding outline of the statistical analysis, we prove the main theorems in detail in the appendices. In Sec.~\ref{s:numerical}, we present the numerical results derived from these theorems. These results demonstrate the practicality of our latency-security trade-off bounds and their utility as guidance of parameter selection for Nakamoto-style protocols.

\section{Numerical Result}
\label{s:numerical}

In this section, the system parameters are chosen to represent two different settings in most cases.  The first setting is modeled after Bitcoin, where the total mining rate $
\lambda = 1/600$ blocks/second.
Bitcoin blocks' propagation delays vary and have been decreasing over the years.  As of November 2024, the 90th percentile of block propagation delay is about 4 seconds on average~\cite{bitcoinmoniter}.
We use $\Delta=10$ seconds as an upper bound.
The second setting is modeled after Ethereum Classic (ETC), where $\lambda 
= 1/13$ blocks/second and $\Delta = 2$ seconds. We also include results for Litecoin and Dogecoin parameters in one table.

\subsection{Security Latency Trade-off}

\begin{figure}
    \centering
    \includegraphics[width = .75\columnwidth]{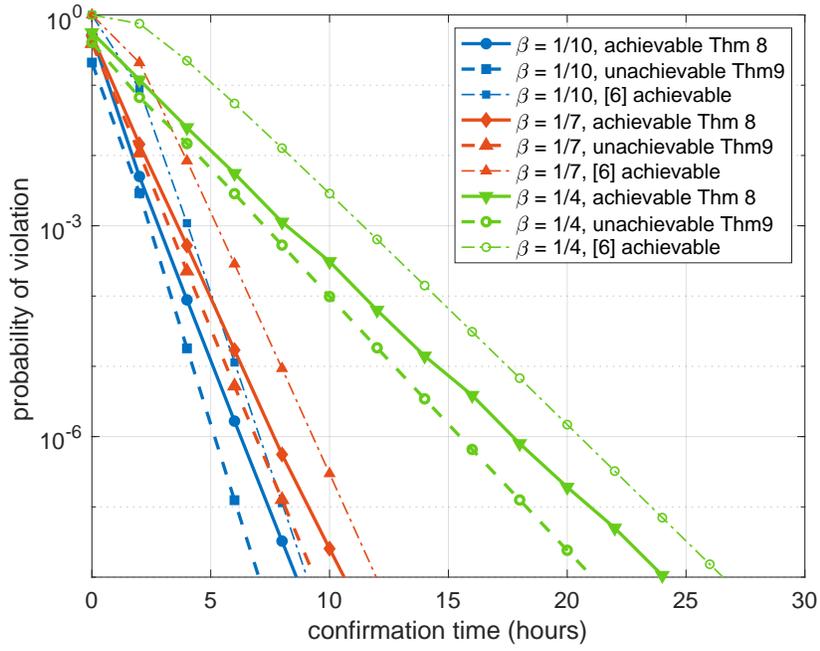}
    \caption{Bitcoin's safety versus confirmation time.}
    \label{fig:BTC}
\end{figure}

\begin{figure}
    \centering
    \includegraphics[width = .75\columnwidth]{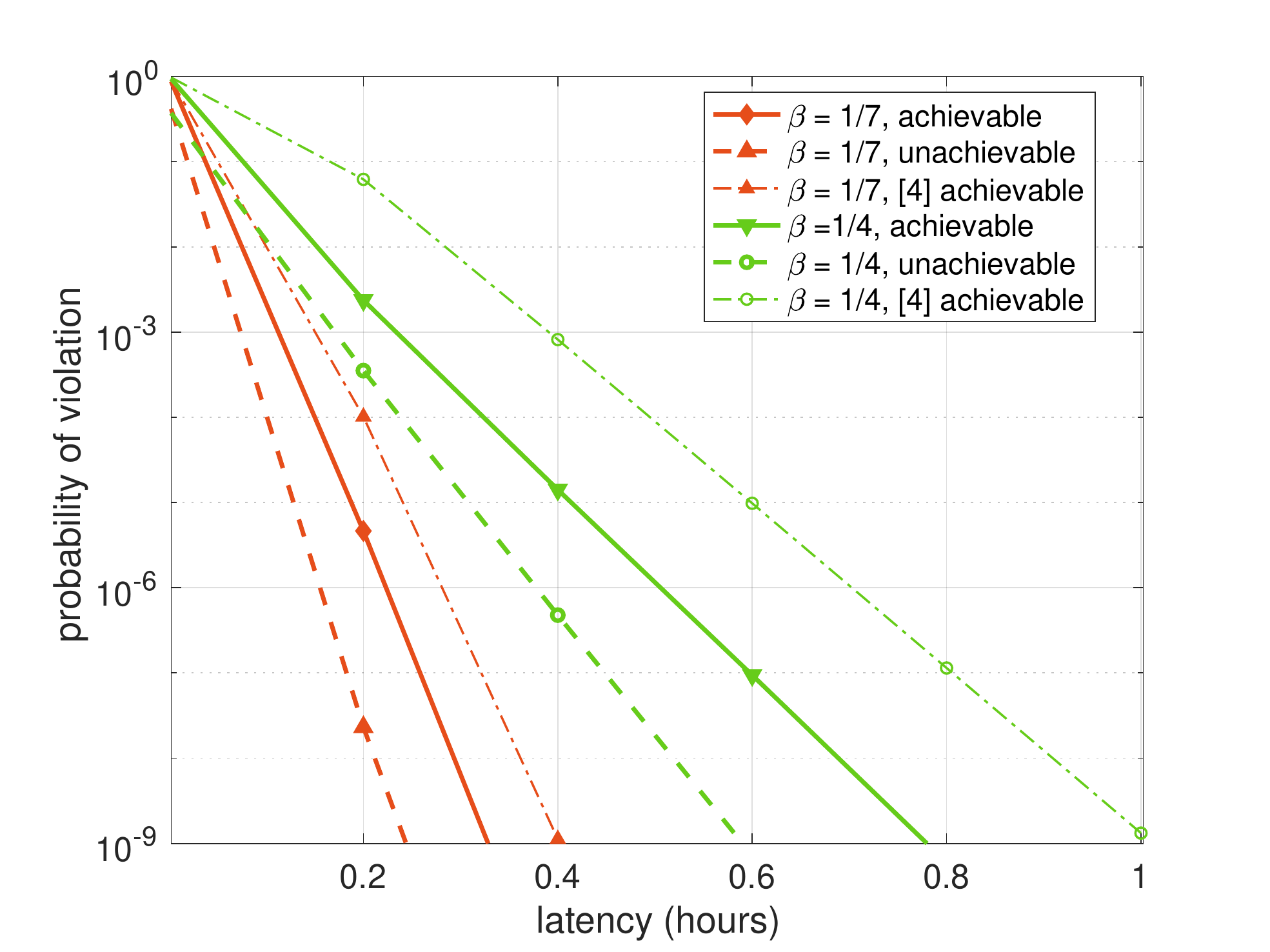}
    \caption{ETC's safety versus confirmation time.}
    \label{fig:ETH}
\end{figure}

In Fig.~\ref{fig:BTC},
we first show the achievable and unachievable probability of safety violation as a function of the confirmation time (in hours) for the Bitcoin configuration.  
For three different adversarial mining fractions ($\beta=1/10$, $1/7$, and $1/4$), we plot the upper and lower bounds
calculated using Theorems~\ref{th:upper_time} and~\ref{th:lower_time}, respectively.
Also included for comparison is an achievable trade-off from~\cite[Eqn.~(5)]{li2021close}.
In Fig.~\ref{fig:ETH}, we plot the latency-security trade-off of ETC for two different adversarial mining fractions ($\beta=1/7$ and $1/4$). In all cases, the 
new bounds approximately halve the latency gaps derived in~\cite{li2021close}.

In Fig.~\ref{fig:com-depth}, we show Bitcoin's safety bounds as a function of confirmation depth $k$ calculated using Theorems~\ref{th:Cher_upper},~\ref{th:upper_depth}, and~\ref{th:lower_depth},
respectively.  
Also included are the achievable and unachievable bounds from~\cite{doger2024refined}.
The new upper bound~\eqref{eq:upper_depth} of Theorem~\ref{th:upper_depth} yields 
tighter results for small confirmation depths. 
Similarly, ETC's safety bounds as a function of confirmation depth $k$ is plotted in Fig.~\ref{fig:com-depth_eth}.
For the ETC parameters, the achievable bounds in this paper 
outperform the achievable bound of~\cite{doger2024refined}.

\begin{figure}
    \centering
    \includegraphics[width = .75\columnwidth]{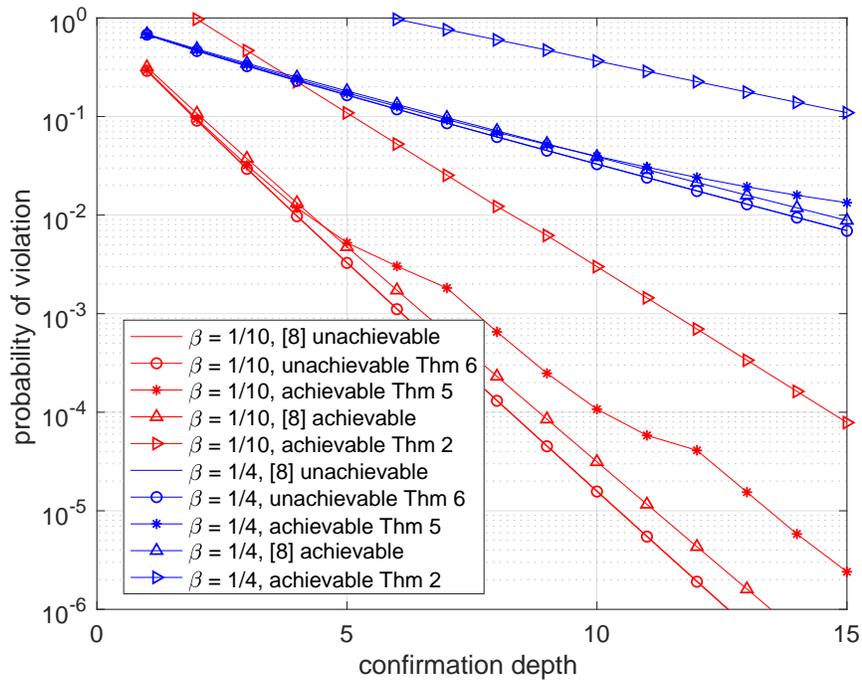}
    \caption{Bitcoin's safety versus confirmation depth.}
    \label{fig:com-depth}
\end{figure}

\begin{figure}
    \centering
    \includegraphics[width = .75\columnwidth]{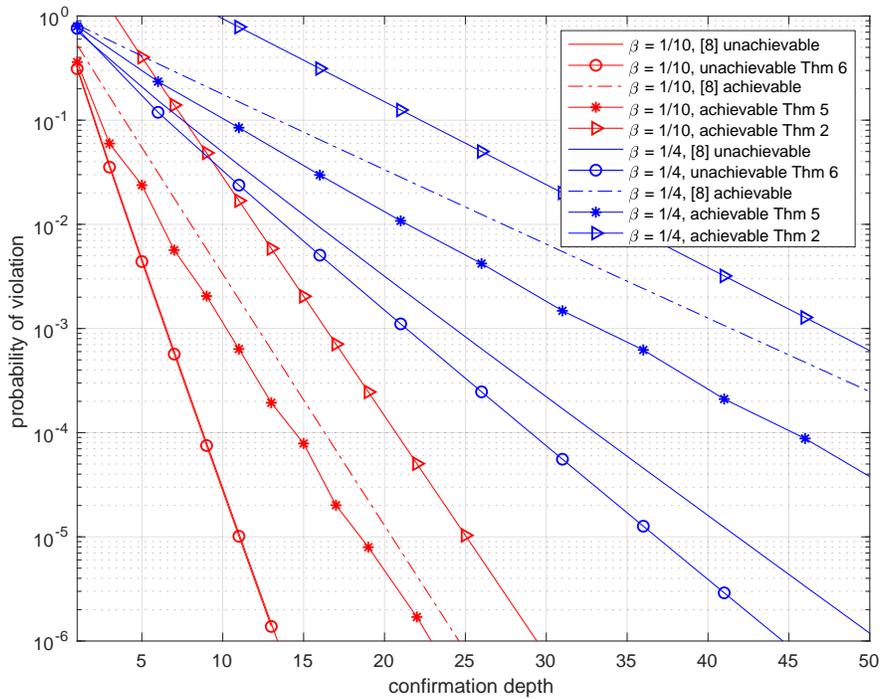}
    \caption{ETH's safety versus confirmation depth.}
    \label{fig:com-depth_eth}
\end{figure}

\begin{table}
\begin{center}
\begin{tabular}{|r|r|r|r|r|r|r|}
\hline
\rule{0pt}{8pt}
$\Delta$ & $0$ s & $2.5$ s & $10$ s & $40$ s & $160$ s & $640$ s  \\ 
\hline
$\beta=10 \%$ & $7$
&$7$ &$8$& $15$& $16$& $37$\\
\hline
$\beta=20 \%$ &$14$& $16$&$16$ &$19$& $28$& $65$\\
\hline
$\beta=30 \%$ &$34$& $37$ &$39$&$46$& $65$& $633$\\
\hline
$\beta=40 \%$ &$150$& $153$ &$156$&$277$& $552$& $\infty$\\
\hline
\end{tabular}    
\end{center}
\caption{The confirmation depth needed to guarantee $99.9\%$ safety for various block delays and adversarial fractions.}
\label{tb:delta_latency}
\end{table}

For a fixed $99.9\%$ safety guarantee, using the Bitcoin parameters, we compute the confirmation depth needed according to Theorem~\ref{th:upper_depth} for various block propagation delays and adversarial mining fractions.
Table~\ref{tb:delta_latency} shows that the confirmation depth rises slowly with the delay bound when the adversarial fraction is relatively low; it rises more quickly when the adversarial fraction is relatively high.

\begin{table}
\begin{center}
\begin{tabular}{|l|l|l|l|l|l|}
\hline
\rule{0pt}{8pt}
& $\bar{F}(0)$  & $\bar{F}(1)$ & $\bar{F}(2)$ & $\bar{F}(3)$&  $\bar{F}(4)$
\\ 
\hline
Bitcoin & $0.006$ & $0.0001$ & $2\times10^{-6}$ & $4\times10^{-8}$ & $8\times10^{-10}$
\\
\hline
ETC &$0.06$& $0.008$ &$0.001$& $2\times10^{-4}$& $3\times10^{-5}$
\\
\hline
\end{tabular}    
\end{center}
\caption{The complementary cdf, $\bar{F}(n)$.}
\label{tb:Xcdf_linear}
\end{table}

Table~\ref{tb:Xcdf_linear} shows the probability of having more than a given number of balanced heights with 25\% adversarial mining. For example, the probability of having more than four balanced heights is less than $10^{-9}$ for Bitcoin.
Hence the bound in Lemma~\ref{lm:x+y} is fairly tight for our purposes.
We note that for small confirmation depths, the probability of mining more adversarial blocks than pacers dominates the probability of balancing any height. As the confirmation depth increases, the latter probability begins to make a noticeable impact, hence the bump on the curves in Fig.~\ref{fig:com-depth}.

\begin{table}
\begin{center}
    \begin{tabular}{|l|r|r|r|}
    \hline
     &$\lambda^{-1}$ (s) & $\Delta$ (s) &$10^{-3}$  \\
    \hline
    Bitcoin \cite{guo2022bitcoin} & $600$ &10&$\le 25$\\
    \hline
    Bitcoin new &$600$ &10&$\le 24$\\
    \hline
    Litecoin \cite{guo2022bitcoin} &$150$ &10&$\le 36$\\
        \hline
    Litecoin new &$150$ &10&$\le 32$\\
        \hline
    Dogecoin \cite{guo2022bitcoin} &$60$ &10&$\le 85$\\
        \hline
    Dogecoin new &$60$ &10&$\le 52$\\
\hline
    ETC \cite{guo2022bitcoin} &$13$ &2&$\le 73$\\
        \hline
    ETC new &$13$ &2&$\le 47$\\
        \hline
    \end{tabular}
    \end{center}
    \caption{Bounds of confirmation depth (latency) to guarantee $99.9\%$ safety with $25\%$ adversarial mining power.}
    \label{tab:pow_compare}
\end{table}

Table~\ref{tab:pow_compare} presents the depth needed to guarantee a $99.9\%$
for Bitcoin, Litecoin, Dogecoin, and ETC, respectively.
The table shows that the new upper bound due to Theorem~\ref{th:upper_depth} is often substantially smaller than the previous
upper bound from~\cite{guo2022bitcoin}.


\subsection{Confirmation versus Fault Tolerance}

In this subsection, we plot the confirmation depth or time that guarantees a desired safety level as a function of the fraction of adversarial mining, while holding other parameters fixed. This also makes it easier to see the maximum fault tolerance different bounding methods can be applied to.  We base our analysis on the parameters of Ethereum Classic (ETC).

\begin{figure}
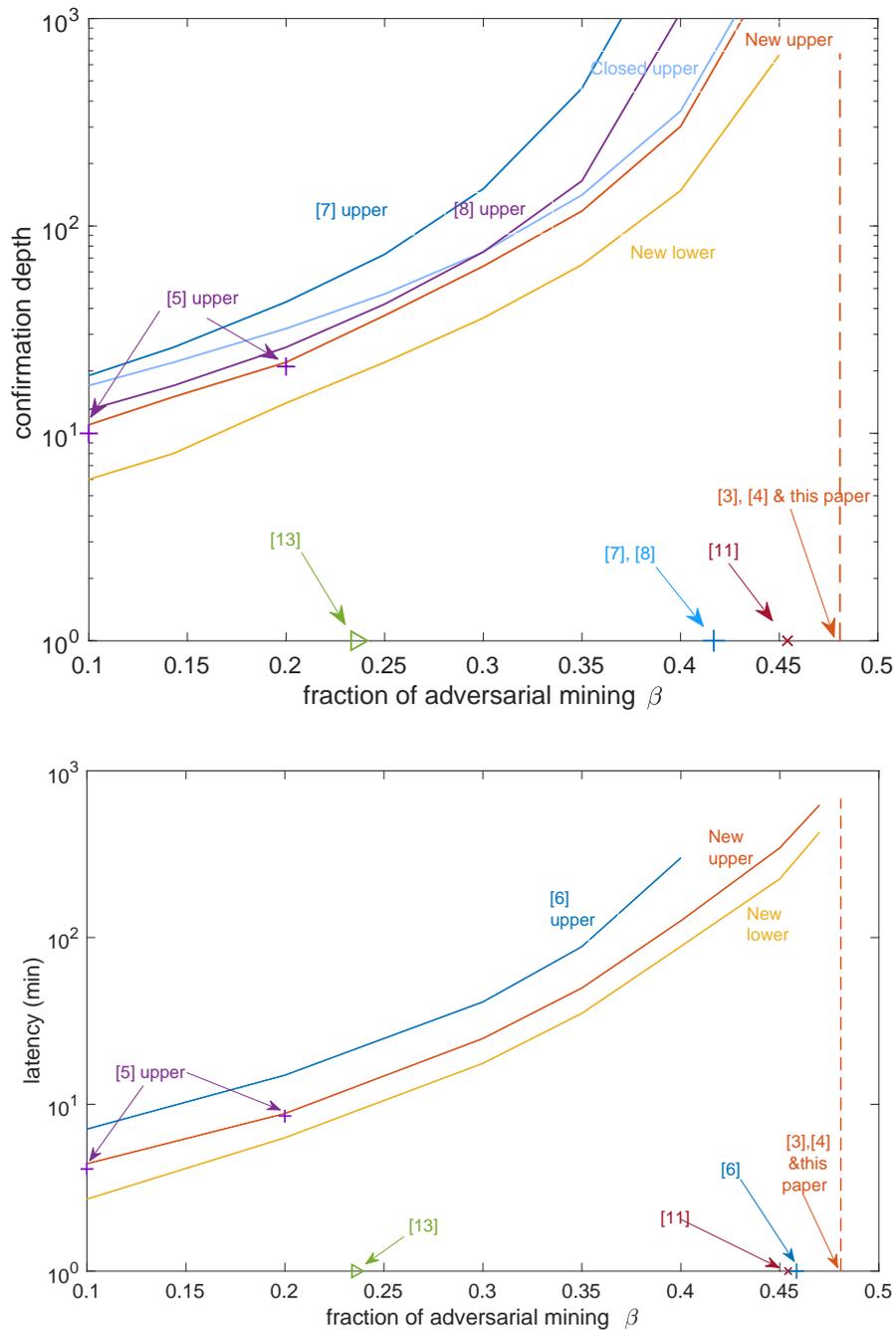

\begin{center}
    \includegraphics[width=.75\columnwidth]{fig/Fig7depth.pdf}\\
    \includegraphics[width=.75\columnwidth]{fig/latency-beta_ETH_0726.pdf} 
\end{center}
    \caption{Confirmation latency that guarantees 99.9\% safety for ETC assuming
    different fractions of adversarial mining.  Top: Confirmation in depth.  Bottom: Confirmation in time (minutes).}
    \label{fig:compare}
\end{figure}

Fig.~\ref{fig:compare} shows new and previous bounds on the confirmation depth/time that guarantees 99.9\% safety as a function of the adversarial fraction of mining power. 
The top graph uses confirmation by depth (Theorems~\ref{th:Cher_upper},~\ref{th:upper_depth}, and~\ref{th:lower_depth}), whereas the bottom graph uses confirmation by time (Theorems~\ref{th:upper_time} and~\ref{th:lower_time}).
We include two points reported in~\cite{gazi2022practical}, where the numerical evaluation becomes impractical for much higher adversarial mining rates.  Also included in the graph are achievable infinite-latency fault tolerances established in~\cite{li2021close},~\cite{guo2022bitcoin},~\cite{doger2024refined},~\cite{pass2017analysis},~\cite{garay2015bitcoin},
and the ultimate tolerance established in~\cite{dembo2020everything},~\cite{gazi2020tight}, and also in this paper.  The new upper and lower bounds are significantly closer than all previous bounds.  Even our new closed-form upper bound 
is highly competitive.



\subsection{Throughput-Latency Trade-Off}


In this subsection, we evaluate the best throughput we can guarantee for optimized parameters including block size, mining rate, and confirmation depth.  We fix the safety guarantee to be $99.9\%$.  We guarantee up to $\beta=40\%$ adversarial mining, which implies that $\lambda\Delta<5/6$ due to~\eqref{eq:fork<}. We assume the block propagation delay and the block size satisfy the affine relationship~\eqref{eq:throughput_delta}. Using empirically determined delay bounds of 10 seconds for 1.6 MB Bitcoin blocks and 2 seconds for 200 KB Bitcoin Cash blocks,\footnote{%
The 90th percentile of Bitcoin Cash propagation time has been measured as 1 second~\cite{fechner2022calibrating}.}
we infer that
$r=178.6$ KB/second
and 
$\nu = 0.9$ (seconds).


\begin{figure}
    \centering
\includegraphics[width = .75\columnwidth]{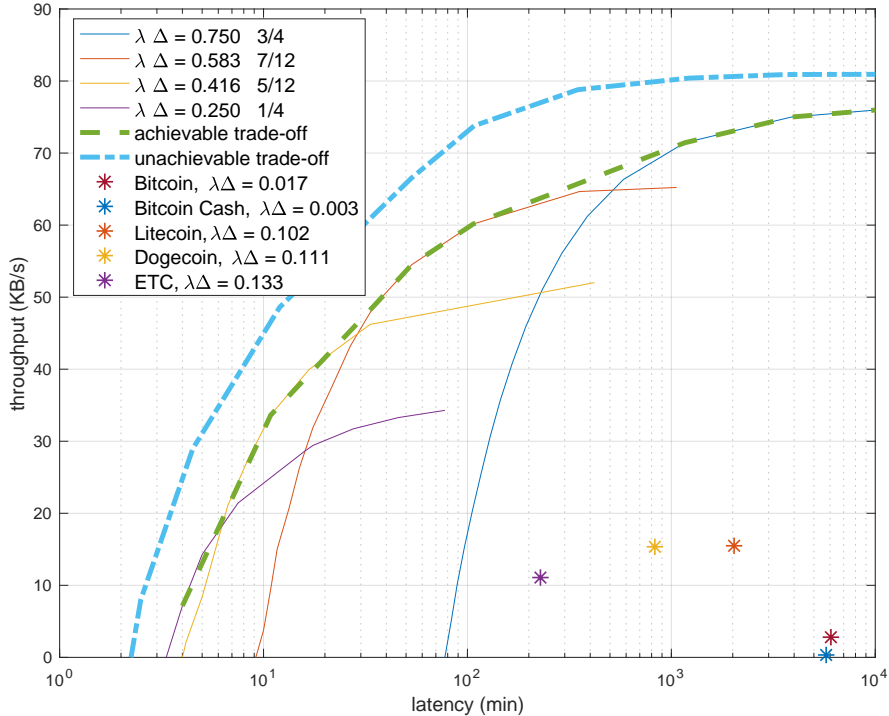}
    \caption{Achievable and unachievable throughput-latency trade-off and 5 existing currencies' throughput-latency.}
        \label{fig:thoughput}
\end{figure}

For every given expected confirmation latency $d$, we can solve the optimization problem~\eqref{eq:maximize} with $q=0.001$ and given natural fork rates $\lambda\Delta=3/4$, $7/12$, $5/12$, and $1/4$, respectively.  
For each given $\lambda\Delta$, we determine the confirmation depth $k$ needed to satisfy~\eqref{eq:p<=q} according to Theorem~\ref{th:Cher_upper}, which then allows us to determine the throughput for every choice of the average delay in time according to~\eqref{eq:maximize}.
Fig.~\ref{fig:thoughput} plots the throughput-latency trade-off  
for several different values of $\lambda \Delta$.  The upper envelop (in green dashed line) of all such curves represents the guaranteed throughput latency trade-off.
Also plotted in is the unachievable bound of Theorem~\ref{th:lower_depth}.
In particular, the maximum throughput is approximately $81$ KB/second due to~\eqref{eq:throughput_r}.
We observe that the choice of $\lambda\Delta=7/12$ yields a large fraction of the maximum possible throughput for practical latencies of $30$ minutes or more.

\subsection{Guidance for Selecting Parameters}

Bitcoin's block size ($1.6$ MB) and mining rate ($1$ every $10$ minutes) have proven to be safe choices, with a maximum throughput of about $2.8$ KB/s.  The block size has long been an issue of debate.  Using a larger block size while holding other parameters unchanged may increase the throughput (and perhaps lower transaction fees), but at the same time degrade the safety
due to larger propagation delays and increased forking (and perhaps also tendency of mining centralization). 
As another important factor, increasing the mining rate increases natural forks, but also reduces the expected time to the same confirmation depth.

The theorems in this paper provide the most accurate guidance to date for selecting parameters for desired system-level performance. 
For example, if the block size is increased to $2$ MB, the propagation upper bound will be around $\Delta=12.5$ seconds according to~\eqref{eq:throughput_delta}.  If the total mining rate is increased to $\lambda=1/30$ block/seconds,
the throughput increases to over $40$ KB/s according to~\eqref{eq:throughput}.
At $\lambda \Delta = 5/12$, the ultimate fault tolerance is 
$44.44\%$ adversarial mining; in this case, at $30$\% adversarial mining, the latency for achieving $10^{-6}$ safety is less than $3.5$ hours, while original Bitcoin takes more than $18$ hours.  Thus the throughput, latency, and safety are simultaneously improved for a minor sacrifice in the fault tolerance.  In fact, $44.44\%$ adversarial mining is practically not tolerable using the original Bitcoin parameters due to the excessively long delays of thousands of block confirmations. 

\section{Conclusion}

In this paper, we have developed a concise mathematical model for blockchain systems based on the Nakamoto consensus. We have obtained a new set of closed-form latency-security bounds, 
which are the first such bounds to achieve the ultimate fault tolerance.
These bounds are not only easily computed but also demonstrate superior numerical performance across a broad spectrum of parameter configurations. Our analysis not only provides an intuitive proof of the ultimate fault tolerance but also offers new insights for the practical application the Nakamoto consensus.

Existing safety guarantees for longest-chain fork-choice
protocols have primarily been
established through 
``block counting'' methods, which calculates, 
the probability that 
one species of blocks is
mined more than another species during some periods longer than the confirmation latency.  Such analyses essentially empower the adversary to use non-causal knowledge of 
future blocks (up until confirmation) and advance them to build a winning chain. 
In contrast, 
the production of a balanced height depends on 
blocks 
mined within an opportunity window of $\Delta$ seconds; future blocks cannot be used to create balancing candidates or balanced heights.  This requirement eliminates certain 
non-causal attack strategies, 
resulting in improvements and achieving the ultimate
fault tolerance.

In today's Bitcoin and future open blockchain networks, miners play 
a crucial role in investing 
in infrastructure to mitigate block propagation delays.  Additionally, constant monitoring of delays and vigilant of observation of potential network partitions are crucial aspects of their operations.  During periods of outage or network disruption, an accurate estimation of delays enables miners and users to adapt their confirmation latency to stay safe.  Subsequently, as delays normalize, forks are easily resolved, and nodes can revert to utilizing smaller confirmation latency specifically designed for standard conditions. The theorems presented in this paper serve as useful tools, empowering all stakeholders to make informed economic decisions 
regarding their operations.


\section*{Acknowledgement}

D.~Guo thanks Dr.\ David Tse for stimulating discussions that sparked the conceptualization of some initial ideas in the paper.  This work was supported in part by the National Science Foundation (grant No.~2132700).

\appendices

\section{Proof of Theorem~\ref{th:unsafe}}
\label{a:unsafe}

\begin{figure*}
    \centering
    \includegraphics[width = 0.88\textwidth]{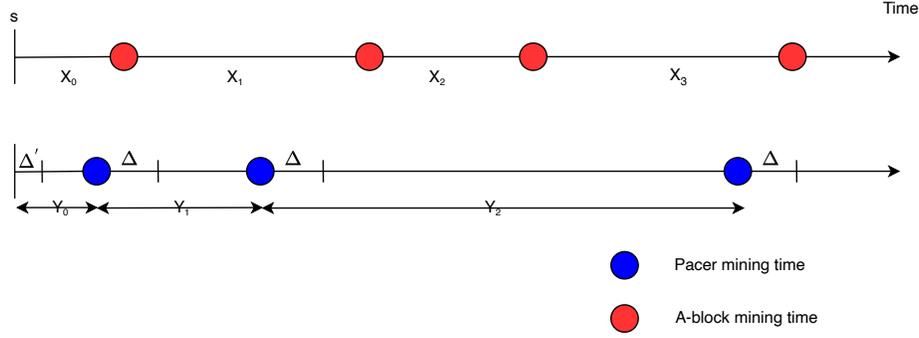}
    \caption{Illustration of mining times of pacers and A-blocks after time $s$.}
    \label{fig:pacer_ad}
\end{figure*}

Under a confirmation rule of latency $t$, 
a target block is 
mined by time $s$.
Within this proof, we count block arrivals from time $s$, i.e., the first A-block is the first A-block mined after $s$, and the first pacer is the first pacer mined after $s$.
Let $X_0$ denote the time from $s$ to the mining time of the first A-block.
For $i = 1,2,\dots$,
let $X_i$ denote the inter-arrival time between the $i$-th A-block and the $(i+1)$-st A-block.
Then, $X_0,X_1,\dots$ are independent and identically distributed (i.i.d.) exponential random variables with mean $\frac{1}{\mya}$.
Let $Y_0$ denote the time from $s$ to the mining time of the first pacer.
For $i = 1,2,\dots$, let $Y_i$ denote the inter-arrival time between the $i$-th pacer and the $(i+1)$-st pacer.
It is easy to see that $Y_1,Y_2,\dots$ are i.i.d.\ random variables with mean $\frac1{\myh}+\Delta$, and $Y_0$ has mean less than $\frac1{\myh}+\Delta$.
An illustration of these random variables is given in Fig.~\ref{fig:pacer_ad}.

Let $E_j$ denote the event that the $j$-th A-block arrives before the $j$-th pacer but after time $s+t$.
The probability of safety violation is at least
\begin{align}
    P(E_j)
    &= P \Bigg(\sum_{i = 0}^{j} X_i < \sum_{i = 0}^{j} Y_i,\sum_{i = 0}^{j} X_i > t \Bigg)\\
    &\geq 1-P\Bigg(\sum_{i = 0}^{j} (X_i-Y_i)\geq 0\Bigg)
    - P\Bigg(\sum_{i = 0}^{j} X_i \leq t \Bigg) \\
    &\geq 1-P\Bigg(\sum_{i = 1}^{j} (X_i-Y_i)\geq -n \Bigg) - P(X_0-Y_0\ge n)     -P\Bigg(\sum_{i = 0}^{j} X_i \leq t \Bigg) \label{eq:pE}
\end{align}
where~\eqref{eq:pE} holds for every integer $n$.  For every $\epsilon>0$, there exists $n$ such that $P(X_0-Y_0\ge n)<\epsilon/3$.  For sufficiently large $j$, we also have
$P(\sum_{i = 0}^{j} X_i \leq t ) < \epsilon/3$.
By~\eqref{eq:a>}, we have $E[X_1-Y_1] < 0$.
Thus, by the law of large numbers, 
\begin{align}
    P\Bigg(\sum_{i = 1}^{j} (X_i-Y_i)\geq -n \Bigg) < \frac\epsilon3.
\end{align}
We conclude that $P(E_j)\rightarrow  1$ as $j \rightarrow \infty$.

\section{Proof of Theorem \ref{th:X}}
\label{s:X}


\subsection{Critical Events}


\begin{figure}
    \centering
    \includegraphics[width=.75\columnwidth]{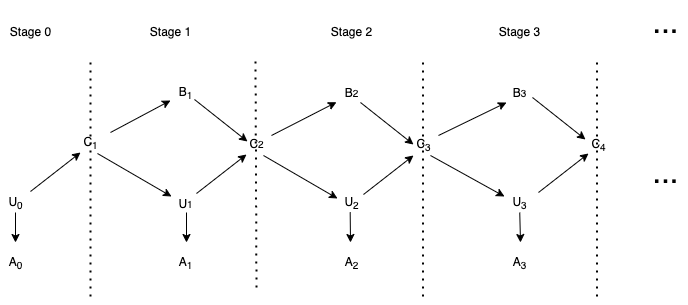}
    \caption{Illustration of all possible state transitions over time.}
    \label{fig:block}
\end{figure}

Let a 
target block be given.
As a convention, we refer to the genesis block as the 0-th balancing candidate.
We describe several 
types of events as follows:
For every $i \in \{0,1,\cdots\}$, let
\begin{itemize}
    \item $C_i$ denotes the event that $i$ or more candidates are eventually mined;
    \item $B_i$ denotes the event that the $i$-th candidate is mined and subsequently balanced;
    \item $U_i$ denotes the event that the $i$-th candidate is mined but is not balanced by the end of its candidacy;
\end{itemize}
Evidently, $B_i \cup U_i = C_i$.
Also, $C_0, C_1, \dots$ is decreasing in the sense that $C_{i+1}\subset C_i$. By definition, $C_i$ is prerequisite for $C_{i+1}$. E.g., ``3 or more" means $\{3,4,...\}$ and is included in ``2 or more", which means $\{2,3,4,...\}$. For convenience, we define
\begin{align}
    A_i=C_{i} \cap \bar{C}_{i+1}
\end{align}
which denotes the event that the $i$-th candidate is the last candidate ever mined.

As time progresses, the events are seen to occur in natural order as illustrated in Fig.~\ref{fig:block}.  The process begins from $U_0$ as the genesis block, which is also the 0-th candidate, cannot be balanced.  If a first candidate is ever mined, the process transitions to $C_1$, otherwise it terminates at $A_0$.
In particular, for every $i=1,2,\dots$, if $C_i$ ever occurs,
then either $B_i$ or $U_i$ occurs.  If $B_i$ occurs, then with probability 1 the next candidate will be mined,
i.e., $C_{i+1}$ must occur.  If $U_i$ occurs instead, then there is some chance $C_{i+1}$ occurs as well as some chance that no more candidates are mined, so that the process terminates at $A_i$.


Conditioned on the $k$-th candidate is mined, whose candidacy is no longer than $\Delta$, the probability that it is also balanced is no higher than the
probability that an honest block is mined over a period of $\Delta$.  Hence
\begin{align}\label{lowerbound_CB}
    P(B_{k}|C_{k}) \leq \epsilon
\end{align}
where $\epsilon$ is given by~\eqref{eq:epsilon}.

If $U_k$ occurs, then at the end of the $k$-th candidacy, the candidate is not balanced, so the candidate becomes public and is strictly higher than all disagreeing blocks.  As such, all honest mining power will extend chains that include the target transaction and are higher than the $k$-th candidate. 
This implies that the adversary must extend a shorter chain to catch up with the highest public honest block in order for the next candidate to be mined.
In Appendix~\ref{a:honest_escape}, we show that the chance that no additional candidate will ever be mined is lower bounded as
\begin{align}\label{eq:lowerbound_UA}
    P(A_{k}|U_{k}) \ge \delta
\end{align}
where $\delta$ is given by~\eqref{eq:delta}.
The lower bound~\eqref{eq:lowerbound_UA} is positive only when~\eqref{eq:a<} holds.

\subsection{Proof of~\eqref{eq:lowerbound_UA}}
\label{a:honest_escape}
Here we wish to lower bound $P(A_k|U_k)$, which is the probability that no more candidate is ever mined after that the $k$-th candidate is not balanced by the end of its candidacy.  We refer to this as the honest miners' escape.  
There is no chance to escape if $a>h/(1+h\Delta)$.  Otherwise, an escape occurs if (but not only if) the following events occur:
1) during the $k$-th candidacy, no A-block is mined, so that when the candidate becomes public, it is higher than all A-blocks;
2) from the end of the candidacy, say time $r$, the number of A-blocks mined during $(r,r+t]$ is no more than the number of jumpers mined during $(r,r+t-\Delta]$ for every $t\geq 0$, so that the adversary never produces a disagreement about the target transaction after $r$.

Since the duration of the candidacy is at most $\Delta$, the probability of the first event is evidently lower bounded by $e^{-a\Delta}$.
Conditioned on the first event, once the $k$-th candidacy ends, the next H-block is a jumper, which arrives after an exponential time.  The chance of an escape is minimized by delaying the propagation of all H-blocks maximally from $r$ onward.  In this case, the jumper process $J_{r,r+t-\Delta}$ is statistically identically to the pacer process $P_{\Delta,t}$.  Conditioned on the first event, the second event's probability is thus lower bounded as
\begin{align}
    &P\left( \cap_{t\ge 0}  \{ A_{r,r+t} \le P_{\Delta,t} \}\right)
    =
    P\left( \cap_{t\ge0} \{ A_{0,t} \le P_{0,t} \} \right)    \\
    &=
    P\left( \max_{t\ge0} \{ A_{0,t} - P_{0,t} \} = 0 \right) \\
    &=
    P( M=0 ) .
\end{align}
Based on the preceding bounds and the pmf of $M$ given by~\eqref{eq:e(i)}, we have
\begin{align}
    P( A_k | U_k )
    &\ge e^{-a\Delta} P(M=0) \\
    &= \delta.
\end{align}

\subsection{Partially Markov Processes}


Consider a sample space in which each outcome corresponds to a (sample) path on the directional graph depicted by Fig.~\ref{fig:block} from $U_0$ to some $A$ vertex.  If the path ends at $A_k$, the outcome can be represented by a binary string 
$0\, b_1 \dots b_{k-1} 0$,
where $b_i=1$ if $B_i$ is on the path (in stage $i$).  The corresponding number of balanced heights is thus $X=b_1+\dots+b_{k-1}$.

Naturally, we let each vertex in the graph denote the event which consists of all paths that go through the vertex.
Under condition~\eqref{eq:a<}, with probability 1 some $A_k$ event terminates the sample path.  Thus every adversarial strategy induces a pmf on the (discrete) sample space, which must satisfy~\eqref{lowerbound_CB} and~\eqref{eq:lowerbound_UA}.  Under such a pmf, the transitions in the graph are not Markov in general.  In particular, we cannot say the event $B_i$ is independent of $B_{i-1}$ conditioned on $C_i$.

Let $P^{0}(\cdot)$ stand for the special pmf under which all state transitions on the directed graph are Markov with the following transition probabilities:  For every $k \in \{0,1,...\}$,
\begin{align}
    P^{0}(B_k | C_k)
    &= 1 - P^{0}(U_k | C_k)
    = \epsilon,
    \label{markov_const_s}\\
    P^0 (A_{k} | U_k)
    &= 1 - P^{0}(C_{k+1} | U_k) 
    = \delta,
    \label{markov_const_e}
\end{align}
where $\epsilon$ and $\delta$ are given by~\eqref{eq:epsilon} and~\eqref{eq:delta}, respectively.
These transition probabilities satisfy the prior constraints~\eqref{lowerbound_CB} and~\eqref{eq:lowerbound_UA} with equality under condition~\eqref{eq:a<}.

We next prove that every pmf $P$ which satisfies~\eqref{lowerbound_CB} and~\eqref{eq:lowerbound_UA} must also satisfy
\begin{align}\label{eq:target}
       P(X \le n) \ge P^0 (X \le n) .
\end{align}
Towards this end, we define a sequence of pmfs, $P^1, P^2,\dots$, as follows:  Let $0\,b_1\dots b_i*$ denote the event consisting of all sample paths with the prefix $0\,b_1\dots b_i$.
Let the probability of a sample path under $P^m$ be defined as
\begin{align} \label{eq:Pm1}
    P^m( \{ 0\,b_1\dots b_{k-1}0 \} )
    = P( \{ 0\,b_1\dots b_{k-1}0 \} )
\end{align}
for 
$k=1,\dots,m-1$ and as
\begin{align} \label{eq:Pm2}
\begin{split}
     P^m( \{ 0\,b_1\dots b_{k-1}0 \} ) 
    = P( C_m,\, 0\,b_1\dots b_{m-1}* ) \times P^0( \{ 0\,b_1\dots b_{k-1}0 \} | C_m,\, 0\,b_1\dots b_{m-1}* )
 \end{split}
\end{align}
for 
$k=m, m+1, \dots$. 
The outcome $0\,b_1\dots b_{k-1}0$ with $k\ge m$ 
implies that $C_m$ occurs.
It is straightforward to verify that~\eqref{eq:Pm1},~\eqref{eq:Pm2} defines a pmf.
Evidently, $P^m$ is partially Markov in the sense that it is consistent with $P$ up until stage $m-1$, after which $P^m$ becomes Markov with the same transition probabilities as $P^0$.  In particular, conditioned on $C_m$, the balanced stages before stage $m$ and the balanced stages afterward are independent.

For every pair of natural numbers $l$ and $j$, we define the following pair of random variables as functions of the sample path $0\,b_1\dots b_{l-1}0$:
    \begin{align}
        L_j &= b_1 + \dots + b_{\min(l,j)-1} \\
        R_j &= b_j + \dots + b_{l-1} ,
    \end{align}
with the convention that $L_0 =0$.  Evidently, with $1_{B_0}=0$
\begin{align}
    X = L_m + 1_{B_m}+ R_{m+1} \label{eq: X=L+B+R}
\end{align}
holds for every natural number $m$.

In Appendices~\ref{a:part1} and~\ref{a:part2}, we show that for every $m=0,1,\dots$ and every 
$n = 0,1,\dots$,
\begin{align}
       P^{m+1} (X \le n) \ge P^{m} (X \le n) \label{part_1}
\end{align}
and
\begin{align}
    \lim_{m\to\infty} P^m(X\le n) = P(X\le n) \label{part_2}
\end{align}
hold.
As the limit of an increasing sequence, $P(X\ge n)$ must satisfy~\eqref{eq:target}.

For $n=0,1,\dots$, let us denote
\begin{align}
    f_n &= P^0(X \geq n|C_1) \\
    &=P^0(R_1 \geq n|C_1).
\end{align}
Then we have the following recursion
\begin{align}
    f_n
    =& P^0(R_1\geq n, U_1|C_1) + P^0(R_1\geq n, B_1|C_1) \\
    =& P^0(R_2\geq n | C_2) P^0(C_2|U_1) P^0(U_1|C_1)\notag \\
    &+ P^0(R_2\geq n-1 | B_1) P^0( B_1|C_1) \\
    =& 
    (1-\delta) (1-\epsilon) f_n
    +
    \epsilon f_{n-1}.
\end{align}
Solving the recursion with the fact that $f_0=1$ yields
\begin{align}
     f_n = \left( \frac{\epsilon}{\epsilon+\delta-\epsilon\delta} \right)^n .
\end{align}
It is easy to verify that
\begin{align}
    P^0(X\le n)
    &= 1 - P^0(R_1\geq n+1|C_1) P^0(C_1)\\
    &= F(n)  \label{eq:P0Fn}
\end{align}
where $F(\cdot)$ is defined in~\eqref{eq:Fn}.

With confirmation by depth, there can be at most $k$ balanced heights before violation.  
In the remainder of this section, we are concerned with the number of balanced candidates up to the $k$-th candidacy, denoted as $X^{(k)}$. We first establish the following result:
\begin{align}
    P(X^{(k)} \leq n)
    \geq P^0(X^{(k)} \leq n) \label{eq:P-P0}
\end{align}
where~\eqref{eq:P-P0} can be proved as follows.
Since we only consider the first $k$ candidates, which means only the first $k$ stages are relevant, we have
\[
    P(X^{(k)} \leq n) = P^{k+1}(X^{(k)} \leq n).
\]
Furthermore, we can prove that for every $m = 0, 1, \dots, k$ and $n = 0, 1, \dots, k$,
\begin{align}
    P^{m+1}(X^{(k)} \leq n) \geq P^{m}(X^{(k)} \leq n).
\end{align}
For $X^{(k)}$, we define $L'_j$ and $R'_j$ as functions of the sample path $0 \, b_1 \dots b_{l-1} 0$ for natural number $l$:
\begin{align}
    L'_j &= b_1 + \dots + b_{\min(j,l) - 1}, \quad j = 0, 1, \dots, k, \\
    R'_j &= b_j + \dots + b_{\min\{k,l\}}, \quad j = 1, 2, \dots, k+1,
\end{align}
with boundary conditions $L'_0 = 0$ and $R'_{k+1} = 0$.
Thus, we can express $X^{(k)}$ as:
\begin{align}
    X^{(k)} = L'_m + 1_{B_m} + R'_{m+1}, \quad \text{for } m \leq k.
\end{align}
The remainder of the proof follows the same steps as in the proof of~\eqref{part_1} in Appendix~\ref{a:part1}, the details of which is omitted here.

Using~\eqref{eq:P-P0}, 
we can lower bound the cdf of $X^{(k)}$: 
\begin{align}
    P(X^{(k)} \leq n)
    &= P^0(X^{(k)} \leq n,X \geq n+1)+P^0(X \leq n) \label{eq:P0+P0}\\
    &= P^0(X^{(k)} \leq n,X \geq n+1, C_{k+1})+P^0(X \leq n) \label{eq:P0c+P0}
\end{align}
where~\eqref{eq:P0c+P0} holds because $X^{(k)} \leq n$ and $X \geq n+1$ implies the occurrence of $C_{k+1}$.
Since we already know $P^0(X \leq n)$ by~\eqref{eq:P0Fn}, what left is to calculate $P^0(X^{(k)} \leq n,X \geq n+1, C_{k+1})$.
By law of total probability, we have
\begin{align}
\begin{split}
   P^0(X^{(k)} \leq n,X \geq n+1, C_{k+1}) = \sum_{i=0}^{n} P^0(X^{(k)}=i,C_{k+1})P^0(X-X^{(k)} \geq n+1-i|C_{k+1}) . \label{eq:sum_Xk=i}
\end{split}
\end{align}
Due to Markov property on $P^0$, $X-X^{(k)} \geq n+1-i$ conditioning on $C_{k+1}$ is identically distributed as $X  \geq n+1-i$ conditioning on $C_{1}$.  $P^0(X^{(k)}=i,C_{k+1})$ equals the sum of probabilities of all possible paths from $U_0$ to $C_{k+1}$ which traverse $i$ of the B states. For every positive integer $m$, conditioned on $C_m$, the probability of going through $B_m$ to arrive at $C_{m+1}$ is $\epsilon$, while the probability of not going through $B_m$ to arrive at $C_{m+1}$ is $(1-\epsilon)(1-\delta)$. Thus, the number of B states visited between $C_1$ and $C_{k+1}$ is a binomial random variable.  We have
\begin{align}
    &P^0(X^{(k)} \leq n,X \geq n+1, C_{k+1}) \notag \\
    &=\sum_{i=0}^{n} \binom{k}{i}\epsilon^i((1-\epsilon)(1-\delta))^{k-i}
    (1-\delta) \Big(\frac{\epsilon}{\epsilon+\delta-\epsilon\delta}\Big)^{n-i+1}\label{eq:to-parameter}\\
    &= (1-\delta)\Big(\frac{\epsilon}{\epsilon+\delta-\epsilon\delta}\Big)^{n-k+1} \sum_{i=0}^{n} \binom{k}{i} \epsilon^i \Big(\frac{\epsilon(1-\epsilon)(1-\delta)}{\epsilon+\delta-\epsilon\delta}\Big)^{k-i} \\
    &=  (1-\delta)\Big(\frac{\epsilon}{\epsilon+\delta-\epsilon\delta}\Big)^{n-k+1}\Big(\frac{\epsilon}{\epsilon+\delta-\epsilon\delta}\Big)^{k} \sum_{i=0}^{n} \binom{k}{i} (\epsilon+\delta-\epsilon\delta)^i ((1-\epsilon)(1-\delta))^{k-i}
    \\
    &=  (1-\delta)\Big(\frac{\epsilon}{\epsilon+\delta-\epsilon\delta}\Big)^{n+1}F_b(n;k, (\epsilon+\delta-\epsilon\delta)), \label{eq:P0Xk<}
\end{align}
    where $F_b(\cdot;k,p) $ denotes the cdf of a binomial distribution with $k$ independent trials and success probability $p$. 
Plugging~\eqref{eq:P0Xk<} back into~\eqref{eq:sum_Xk=i} completes the proof of~\eqref{eq:Fkn} and also Theorem~\ref{thm:balanced heights}.

\subsection{Proof of~\eqref{part_1}}
\label{a:part1}

We use $C_{m,i}$ to denote the intersection of events $C_m$ and $L_m=i$.  Then
\begin{align}
   P^{m+1}(R_{m+1} < c \,|\, C_{m+1,i}, B_m )&=
    P^0(R_{m+1} < c \,|\, C_{m+1} ) \label{eq:R_0}\\
    &=
    P^{m+1}(R_{m+1} < c \,|\, C_{m+1,i}, U_m) .
\end{align}
Also, partially Markov process with measure $P^m(\cdot)$ becomes memoryless once event $C_m$ occurs:
\begin{align}
    P^{m}(A_m |C_{m,i}) &= P^{0} (A_m |C_{m,i})\label{eq:A_m0}\\
    P^{m}(B_m|C_{m,i}) &= P^{0} (B_m|C_{m,i})\label{eq:B_m0}\\
    P^{m}(C_{m,i}) &= P(C_{m,i})\label{eq:m_P}\\
    &=P^{m+1}(C_{m,i}).\label{eq:m+1_P}
\end{align}

Consider the right-hand side of~\eqref{part_1}:
\begin{align}
    &P^m(X \le a)
    = P^m(X \le a , C_m^c) + P^m(X\le a, C_m)\label{X_LR}\\
    &= P^m(L_m \le a , C_m^c) + \sum^a_{i=0}  P^m (C_{m,i})P^m ( X\le a| C_{m,i} )
    \label{condition_CL}\\
    &= P(L_m \le a , C_m^c) + \sum^a_{i=0}  P (C_{m,i}) P^m ( X\le a| C_{m,i} )
    \label{eq:m+1_m}
\end{align}
where~\eqref{eq:m+1_m} is due to the fact that $P^m$ is identical to $P$ up until stage $m-1$ as in~\eqref{eq:m_P}.
Similarly, by~\eqref{eq:m+1_P}
\begin{align}
\begin{split}
    P^{m+1}(X \le a)
    = P(L_m \le a , C_m^c) + \sum^a_{i=0} P (C_{m,i}) P^{m+1} ( X\le a| C_{m,i} ) .
    \label{m+1_m}    
\end{split}
\end{align}
In order to establish~\eqref{part_1}, it suffices to show
\begin{align}
    P^{m+1} ( X\le a| C_{m,i} )
    \ge
    P^m ( X\le a| C_{m,i} )
\end{align}
holds for all $i\in\{0,\dots,a\}$.

Since
$
    C_m
    =  B_m \cup(U_m, C_{m+1}) \cup  A_m,
$we have
\begin{align}
    &P^{m+1} ( X\le a| C_{m,i} ) \notag \\=&
    P^{m+1} ( X\le a, B_m | C_{m,i} )
    + 
    P^{m+1} ( X\le a, A_m  | C_{m,i} )
    +  
    P^{m+1} ( X\le a, U_m, C_{m+1} | C_{m,i} )
    \\
    =&
    P^{m+1} ( R_{m+1}\le a-i-1, B_m | C_{m,i} )
    +P^{m+1} ( A_m  | C_{m,i} )+  P^{m+1} (R_{m+1}\le a-i, U_m, C_{m+1} | C_{m,i} ) \label{eq:X_R}\\
    =&
    P^{m+1} (R_{m+1}\le a-i-1 | C_{m+1} )
    P^{m+1} ( B_m, C_{m+1} | C_{m,i} ) + P^{m+1} (R_{m+1}\le a-i | C_{m+1} ) 
    P^{m+1} ( U_m, C_{m+1} | C_{m,i} )\notag \\
    &+ P^{m+1} ( A_m  | C_{m,i} )
    \\
    =&
    P^{m+1} ( B_m, C_{m+1} | C_{m,i} )
    x_1
    +
    P^{m+1} ( U_m, C_{m+1} | C_{m,i} ) 
    x_2 
    + P^{m+1} ( A_m  | C_{m,i} )\label{eq:x123}
\end{align}
where
\begin{align}
    x_1 &= P^0 ( R_{m+1}\le a-i-1 | C_{m+1} ) \\
    x_2 &= P^0 ( R_{m+1}\le a-i | C_{m+1} ) ,
\end{align}
and~\eqref{eq:X_R} is due to~\eqref{eq: X=L+B+R},\eqref{eq:x123} is due to~\eqref{eq:R_0}.
Similarly, with~\eqref{eq:A_m0} and~\eqref{eq:B_m0}, we have
\begin{align}
\begin{split}
    P^{m} & ( X\le a| C_{m,i} )
    =
    P^{0} ( B_m, C_{m+1} | C_{m,i} )
    x_1 \\
    &
    +
    P^{0} ( U_m, C_{m+1} | C_{m,i} ) 
    x_2+       P^{0} ( A_m  | C_{m,i} ).
\end{split}
\end{align}
Due to~\eqref{lowerbound_CB} and~\eqref{eq:lowerbound_UA},
\begin{align}
    P^{0} ( A_m  | C_{m,i} ) &\leq P^{m+1} ( A_m  | C_{m,i} )\\
     P^{0} ( B_m, C_{m+1} | C_{m,i} ) &\geq  P^{m+1} ( B_m, C_{m+1} | C_{m,i} ).
\end{align}

Now we introduce the following claim to assist the proof, which enables us to compare differently weighted sums of  three numbers.
\begin{lemma} \label{lm:lambda}
    If $\lambda_1 + \lambda_2 +\lambda_3 = 1$, $\lambda_1' + \lambda_2' +\lambda_3' = 1$ and $\lambda_1 \geq \lambda_1' $, $\lambda_3 \leq \lambda_3'$, $\lambda_i \text{ and } \lambda_i' \geq 0$ for $i =1,2,3$.
    Then all real numbers $x_{1} \leq x_{2} \leq 1$ satisfy
    \begin{align}
        \lambda_1 x_1 + \lambda_2 x_2+
        \lambda_3 
        \leq
        \lambda_1' x_1 + \lambda_2' x_2+
        \lambda_3'.
    \end{align}
\end{lemma}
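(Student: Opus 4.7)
\bigskip

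\noindent\textbf{Proof plan for Lemma~\ref{lm:lambda}.} My plan is to move everything to one side and exploit the two normalization constraints to reduce the three differences $\lambda_i - \lambda_i'$ to two free non-negative parameters, after which the inequality becomes a sum of two terms whose signs are forced by the hypotheses on $x_1$ and $x_2$. Concretely, the inequality to prove is equivalent to
\begin{align*}
    (\lambda_1-\lambda_1')\,x_1 + (\lambda_2-\lambda_2')\,x_2 + (\lambda_3-\lambda_3') \le 0 .
\end{align*}

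The first step is to use $\lambda_1+\lambda_2+\lambda_3 = \lambda_1'+\lambda_2'+\lambda_3' = 1$ to write $\lambda_2-\lambda_2' = -(\lambda_1-\lambda_1') - (\lambda_3-\lambda_3')$. Next, I introduce the shorthand $a = \lambda_1-\lambda_1' \ge 0$ and $c = \lambda_3'-\lambda_3 \ge 0$, both non-negative by assumption. Substituting, the left-hand side of the displayed inequality becomes $a x_1 + (-a + c) x_2 - c$, which I then regroup as
\begin{align*}
    a(x_1 - x_2) + c(x_2 - 1) .
\end{align*}

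The final step is immediate: since $x_1 \le x_2$ the first summand is non-positive, and since $x_2 \le 1$ the second summand is non-positive, so the sum is non-positive, which is exactly what we needed. I do not anticipate any obstacle here; the lemma is a two-line manipulation once the correct parametrization of the differences $\lambda_i-\lambda_i'$ is chosen, and the role of the hypotheses $\lambda_1 \ge \lambda_1'$, $\lambda_3 \le \lambda_3'$, $x_1 \le x_2$, and $x_2 \le 1$ is transparent in the final expression.
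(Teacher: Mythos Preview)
Your proposal is correct and essentially identical to the paper's proof. The paper uses the normalization to rewrite each side as $x_2+\lambda_1(x_1-x_2)+\lambda_3(1-x_2)$ (and the primed analogue) and then compares termwise; subtracting those two expressions gives exactly your $a(x_1-x_2)+c(x_2-1)\le 0$, so the two arguments coincide up to whether one takes the difference before or after the rewriting.
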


\begin{proof}
    The proof is straightforward:
      \begin{align}
        \lambda_1 x_1  + \lambda_2 x_2 + \lambda_3 
        &= x_2+\lambda_1(x_1-x_2)+\lambda_3(1-x_2)\\
       & \leq 
         x_2+\lambda_1'(x_1-x_2)+\lambda_3'(1-x_2)\\
         &=\lambda_1' x_1 + \lambda_2' x_2+
        \lambda_3' .
      \end{align}
    \end{proof}

Lemma~\ref{lm:lambda} implies
\begin{align}
    P^{m+1}(X\leq a|C_{m,i})&\geq P^{0}(B_m, C_{m+1}|C_{m,i})x_1+P^{0}( U_{m}, C_{m+1}|C_{m,i})x_2+ P^{0} (A_m |C_{m,i}) \\
    &= P^{m}(X\leq a|C_{m,i}),
\end{align}
which completes the proof of~\eqref{part_1}.

\subsection{Proof of (\ref{part_2})}\label{a:part2}

By the definition of limit,  proving~\eqref{part_2} equals to prove $\forall \lambda \ge 0, a \in \{0,1,\dots\}, \exists N_{0}$ such that $\forall N >N_{0}$,
$|P^{N}(X \le a) - P(X \le a)| < \lambda$.

Conditioning on the occurrence of event $C_N$,
\begin{align}
    P^{N}(X \le a)
    &= P^{N}(C_{N}^{c})P^{N}(X\le a | C_{N}^{c}) + P^{N}(C_{N}) P^{N}(X\le a | C_{N}),\\
    P(X \le a)
    &= P(C_{N}^{c})P(X  \le a | C_{N}^{c})+ P(C_{N}) P(X \le a | C_{N}).
\end{align}
By definition of~\eqref{eq:Pm1} and~\eqref{eq:Pm2}, $P(X \le a  | C_{N}^{c} ) =  P^{N}(X  \le a | C_{N}^{c})$ and $P(C_{N}^{c}) = P^{N}(C_{N}^{c})$.  Let $p_1 =  P^{N}(X \le a | C_{N})$ and $p_2 = P(X \le a  | C_{N} )$.  We have,
\begin{align}
    | P^{N}(X \le a) - P(X \le a) | 
    &= P(C_{N}) \, | p_1 - p_2 |\\
    &\le (1-\delta)\left(  1-\delta(1-\epsilon)
       \right)^{N-1}| p_1 - p_2 |\label{eq:1-P(C^{c})}\\
    &< \lambda,
\end{align}
where~\eqref{eq:1-P(C^{c})} is due to the following
\begin{align}
    P(C_{N})\leq P^{N-1}(C_{N})\leq P^{N-2}(C_{N})\leq \cdots &\leq P^{0}(C_{N}), \label{eq:inductive}
\end{align}
\eqref{eq:inductive} can be derived by definition of~\eqref{eq:Pm1} and~\eqref{eq:Pm2} as 
\begin{align}
    P^{i}(C_N)
    &= P^{0}(C_N|C_{i+1})P^{i}(C_{i+1}|C_i)P^{i}(C_i) \\
    &\leq
 P^{0}(C_N|C_{i+1})P^{0}(C_{i+1}|C_i)P^{i-1}(C_i).
\end{align}
Then, by the memoryless property of the Markov process with measure $P^{0}(\cdot)$we have
\begin{align}
      P^{0}(C_{N})
      & = P^{0}(C_1|U_0)\prod_{i = 1}^{N-1}P^{0}(C_{i+1}|C_{i})\\
      &= (1-\delta)\left(  1-\delta(1-\epsilon)
       \right)^{N-1},
\end{align}
where for $i=\{1,2,\cdots,N-1\}$
\begin{align}
    P^{0}(C_{i+1}|C_{i}) &= 1-P^{0}(A_{i}|C_i)\\
    &=1-\delta(1-\epsilon) .
\end{align}
Thus  $\lim\limits_{N \rightarrow \infty} P^{N}(X \le a) = P(X \le a)$.  Hence~\eqref{part_2} holds.

\section{Proof of Lemma \ref{lm:x+y}}
\label{a:x+y}

For every integer $n$, we have
\begin{align}
    P(X+Y \geq 0) &\leq P(X > -n \text{ or } Y > n -1)\\
    & \leq  P(X > -n) + P(Y > n -1) .
\end{align}
Taking the infimum over $n$ on both sides yields the upper bound in Lemma~\ref{lm:x+y}.

We prove the other direction in the following.
Denote $\phi (n) = \bar{F}_{X}(-n) + \bar{F}_{Y}(n-1) $, and we have $0 \leq \phi \leq 2$. Since $\phi(n) \to 1$ as $|n| \to \infty$, $\inf_n \phi(n) \le 1$. If $\inf_n \phi(n) = 1$, the bound is trivial. Hence in the rest of the proof we may assume that $\phi(a) = \inf_n \phi(n) < 1$ for some $a \in \mathbb{Z}$. 

Then for all $k \in \mathbb{Z}$, we have
\begin{align} \label{leq}
    \bar{F}_{X}(-a+k) + \bar{F}_{Y}( a-k-1) \geq \bar{F}_{X}( -a)+\bar{F}_{Y}( a-1)
\end{align}
Or equivalently, for $k \in \mathbb{N}$,
\begin{align}\label{nya}
    P(-a < X \leq -a+k) \leq P(a-k-1 < Y \leq a-1),
\end{align}
\begin{align}\label{nyanya}
    P(a-1 < Y \leq a+k-1) \leq P(-a-k < X \leq -a).
\end{align}

To attain the bound, we must construct $X_0$ and $Y_0$ such that $X_0+Y_0 \ge 0$ whenever $X_0 > -a$ or $Y_0 > a-1$, and we need to ensure the two cases are disjoint. 

Set a random variable $U$, which has a uniform distribution over $[0,1]$. Given $F$, define 
\begin{align}
    F^{-1}(x) = \sup \{a \in \mathbb{R}|F(a) < x\}.
\end{align}

We define $X_0$ and $Y_0$ to be the following:

\begin{align}
    X_0 = F_X^{-1}(U+F_X(-a)-\lfloor U+F_X(-a) \rfloor)
\end{align}
\begin{align}
    Y_0 = F_Y^{-1}(F_Y(a-1)-U-\lfloor F_Y(a-1)-U \rfloor)
\end{align}

   $X_0$ and $Y_0$ have cdf $F_X$ and $F_Y$ since they are rearrangements of $F^{-1}_X$ and $F^{-1}_Y$. Now, we show that $X_0(u) + Y_0(u) \geq 0$ for $u \leq 1 - F_X(-a)$ $a.e.$ The same holds for $u > F_Y(a-1)$, and we omit this similar proof here.
\begin{align}
    F^{-1}_X(x) = \sup \{a \in \mathbb{R}|F_X(a) < x\}.
\end{align}
implies that 
\begin{align}
F_{x} \circ F_{x}^{-1}(x)>x \quad a.e. ,
\end{align}
since $X$ is integer-valued.
Consider 
\begin{align}
S_{b}=\left\{u \leq 1-F_{X}(-a) | X_0(u) \leq b, Y_0(u) \leq -b-1\right\}
\end{align}
And,
\begin{align}
    F_X(b)
    &\geq F_X \circ X_0(u) \\
    &= F_X \circ F^{-1}_X(u+F_{X}(-a)) \\
    &> u+F_{X}(-a) \quad a.e. \label{fb},
\end{align}
Accordingly,
\begin{align}
    F_Y(-b-1) 
    &\geq F_Y \circ Y_0(u) \\
    &= F_Y \circ F^{-1}_Y(F_{Y}(a-1)-u) \\
    &> F_{Y}(a-1)-u \quad a.e. \label{fb1}
\end{align}
Adding up~\eqref{fb} and~\eqref{fb1} yields
\begin{align}
    F_X(b)+F_Y(-b-1) > F_{X}(-a)+F_{Y}(a-1) \quad a.e. 
\end{align}
However, $\phi (a)$ is maximal, leading to $P(S_b) = 0$
\begin{align}
    P(u \leq 1- F_X(-a) | X_0(u)+Y_0(u)<0) = 0
\end{align}
Finally, 
\begin{align}
    P(X_0+Y_0) & \geq 1-F_X(-a)+1-F_Y(a-1) \notag\\
    &\quad= \bar{F}_{X}(-a)+\bar{F}_{Y}(a-1) 
\end{align}

Thus $X_0$ and $Y_0$ attain the optimal bound, which concludes the lemma.

\section{Proof of Theorem~\ref{th:Cher_upper}}
\label{s:Cher_upper}

This appendix finished the proof of Theorem~\ref{th:Cher_upper} in Sec.~\ref{subsec:sketch}. We 
utilize Theorem~\ref{th:X}, proved in Appendix~\ref{s:X}, to derive 
a closed-form upper bound on the probability of a safety violation for confirmation by depth 
$k$.  We produce two upper bounds, both of which decay exponentially with $k$.  The first bound 
has a weaker exponent, but it is easy to verify its decay within the fault tolerance~\eqref{eq:a<}.  The second bound 
has a steeper exponent whose expression is more complicated.

\subsection{Upper Bounding $Y$}
\label{a:s:Y}
Recall that $Y^{(k)}$ is defined in \eqref{eq:Yk}.
Due to the renewal nature of the pacer process, we have $P_{\tau+\Delta,\tau+\Delta+c}$ stochastically dominating $P'_{0,c}$, where $P'$ is an independent pacer process of the same statistics as $P$.  Hence
\begin{align}
    &P(  Y^{(k)} > -n ) \notag \\
    &\le P\!\big( L_s - k + \sup_{c \geq 0} \{A_{s,\tau+2\Delta+c}-P'_{0,c} \} > -n \big) \\
    &\le P\!\big( L_s - k + A_{s,\tau+2\Delta} + \sup_{c \geq 0} \{A_{\tau+2\Delta,\tau+2\Delta+c}-P'_{0,c} \}  > -n \big) .  \label{eq:LAsup}
\end{align}
The lead $L_s$ depends only on blocks mined by $s$.  The supreme term in~\eqref{eq:LAsup} depends only on the mining processes after $\tau+2\Delta>s$.  Hence the supreme term in~\eqref{eq:LAsup} is independent
of $L_s - k + A_{s,\tau+2\Delta}$.

Let us define
\begin{align}\label{eq:define M}
    M= \sup_{t\geq 0}\{A'_{0,t}-P'_{0,t}\}    
\end{align}
where $A'$ is a Poisson process identically distributed as $A$.
By Lemma~\ref{lm:LS_trans}, the pmf of $M$ is given by~\eqref{eq:e(i)}.  Evidently, the supreme term in~\eqref{eq:LAsup} is identically distributed as $M$.  We can thus rewrite~\eqref{eq:LAsup} as
\begin{align}
    P( Y^{(k)} > -n )
    &\le P\left( L_s - k + A_{s,\tau+2\Delta} + M >-n\right) .  \label{eq:LAsupM}
\end{align}

Let the first H-block mined after 
 $s$ be called the first new pacer (this takes an exponential time), then the first H-block mined at least $\Delta$ seconds after that be called the second new pacer, and so on.  Let us also define 
 $s+T-\Delta$ to be the first time $k$ new pacers are mined.  Since the height of the $k$-th new pacer is at least $h_b+k$, we have  $s+T-\Delta\ge\tau$ by $\tau$'s definition.  We then relax~\eqref{eq:LAsupM} to

\begin{align}
    P( Y^{(k)} > -n )
    &\le P\left( L_s - k + A_{s,s+T+\Delta} + M > -n \right) .  \label{eq:LAsupT}
\end{align}

While $\tau$ is generally dependent on $L_s$, we have $T$ independent of $L_s$ due to the memoryless nature of mining.

In Appendix~\ref{a:lead}, we show that $L_s$ is stochastically dominated by $M+A_{0,2\Delta}$.  Let $L$ be an independent random variable identically distributed as $M$.  We also note that
$ A_{s,s+T+\Delta}$ is identically distributed as $A_{2\Delta,T+3\Delta}$.  We can simplify~\eqref{eq:LAsupT} to

\begin{align}
    P( Y^{(k)} > -n )
    &\le P\left( L + A_{0,2\Delta} - k + A_{2\Delta,T+3\Delta} + M > -n \right) \\
    &= P\left( L + M + A_{0,T+3\Delta} - k + n > 0 \right) \label{eq:M+L+A4d}.
\end{align}


In~\eqref{eq:M+L+A4d}, $L$, $M$, and $A_{0,T+3\Delta}$ are mutually independent.  We shall derive their MGFs shortly and use them to upper bound the probability on the right hand side of~\eqref{eq:M+L+A4d}.  We take two separate approaches: In Appendix~\ref{s:cor:tolerance}, we let $n=ck$ with a specific choice of $c>0$ and establish Corollary~\ref{cor:tolerance} by showing that as long as~\eqref{eq:a<} holds,~\eqref{eq:PXn} and~\eqref{eq:M+L+A4d} both vanish as $k\to\infty$.


\subsection{Chernoff Bound}
\label{s:proof_cher}

In this appendix, 
obtain an explicit exponential bound and thus establish Theorem~\ref{th:Cher_upper}.  
(This analysis can be thought of as a refinement of the analysis in Appendix~\ref{s:cor:tolerance}.)
One step is to optimize $\nu$ introduced for the Chernoff bound.  A second step is to select $n$ carefully to balance the two exponents corresponding to $P(X^{(k)}\ge n)$ and $P(Y^{(k)}>-n)$, respectively.

First, by Theorem~\ref{th:X},
\begin{align}
    P(X^{(k)}\ge n) &\leq P(X\ge n)\\
& \le (1-\delta )\left(\frac{\epsilon}{\epsilon+\delta-\epsilon\delta}\right)^n
    \label{eq:PXk>n-1} 
\end{align}
where $\epsilon$ and $\delta$ are defined in~\eqref{eq:epsilon} and~\eqref{eq:delta}, respectively. Note that this even holds for non-integer $n$.

We can use Chernoff inequality again
to upper bound $P(Y^{(k)}>-n)$.
Recall that $\bar{a}$ and $\bar{h}$ are defined as $a\Delta$ and $h\Delta$, respectively.
Using the same computation in Appendix~\ref{s:cor:tolerance}, we have
\begin{align}
    P( Y^{(k)}>-n)
    &\le P(M+L+A_{0,T+3\Delta}+n-k >0) \\
    &\leq e^{(e^{\nu}-1)3\bar{a}}
    \left( q(\nu) \right)^k
    e^{\nu n}E[e^{\nu M}]E[e^{\nu L}]\label{eq:Chernoff_upper}
\end{align}
where
\begin{align}
    q(\nu) = \frac{e^{(e^{\nu}-1)\bar{a}-\nu}}{1-(e^{\nu}-1)\bar{a}/\bar{h}} .
\end{align}
We select $\nu$ to minimize $q(\nu)$, which is not difficult by setting the derivative $q'(\nu)=0$.
The unique valid solution $\nu^{*}$ satisfies
\begin{align} \label{eq:ev*}
    e^{\nu^{*}} = 1+\gamma(\bar{a},\bar{h})/\bar{a}
\end{align}
where 
\begin{align}
    \gamma (\alpha,\eta) = \frac12\left(2-\alpha+\eta - \sqrt{4+(\alpha+\eta)^2} \right).
\end{align}

Letting $n=\zeta k$ with $\zeta$ to be decided later, we have 
\begin{align}
    P(X^{(k)}\ge \zeta k) \leq   
     (1-\delta )e^{\zeta k\, \ln\left(\frac{\epsilon}{\epsilon+\delta-\epsilon\delta}\right)} .
\end{align}

Using lemma~\ref{lm:LS_trans}, we evaluate the MGFs of $M$ and $L$ as
\begin{align}
    E[e^{\nu^{*} L}]
    &= E[e^{\nu^{*} M}] \\
    &= \mathcal{E}\left( 1+ \frac{\gamma(\bar{a},\bar{h})}{\bar{a}} \right) .
\end{align}

Letting $n=\zeta k$ and plugging~\eqref{eq:ev*} back into~\eqref{eq:Chernoff_upper} 
and also evaluating the MGFs of $M$ and $L$ , we obtain 
\begin{align}\label{eq:PYk>-n < fe...}
    P( Y^{(k)} > -n )
    \leq f(\bar{a},\bar{h}) e^{(\zeta \nu^*-\xi(\bar{a},\bar{h})) k}
\end{align}
where
\begin{align}
    f(\alpha,\eta) = \frac{e^{5\gamma}\gamma^2(\alpha-\eta+\alpha \eta)^2}{((\alpha+\gamma)(\gamma-\eta)+\alpha \eta e^{\gamma})^2}
\end{align}
and
\begin{align}
    \xi(\alpha,\eta) 
    &= -\gamma - 
    \ln \frac{\eta \alpha }{(\alpha+\gamma)(\eta-\gamma)}\\
    &=   
    \ln \left( 1 + \frac{\gamma}{\alpha} \right)
    + \ln \left(1-\frac{\gamma}{\eta} \right)
    - \gamma. 
\end{align}

To strike a balance between the exponents of $P(Y^{(k)}>-\zeta k)$ and $P(X^{(k)}\ge \zeta k)$, we let
\begin{align}
    \zeta \nu^{*} -  \xi(\bar{a},\bar{h}) = \zeta \ln(\frac{\epsilon}{\epsilon+\delta-\epsilon\delta}).
\end{align}
It gives us
\begin{align}
    \zeta = \frac{\xi(\bar{a},\bar{h})}{\ln(1+\frac{\gamma}{\bar{a}})  -\ln (\frac{\epsilon}{\epsilon+\delta-\epsilon\delta})} ,
\end{align}
such that 
\begin{align}
      P( Y^{(k)} > -\zeta k )
      \leq  f(\bar{a},\bar{h})(e^{-c(\bar{a},\bar{h})})^k,
\end{align}
and
\begin{align}
    P( X^{(k)} \ge \zeta k )
    \leq (1-\delta) (e^{- c(\bar{a},\bar{h})})^k,
\end{align}
with
\begin{align}
    c(\bar{a},\bar{h}) &= \xi(\bar{a},\bar{h})-\nu^{*}\zeta\\
     &= \xi(\bar{a},\bar{h}) (1  - \ln (1+\frac{\gamma}{\bar{a}}) (\ln (\frac{\epsilon}{\epsilon+\delta-\epsilon\delta}))^{-1}   )^{-1},
\end{align}
which can be seen from the exponential parameter of~\eqref{eq:PYk>-n < fe...}.

Denote 
\begin{align}
   b(\bar{a}, \bar{h}) &= f(\bar{a}, \bar{h})+1-\delta \\
   & = \frac{e^{5\gamma}\gamma^2(\bar{a}-\bar{h}+\bar{a} \bar{h})^2}{((\bar{a}+\gamma)(\gamma-\bar{h})+\bar{a} \bar{h} e^{\gamma})^2} +1 - \left(1-\bar{a}\frac{1+\bar{h}}{\bar{h}}\right)e^{-\bar{a}}.
\end{align}
We have 
\begin{align}
    P(X^{(k)} \geq \zeta k)+ P(Y^{(k)}> -\zeta k) 
    \leq b(\bar{a},\bar{h}) e^{-c(\bar{a},\bar{h})k},
\end{align}
which completes the proof of Theorem~\ref{th:Cher_upper}.

In Appendices~\ref{a:upper_depth} and~\ref{a:lower_depth}, we evaluate the error events in better accuracy than the Chernoff bounds allow us to develop a pair of upper and lower bounds for the probability of safety violation.  The results are quite easy to evaluate numerically.

\section{Bounding the Lead}
\label{a:lead}



The lead increases by at most one upon the mining of an A-block and decreases by at least one upon the publication of a jumper unless the lead is already zero, in which case it stays at 0 with the publication of a jumper.  Note that $L_0=0$.  On every sample path, there exists $u\in[0,s]$ with $L_u=0$ such that
\begin{align}
    L_s - L_u  
    &\le A_{u,s} - J_{u,s-\Delta} . \label{eq:L<=A-J}
\end{align}

If $L_s=0$, $u=s$ satisfies~\eqref{eq:L<=A-J}.  If $L_s>0$, $u$ is in the last interval of zero lead before $s$.
In general, $L_u=0$ and~\eqref{eq:L<=A-J} imply that
\begin{align}
    L_s
    &\le \sup_{q\in[0,s]} \{ A_{q,s} - J_{q,s-\Delta} \} \\
    &\le \sup_{q\in [0,s]} \{ A_{q,s} - P_{q+\Delta,s-\Delta} \} \label{eq:L<=A-P} \\
    &\le \sup_{r\in [0,s-\Delta)} \{ A_{s-\Delta-r,s-2\Delta} - P_{s-r,s-\Delta} \} + A_{s-2\Delta,s} . \label{eq:L<=APA}
\end{align}

where we have invoked Lemma~\ref{jp} to obtain~\eqref{eq:L<=A-P} and also used a change of variable $r=s-\Delta-q$ to obtain~\eqref{eq:L<=APA}.

Let us introduce a reverse pacer process, $P^{s-\Delta}_{0,r}$, which denotes the number of pacers mined during $(s-r,s-\Delta]$, i.e.,
\begin{align}
    P^{s-\Delta}_{0,r} = P_{s-r,s-\Delta} .
\end{align}

Due to the nature of the pacer process, one can show that 
$P^{s-\Delta}_{0,r}$ stochastically dominates $P_{0,r-\Delta}$
as a process.  Since $A$ is a memoryless Poisson process, $A_{s-\Delta-r,s-2\Delta}$ is statistically the same process as $A_{0,r-\Delta}$.
Recall also $A'$ is statistically identical to $A$.  
We can use~\eqref{eq:L<=APA} to show that $L_s$ is stochastically dominated in the sense that for every real number $x$,
\begin{align}
    &P( L_s \ge x ) \notag\\
    &\le  P\bigg( \sup_{r\in [0,s-\Delta)} \{ A_{s-\Delta-r,s-2\Delta} - P_{0,r-\Delta} \} + A_{s-2\Delta,s} \geq x \bigg) \\
   & \leq P\bigg( \sup_{r\in [0,s-\Delta)} \{A_{0,r-\Delta} - P_{0,r-\Delta} \} +A'_{t,t+2\Delta} \geq x \bigg) \\
   & \leq   P\bigg( \sup_{q\ge0}
    \{ A_{0,q} - P_{0,q} \} + A'_{t,t+2\Delta} \geq x \bigg)\\
   &=  P(L + A'_{t,t+2\Delta} \geq x) .
\end{align}

\section{Proof of Corollary~\ref{cor:tolerance}}
\label{s:cor:tolerance}

This proof follows Appendix~\ref{a:s:Y}. It suffices to show that for every $\epsilon>0$,
there exists $c>0$ such that
\begin{align}
    P(X^{(k)} \ge ck) &\leq \epsilon/2  
    \label{eq:PXk>} \\
    P(Y^{(k)} >-ck) &\leq \epsilon/2 
    \label{eq:PYk>}
\end{align}
for sufficiently large $k$.  
By Theorem~\ref{thm:balanced heights},~\eqref{eq:PXk>} holds for large enough $k$ for every $c>0$.  The remainder of this appendix is devoted to the proof of~\eqref{eq:PYk>}.

By~\eqref{eq:M+L+A4d} and the Chernoff bound, we have for 
$\nu>0$,
\begin{align}
    P( Y^{(k)} >-ck)
    &\leq P(L+M+A_{0,T+3\Delta} >(1-c)k) \label{eq:L+M+A+(1-c)k}\\
    &\leq E [e^{\nu(M+L+A_{0,T+3\Delta}+(c-1)k)}] \\
    &= E[ e^{\nu(A_{0,T+3\Delta}+(c-1)k)}]E[e^{\nu M}]E[e^{\nu L}]
    \label{eq:EAEMEL}
\end{align}
where $E[e^{\nu L}] = E[e^{\nu M}] = \mathcal{E}(e^\nu)$ is finite for every $\nu>0$
by Lemma~\ref{lm:LS_trans}, where $\mathcal{E}$ is defined as~\eqref{eq:E(r)}.

Only the first term on the right-hand side of
~\eqref{eq:EAEMEL} depends on $k$.  Given $t\ge0$, $A_{0,t}$ is a Poisson random variable with mean $at$, whose MGF is
\begin{align} \label{eq:mgf:poisson}
    E[e^{\nu A_{0,t}}] = e^{\theta t}
\end{align}
where we have defined
\begin{align}
    \theta = \left(e^\nu - 1\right) a
\end{align}
for convenience.
By definition, $T$ is equal to $k\Delta$ plus the sum of $k$ i.i.d.\ exponential($h$) variables.  The MGF of an exponential random variable $Z$ with parameter $h$ is
\begin{align} \label{eq:mgf:exp}
    E[e^{\eta Z}] = \frac1{1-\eta/h} .
\end{align}
Using~\eqref{eq:mgf:poisson},~\eqref{eq:mgf:exp}, and the definition of $T$, we can write
\begin{align}
    E[ & e^{\nu(A_{0,T+3\Delta}+(c-1)k)}] \notag \\ 
    &= E[ E[e^{\nu(A_{0,T+3\Delta})}|T] ] \cdot e^{\nu(c-1)k} \\
    &= E[ e^{\theta(T+3\Delta)} ] \cdot e^{\nu(c-1)k} \\
    &= \left( E[ e^{\theta Z} ] \right)^k \cdot e^{\theta(3+k)\Delta} \cdot e^{\nu(c-1)k}\\
    &= e^{3\theta\Delta} ( g(\nu,c) )^{-k} \label{eq:g^k}
\end{align}
where
\begin{align} \label{eq:under k}
    g(\nu,c)
    &= \left( 1-\frac{\theta}h \right)
    \left( 1+\frac{\theta}a \right)^{1-c}
    e^{-\theta\Delta} . 
\end{align}

We let
\begin{align} \label{eq:c=/4}
    c =
    \frac{a}2 \left( \frac1a - \frac1h - \Delta \right) ,
\end{align}
which is strictly positive when~\eqref{eq:repeat a<} holds, i.e., when the parameters are within the fault tolerance.
Using Taylor series expansion, we have
\begin{align}
    g(\nu,c)
    &=
    1 + \left( \frac{1-c}{a} - \frac1{h} - \Delta \right) \theta + o(\theta) \\
    &=
    1 + \frac12 \left( \frac1a - \frac1h - \Delta\right) \theta + o(\theta) . \label{eq:gnuc1}
\end{align}
The coefficient of the first-order term in~\eqref{eq:gnuc1} is strictly positive due to~\eqref{eq:repeat a<}. 
Hence there exists $\theta>0$ for which $g(\nu,c)>1$,
so that the right hand side of~\eqref{eq:g^k} and thus also the right hand side of~\eqref{eq:EAEMEL} vanish as $k\to\infty$. 
For sufficiently large $k$,~\eqref{eq:PYk>} holds.  
The proof of Corollary~\ref{cor:tolerance} is thus complete.

\section{Proof of Theorem~\ref{th:upper_depth}}\label{a:upper_depth}

To derive a tighter upper bound in the case of confirmation by depth, we evaluate $P(Y^{(k)}>-n)$ without resorting to the Chernoff bound.  Starting from~\eqref{eq:M+L+A4d} in Appendix~\ref{s:Cher_upper}, we use the law of total probability to write
\begin{align}
    P( & A_{0,T+3\Delta} \leq -n+k-(M+L)) \notag \\
    =& \sum_{i=0}^{k-n}P(L=i)(\sum_{j=0}^{k-n-i}P(M=j) P(A_{0,T+3\Delta} \leq -n+k-(i+j)))\\
    =& \sum_{i=0}^{k-n}P(L=i)\sum_{j=0}^{k-n-i}P(M=j)\int_{0}^{\infty}P(A_{0,T+3\Delta} \leq -n+k-(i+j)|T = t)f_{T}( t)\, dt\label{eq:PA-int}\\
    =& \sum_{i=0}^{k-n}P(L=i)\sum_{j=0}^{k-n+1-i}P(M=j)\int_{0}^{\infty}P(A_{0,t+3\Delta} \leq -n+k-(i+j))f_{T}( t)\, dt .
\end{align}
Note the pdf of the $k$-st new pacer's mining time $T$ , $f_{T}(t)$ can be driven by taking the derivative of its cdf , which is
\begin{align}
    F_{2}(t-k\Delta;k,h),
\end{align}
where $F_{2}$ denotes the cdf of Erlang distribution.
Now we only have to calculate the last term $P(A_{0,t+3\Delta} \leq -n+k-(i+j))$, which equals
\begin{align}
 F_1(-n+k-(i+j);a(t+3\Delta))).
\end{align}

Finally, we have 
\begin{align}
 &   P(Y^{(k)}\leq -n)\notag \\
    \geq & \sum_{i=0}^{k-n}P(L=i)\sum_{j=0}^{k-n-i}P(M=j)\int_{0}^{\infty}F_1(-n+k-(i+j);a(t+3\Delta))f_{2}(t-k\Delta;k,h)\, dt\label{eq:PY<n}\\
  =& \sum_{i=0}^{k-n}e(i)\sum_{j=0}^{k-n-i}e(j) \int_{0}^{\infty}F_1(-n+k-(i+j);a(t+3\Delta)) \cdot f_{2}(t-k\Delta;k,h)\, dt.\label{eq:PY<n_e}\
\end{align}
Thus, according to~\eqref{eq:inf(FX+FY)}, the probability of violation is upper bounded by
\begin{align}
    \inf_{n}\{2-P(Y^{(k)} \leq -n)-P(X^{(k)}\leq n-1)\},
  \label{eq:P(F_sk)}
\end{align}
which equals~\eqref{eq:upper_depth}. This completes the proof of Theorem~\ref{th:upper_depth}.

\section{Proof of Theorem \ref{th:lower_depth}}\label{a:lower_depth}
Consider the same attack as the private-mining attack in Definition~\ref{def:private-mining} except that we let the propagation delays of all H-blocks be zero before time $s$ for simplicity.  The adversary mines privately from  $s$ and delays all new H-blocks maximally.
Let the first H-block mined after $s$ be called the first new pacer (this takes an exponential time), then the first H-block mined at least $\Delta$ seconds after that be called the second new pacer, and so on.  Let $P'_r$ denote the number of new pacers mined from $s$ up until time $s+r$.  Evidently, $P'_r$ is identically distributed as $P_{0,r+\Delta}$. 

A violation occurs if 
there exists $d\geq s+\tau$ such that 1) the lead $L_s$ plus the number of A-blocks 
$A_{s,d}$ is no fewer than $P'_{d-s-\Delta}$, the number of new pacers mined between $s$ and $d-\Delta$ (any additional ones mined after that is not yet public);
and 2) at least one pacer arrives by time $d$ to be the target block, i.e., 
 $P'_{d-s-\Delta} \geq 1$. 
Hence the probability of violation is lower bounded by
\begin{align}
    P & \left( 
    \cup_{d \geq s+\tau}\left\{L_{s} + A_{s,d}  \geq 
P'_{d-s-\Delta} \ge 1
\right\} 
 \right)\label{eq:L+A>P>1} ,
 \end{align}
 where $\tau$ is the time it takes for $k$ such new pacers to be mined after $s$. 
 For $k \geq 1$, $P'_{d-s-\Delta} \geq 1$.
Since $L_{s}$, $A_{s,d}$ and  $P'_{d-s-\Delta}$ are independent, we have the following similar in Appendix~\ref{s:Cher_upper},
\begin{align}
  &  P(\bigcup_{d \geq s+\tau}\left\{L_{s} + A_{s,d} \geq 
P'_{d-s-\Delta}
\right\})\notag \\
& = P(L_{s}+\max_{d \geq s+\tau}\{A_{s,d} - P'_{d-s-\Delta}\} \geq 0)\\
& = P(L_{s}+\max_{d \geq \tau}\{A'_{0,d} - P'_{d-\Delta}\} \geq 0)\\
& \geq  P(L_{s}+A'_{0,\tau} + \max_{d \geq \tau}\{A'_{\tau,d} - P'_{0,d-\tau}\} -k\geq 0)\label{eq:change P'to k}\\
&= P(0\leq A'_{0,\tau}+M+L_{s}-k)\label{eq:P'<A'+M+Ldepth},
\end{align}
where~\eqref{eq:change P'to k} is due to that there are $k$ new pacers during $(s,\tau]$. $M$ in~\eqref{eq:P'<A'+M+Ldepth} is defined in~\eqref{eq:define M}.
$P'$ and $P$ are i.i.d. 
 Thus,~\eqref{eq:P'<A'+M+Ldepth} can be written as
\begin{align}
\begin{split}
   P&(\bigcup_{d \geq s+\tau}
   \left\{L_{s} + A_{s,d} \geq P'_{d-s+\Delta}\right\}) \\
   & \geq 1-P(A'_{0,\tau} \leq k-1-M-L_{s})     .
\end{split}
\end{align}
Now, we want to calculate $P(A'_{0,\tau}\leq k-1-L_{s}-M)$ .
We have
\begin{align}
   & P(A'_{0,\tau}\leq k-1-L_{s}-M) \notag \\
   =& \sum_{i=0}^{k}\sum_{j=0}^{k-i}P(A'_{0,\tau}\leq k-1-L_{s}-M)P(M=j)P(L_{s} = i),
\end{align}
where $P(A'_{0,\tau}\leq k-1-L_{s}-M)$ can be calculated as 
\begin{align}
    \sum_{i=0}^{k} \Big(1-\frac{a}{h}\Big) \Big(\frac{a}{h}\Big)^i
    \sum_{i=0}^{k-i}e(j)\int_{0}^{\infty}P(A_{0,t}\leq k-1-i-j)f_{\tau}(t)\, dt,
\end{align}
where $\tau$ has cdf
$
    F_2(t-k\Delta;k,h)
$, 
and
\begin{align}
    P(A_{0,t}\leq k-1-i-j) = F_1(k-1-i-j;at).
\end{align}
Finally, the probability that the target transaction's safety is violated is no large than
\eqref{eq:lower_depth}. To ease computation, we can also relax the infinite integral to finite integral as 
\begin{align}
\begin{split}
   & \int_{0}^{\infty} F_1(k-1-(i+j);at) \cdot f_{2}(t-k\Delta;k,h)\, dt \\
   & \leq 
    \int_{0}^{T} F_1(k-1-(i+j);at) \cdot f_{2}(t-k\Delta;k,h)\, dt \\
    &\quad +\Bar{F}_2(T;k,h)\label{eq:relax_depth_lower} .    
\end{split}
\end{align}

\section{Proof of Theorem~\ref{th:upper_time}}
\label{a:upper_time}


There are ample similarities between confirmation by time and confirmation by depth.  For example, the distribution of the adversarial lead $L_s$ remains the same.  Also, we shall express the error event in terms of the race between the pacers and the A-blocks, where the latter is helped by the lead and the balanced heights.  However, the starting time of the error events of interest is different than in the case of confirmation by depth.  

\subsection{An Enlarged Violation Event} 
    
\begin{Def}[Unbalanced jumper]
  We let $\hat{J}_{s,t}$ denote the number of jumpers mined during $(s,t-\Delta]$ which are unbalanced.  
\end{Def}
The sequence of jumpers includes all candidates, some of which may be balanced. Whether a jumper mined at time $s$ is balanced or unbalanced is known by time $s+\Delta$.

\begin{lemma}
\label{disag}
    Suppose $r' \geq r \geq s\geq 0$. If an $r$-credible chain and an $r'$-credible chain disagree on a target block that is 
 mined by time $s$, then 
    \begin{align}\label{eq:disagree_rr'}
    L_{s} + A_{s,r'}  \geq \hat{J}_{s,r}
    \end{align}
\end{lemma}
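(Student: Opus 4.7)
\textbf{The plan} is to associate each of the $\hat{J}_{s,r}$ unbalanced jumpers with a distinct A-block at its own height (mined by time $r'$), and then to account for these A-blocks by partitioning them into those mined by time $s$ (bounded by $L_s$) and those mined afterward (bounded by $A_{s,r'}$). I would begin by noting that every unbalanced jumper in $(s, r-\Delta]$ becomes public by time $r$, so both $c$ and $c'$ have height at least that of any such jumper. Because agreement on the target transaction has only two equivalence classes (as asserted in the paper) and $c, c'$ disagree, they lie in opposite classes; hence, for each unbalanced jumper $j_k$ at height $n_k$, at least one of $c, c'$---call it $c^\dagger_k$---lies in the class opposite to $j_k$, so that its height-$n_k$ block cannot be $j_k$ itself.

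Next I would identify an A-block at height $n_k$. If the height-$n_k$ block in $c^\dagger_k$ is already an A-block, we are done; otherwise it is an H-block $b$, which must be mined strictly after $t_{j_k}$ since $j_k$ is the first H-block on height $n_k$. Crucially, $b$ cannot be mined after $j_k$ becomes public: by then $j_k$'s chain of height $n_k$ is public, so every credible chain must be of height at least $n_k$, ruling out $b$'s parent being credible at height $n_k - 1$. Therefore $b$ was mined during $j_k$'s candidacy, serving as a disagreeing H-block on height $n_k$; because $j_k$ is unbalanced, the definition of balanced height forces a concurrent A-block to also have been mined on height $n_k$ during the candidacy. In either case, an A-block exists at height $n_k$ by time $r'$. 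Summing over all unbalanced jumpers yields $\hat{J}_{s,r}$ distinct A-blocks on distinct heights, all strictly above the highest H-block at time $s$ (since the jumpers are mined after $s$). Pre-mined A-blocks (mined by $s$) above that maximum occupy at most $L_s$ distinct heights by the definition of $L_s$, while the remainder are counted by $A_{s,r'}$, yielding the bound $\hat{J}_{s,r} \le L_s + A_{s,r'}$.

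The trickiest step will be the timing argument that rules out an H-block at height $n_k$ mined after $j_k$'s publication; this hinges on the interplay between the credibility requirement for H-block parents and the fact that $j_k$'s chain is already public at full height $n_k$. A secondary subtlety is the reliance on the binary equivalence-class structure of agreement (rather than a direct analysis of shared chain prefixes) to guarantee that each unbalanced jumper is opposed at its own height by at least one of $c, c'$ rather than merely somewhere later in the chain.
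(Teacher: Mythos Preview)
Your plan follows the same route as the paper's proof: find an A-block at each unbalanced jumper's height via a two-case split (either the opposing chain already has an A-block there, or a second disagreeing H-block during the candidacy forces one by the definition of ``unbalanced''), then partition those A-blocks between pre-mining ($\le L_s$ of them on distinct heights above $\eta$) and post-$s$ mining ($\le A_{s,r'}$). The timing argument that rules out an H-block at height $n_k$ mined after $j_k$ becomes public, and the final height-counting, are both sound and match the paper.

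There is one genuine missing ingredient. The inference ``$c^\dagger_k$ lies in the class opposite to $j_k$, so its height-$n_k$ block cannot be $j_k$ itself'' does not follow from the two-class structure alone: a chain can disagree with one of its own prefixes when the prefix does not yet carry the transaction but the longer chain includes it at a later height. What closes this gap is Definition~\ref{def:inclusion}: since the transaction is public at $s$ and $j_k$ is an H-block mined after $s$, the chain $j_k$ carries the transaction; any extension of $j_k$ then carries it in the same block at the same height and hence agrees with $j_k$, so a disagreeing $c^\dagger_k$ cannot pass through $j_k$. The paper invokes exactly this definition to assert disagreement on all jumper heights. The same observation also fills the implicit gap in your phrase ``during $j_k$'s candidacy'': once $j_k$ carries the transaction, the parent of the disagreeing H-block $b$ (at height $n_k-1$, mined before $j_k$ is public) must itself disagree with $j_k$, so $j_k$ is indeed a balancing candidate and the unbalanced hypothesis applies.
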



\begin{proof} 
    If $s \geq r-\Delta$, we have $\hat{J}_{s,r} =0$, so~\eqref{eq:disagree_rr'} holds trivially.  We assume $s < r-\Delta$ in the remainder of this proof.
    Since the target block is mined by time $s$,
    there must be disagreements on all heights of jumpers mined from $s$ to $r-\Delta$, i.e.,
    on no fewer than $J_{s,r-\Delta}$ heights. Let us denote the public height at time $s$ as height $\eta$. Every unbalanced jumper mined during $(s,r-\Delta]$ must meet either one of the following two conditions:

(1) All honest blocks on their height are in agreement, so all blocks on that height in a disagreeing branch must be adversarial;

(2) Some honest blocks on their height disagree, but at least one A-block is mined on that height by time $r'$.

In either case, at least one A-block must be mined by $r'$ on the height of every unbalanced jumper mined during $(s,r-\Delta]$. And the number of those unbalanced jumpers mined during $(s,r-\Delta]$ is $\hat{J}_{s,r}$.

The adversary needs to mine at least on $\hat{J}_{s,r}$ heights of  unbalanced jumpers. Due to the pre-mining phase, at time $s$, the highest A-block has a height $\eta+L_{s}$, which is at most $L_{s}$ higher than the highest honest block, which means at most $L_{s}$ of A-blocks can be mined before time $s$ on the heights of unbalanced jumpers. The rest A-blocks can only be mined after time $s$, the number of which is  $A_{s,r'}$.  Therefore, the disagreement about the target transaction implies the inequality~\eqref{eq:disagree_rr'}.  
Hence the lemma is proved.
\end{proof}

\begin{lemma} 
Fix $\epsilon \in (0,\Delta)$, barring the following event 
\begin{align}
F_{s,t}^{\epsilon}=\bigcup_{d \geq t+s}\left\{L_{s} + A_{s,d}  \geq J_{s,d-\epsilon-\Delta} - X
\right\}\label{error}
\end{align}
all $r'$-credible chains agree on the target block for all $r' \geq t+s$.
\end{lemma}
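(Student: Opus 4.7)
The plan is to argue the contrapositive: if for some $r' \geq t+s$ there exist two $r'$-credible chains that disagree on the target transaction, then the event $F_{s,t}^{\epsilon}$ must occur. Suppose such a disagreement exists at some credibility time $r' \geq t+s$. The first step is to invoke Lemma~\ref{disag} with both time parameters taken equal to the common value $r'$, i.e.\ set $r = r'$ in the hypothesis $r' \geq r \geq s$. Since the target transaction is public at time $s$ and both disagreeing chains are $r'$-credible, this immediately yields $L_s + A_{s,r'} \geq \hat{J}_{s,r'}$.

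The second step is to convert the unbalanced-jumper count $\hat{J}_{s,r'}$ into the plain-jumper count $J_{s,r'-\Delta}$ minus a deficit absorbed by $X$. Every balancing candidate is, by construction, the first H-block on its height and hence a jumper; moreover distinct balanced heights carry distinct balanced jumpers. Consequently the number of balanced jumpers mined in $(s, r'-\Delta]$ is at most the total balanced-height count $X$, giving $\hat{J}_{s,r'} \geq J_{s,r'-\Delta} - X$. Chaining this with the previous inequality produces $L_s + A_{s,r'} \geq J_{s,r'-\Delta} - X$.

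The final step is to exhibit a witness $d$ certifying membership in $F_{s,t}^{\epsilon}$. Choose $d = r' \geq t+s$. Since $\epsilon > 0$, monotonicity of the jumper counting process gives $J_{s,d-\epsilon-\Delta} \leq J_{s,d-\Delta}$, so
\begin{align}
L_s + A_{s,d} \;\geq\; J_{s,d-\Delta} - X \;\geq\; J_{s,d-\epsilon-\Delta} - X,
\end{align}
which is precisely the condition defining the $d$-slice of $F_{s,t}^{\epsilon}$ in~\eqref{error}. This completes the contrapositive.

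The only delicate bookkeeping is the second step: one must verify that the global balanced-height count $X$ is a valid upper bound on the number of balanced jumpers falling in the sub-window $(s, r'-\Delta]$, which is safe because $X$ tallies balanced heights over the full horizon, so restricting to a sub-window can only decrease the quantity being bounded. Strict positivity of $\epsilon$ plays no essential role in this particular lemma (the argument also works at $\epsilon = 0$); it is presumably reserved for downstream steps that probabilistically bound $P(F_{s,t}^{\epsilon})$ using the pacer–jumper renewal structure, where a positive slack in the jumper index is convenient.
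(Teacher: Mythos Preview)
Your argument is correct and is in fact more economical than the paper's. The paper proves the lemma by fixing the anchor time $r=t+s$, comparing an $r'$-credible chain against an $r$-credible chain via Lemma~\ref{disag}, and then crawling forward in $\epsilon$-sized windows by induction; the step size $\epsilon$ is exactly what lets the paper pass from $\hat J_{s,r}$ to $J_{s,r'-\epsilon-\Delta}$ when $r'-r\le\epsilon$. You instead invoke Lemma~\ref{disag} with both chains taken at the \emph{same} time $r'$, which gives $L_s+A_{s,r'}\ge \hat J_{s,r'}\ge J_{s,r'-\Delta}-X\ge J_{s,r'-\epsilon-\Delta}-X$ in one shot, with no induction and with $\epsilon$ entering only through the harmless monotonicity step at the end. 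Your reading of the lemma statement (for each fixed $r'\ge t+s$, all $r'$-credible chains agree) matches the literal wording, and your proof establishes it cleanly.

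One point worth flagging: the paper's inductive proof actually delivers the ostensibly stronger conclusion that credible chains at \emph{different} times $\ge t+s$ all agree, which is what the safety-violation definition (an $r$-credible chain versus an $r'$-credible chain with $r'\ge r$) directly calls for. Your version implies this too---once all simultaneously credible chains agree at every instant, a newly mined credible chain must agree with the still-credible incumbents, so the common equivalence class cannot change over $[t+s,\infty)$---but neither you nor the paper spells out that reduction. If you want your write-up to plug straight into the downstream safety bound, a one-line remark to that effect would close the loop.
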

\begin{proof}
Let $r = t+s$.
We first establish the result for $r' \in [r , r + \epsilon ]$ by contradiction. Then prove the lemma by induction.
Fix arbitrary $r' \in [r , r + \epsilon ]$. If an $r'$-credible chain and an $r$-credible chain disagree, then we have the following by Lemma \ref{disag}:
\begin{align}
     L_{s} + A_{s,r'}  &\geq \hat{J}_{s,r} \\
     &\geq \hat{J}_{s,r'-\epsilon}\\
    &\geq J_{s,r'-\epsilon-\Delta}-X\label{contrad}
\end{align}
where~\eqref{contrad} is due to that the number of balanced candidates mined during $(s,r'-\epsilon]$ is smaller than $X$.
If~\eqref{contrad} occurs, 
$F_{s,t}^{\epsilon}$ also occurs. So this implies that if $F_{s,t}^{\epsilon}$ does not occur, every $r'$-credible chain agrees with the $r$-credible chain, proving the lemma in this case.

Now suppose this lemma holds for $r' \in [r , r + n\epsilon ]$ for some positive integer $n$. We show the lemma also holds for $r' \in [r , r + (n+1)\epsilon]$ as the following: First, let $r'' = r+n\epsilon$.
A repetition of the contradiction case above with $r$ replaced by $r''$ implies that the lemma also holds for $r' \in [r'',r''+\epsilon]$. Therefore lemma holds for $r' \in [r , r + (n+1)\epsilon]$. The lemma is then established by induction on $n$.
\end{proof}

By Lemma \ref{jp}, we have
\begin{align}
        J_{s,d-\epsilon-\Delta} -  X
        &\geq P_{s+\Delta,d-\epsilon-\Delta} - X.
        \label{eq:JXPX}
\end{align}
By~\eqref{error} and~\eqref{eq:JXPX}, we have
\begin{align}
    F_{s,t}^{\epsilon} \subset \bigcup_{d \geq s+t}\left\{L_{s} + A_{s,d}  \geq P_{s+\Delta,d-\epsilon-\Delta} -  X\right\} .
\end{align}
Hence
\begin{align}
    P(F_{s,t}^{\epsilon}) &\leq P(\bigcup_{d \geq s+t}\left\{L_{s} + A_{s,d}  \geq P_{s+\Delta,d-\epsilon-\Delta} -  X\right\}) \\
    &= P(\sup_{d \geq s+t}\{L_{s} + A_{s,d} -P_{s+\Delta,d-\epsilon-\Delta} \}+X \geq 0),
\end{align}
where equality is by the definition of the union.

Let us define
\begin{align}
    Y_t = \sup_{d \geq s+t}\{L_{s} + A_{s,d} -P_{s+\Delta,d-\epsilon-\Delta} \} .
\end{align}
We apply Lemma~\ref{lm:x+y} to obtain
    \begin{align}
    P(F_{s,t}^{\epsilon}) \leq \inf_{n \in \{1,2,... \}}\{ P(X \ge n) + P(Y_t > 
    -n)\}\label{eq:PX+PY}
\end{align}
where it suffices to take the infimum over the natural number set only.


\subsection{Probability Bound}
\label{upper_bound}

We next upper bound $P(Y_t>-n)$ using properties of the adversarial mining process, the jumper process, as well as Kroese's results on the difference of renewal processes.  It should not be surprising that the bounds are essentially exponential in the confirmation time when evaluated numerically.

We define 
\begin{align} \label{eq:T}
    T = \min\{t \in [s,s+\Delta]: P_t = P_{s+\Delta}\},
\end{align}
which represents the last time a pacer is mined by $s+\Delta$.
Evidently, 
$T$ is a renewal point, conditioned on which the subsequent pacer arrival times are 
independent of all mining times before $T$.
In particular, $P_{T,T+r}$ is identically distributed as $P'_{0,r}$ for all $r\ge0$.
Also,
the lead $L_s$ and the pacer process $P_{T,d-\epsilon-\Delta}$ are independent conditioned on $T$.  Moreover, $A_{s,d}$ is independent of $T$, $L_s$, and the pacer process.
We evaluate $P(Y_t\le-n)$ by conditioning on and then averaging over $T$:
\begin{align}
    &P(Y_t \leq -n) \notag\\
    &=
    \mathbb{E}[ P(Y_t \leq -n|T) ] \\
    &=
    \mathbb{E}[ P(\sup_{d \geq s+t}\{L_{s} + A_{s,d} -P_{T,d-\epsilon-\Delta} \} \leq -n|T) ] \\
    &=
    \mathbb{E}[ P(\sup_{d \geq s+t}\{L_{s} + A_{s,d} -P'_{0,d-\epsilon-\Delta-T} \} \leq -n|T) ] .
    \label{eq:ELAP}
\end{align}

By~\eqref{eq:T},
$    d-\epsilon-\Delta-T \geq d-s-\epsilon-2\Delta$,
so~\eqref{eq:ELAP} implies
\begin{align}
   & P(Y_t\le-n) \notag \\
     &\geq
     \mathbb{E} [ P(\sup_{d \geq s+t}\{L_{s} + A_{s,d} -P'_{0,d-s-\epsilon-2\Delta} \} \leq -n|T) ] \label{eq:replace_P}\\
     &=P(\sup_{d \geq s+t}\{L_{s} + A_{s,d} -P'_{0,d-s-\epsilon-2\Delta} \} \leq -n)
     \label{eq:Psupd}
\end{align}
where we drop conditioning on $T$ because all other random variables in~\eqref{eq:replace_P} are independent of $T$.
As a Poisson random variable, $A_{s,d}$ is identically distributed as $A'_{0,d-s}$.
Letting $q = d-s$, we rewrite~\eqref{eq:Psupd} as
\begin{align}
   & P(Y_t\le-n)\notag \\
    &\geq
    P(\sup_{q\geq t}\{L_{s} + A'_{0,q} -P'_{0,q-\epsilon-2\Delta} \} \leq -n)\\
    &=
     P(\sup_{q\geq t}\{L_{s} + A'_{0,t} +A'_{t,q}-P'_{0,t-\epsilon-2\Delta}-P'_{t-\epsilon-2\Delta,q-\epsilon-2\Delta} \} \leq -n).
     \label{eq:AP'}
\end{align}
We 
further replace $A'_{t,q}$ by $A''_{0,q-t}$ in~\eqref{eq:AP'}.  
We also note that $P'_{t-2\Delta-\epsilon,q-2\Delta-\epsilon}$ 
statistically 
dominates 
$P''_{0,q-t}$. 
So, we rewrite~\eqref{eq:AP'} as
\begin{align}
   & \quad P(Y_t\le-n) \notag\\
    &
    \geq
    P \left( \sup_{q\geq t}\{L_{s} + A'_{0,t} +A''_{0,q-t}-P'_{0,t-\epsilon-2\Delta}-P''_{0,q-t} \} \leq -n \right) \\
    &=
    P \left( L_{s} + A'_{0,t} - P'_{0,t-\epsilon-2\Delta}
    + \sup_{q\geq t}\{ A''_{0,q-t}-P''_{0,q-t} \} \leq -n \right) \\
    & =
    P \left( L_{s} + A'_{0,t}-P'_{0,t-\epsilon-2\Delta}+M \leq -n \right )\label{eq:final_lowerbound} \\
    &=
    1 - P(P_{0,t-\epsilon-2\Delta}' \leq L_s + A_{0,t}' + M + n-1)\label{eq:upperbound_Y}
\end{align}
where
\begin{align} \label{eq:M=}
    M = \sup_{r\geq 0}\{A''_{0,r}-P''_{0,r}\}.
\end{align}
As the maximum difference between two renewal processes, $M$ has the pmf~\eqref{eq:e(i)} by~\cite[Theorem 3.4]{Kroese_difference} (see also~\cite[Sec.~6]{li2021close}).
given by~\eqref{eq:e(i)}.

Thus, by~\eqref{eq:PX+PY} and~\eqref{eq:upperbound_Y}, we have 
\begin{align}
\begin{split}
\label{eq:1-X+PL}
    P(F_{s,t}^{\epsilon}) &\leq \inf_{n\in \{1,2,...\}}\{P^0(X\ge n)  +P(P_{0,t-\epsilon-2\Delta}' \leq A_{0,t}'+M+L_s+n-1)\}.    
\end{split}
\end{align}
Taking the limit of $\epsilon\to0$, we use continuity to obtain
\begin{align}
\begin{split}
    \lim_{\epsilon \rightarrow 0} P(F_{s,t}^{\epsilon}) &\leq \inf_{n\in \{1,2,...\}}\{P^0(X\ge n) +P(P_{0,t-2\Delta}' \leq A_{0,t}'+M+L_s+n-1)\}\label{eq:1-PX+P}.    
\end{split}
\end{align}
The second term on the right-hand side of~\eqref{eq:1-PX+P} is essentially the probability that the number of pacers mined during $(0,t-2\Delta]$ is fewer than the number of blocks mined by the adversary over a duration of $t$, plus the adversarial advantage after t, plus the adversary's pre-mining gain, plus the number of balanced heights.  The analysis thus far allows us to express this in terms of mutually independent variables $P'_{0,t-2\Delta}$, $A'_{0,t}$, $M$, and $L_s$.  We analyze their individual distributions in the remainder of this subsection.

Let
$L$ denote an independent and identically distributed random variable as $M$.
In Appendix~\ref{a:lead}, we show that $L_s$ is stochastically dominated by $L+A'_{t,t+2\Delta}$.  Hence~\eqref{eq:1-PX+P} still holds with $L_s$ replaced by $L+A'_{t,t+2\Delta}$ to yield
\begin{align}
\begin{split}
    \lim_{\epsilon \to 0}
    P(F_{s,t}^{\epsilon}) \leq&
    \inf_{n\in \{1,2,...\}}\{P^0(X\ge n) + P(P_{0,t-2\Delta}' \leq A_{0,t+2\Delta}'+M+L+n-1)\}
    \label{eq:upperbound_SV}    
\end{split}
\end{align}
where $P'$, $A'$, $M$, and $L$ are independent.

We now invoke the preceding results to prove Theorem~\ref{th:upper_time}.
Basically, we evaluate the upper bound of the probability of violation using 
the distributions (or their bounds) developed for $X$, $M$, and $L$.
Recall the inter-pacer arrival times are $\Delta$ plus exponential variables.  We have in general
\begin{align}
    P( P'_{0,t} \le i-1 )
    &= P( i\Delta + X_1 + \cdots + X_i > t )
\end{align}
where $X_1,X_2,\dots$ are i.i.d.\ exponential with mean $1/h$.  The sum $X_1+\cdots+X_i$ has the Erlang distribution of shape $i$ and rate $h$.  Hence
\begin{align}
    P( P'_{0,t} \le i-1 )
    = 1 - F_2(t-i\Delta; i, h)
\end{align}
where $F_2(\cdot;i,h)$ denotes the Erlang cdf with shape parameter $i$ and rate $h$.

By 
independence 
of those random variables, we have
\begin{align}
    \quad &P(P_{0,t-2\Delta}' \leq A_{0,t+2\Delta}'+M+L+n-1)\notag\\  =&\sum_{i=0}^{\infty}\sum_{j=0}^{\infty}\sum_{k=0}^{\infty}P(A_{0,t+2\Delta}'=i)P(M=j)P(L=k)\cdot P(P_{0,t-2\Delta}' \leq i+j+k+n-1)\\    =&\sum_{i=0}^{\infty}\sum_{j=0}^{\infty}\sum_{k=0}^{\infty}
    f_1(i;\mya (t+2\Delta))e(j)e(k)\cdot (1-F_2(t-(i+j+k+n+2)\Delta;i+j+k+n,\myh))
    \label{eq:infinite_upperbound},
\end{align}
where $f_1(\cdot;\omega)$ denotes the pmf of a Poisson distribution with mean $\omega$.  (The cdf of $P'_{0,t}$ has been evaluated in \cite[Sec.~6.1]{li2021close}.)  It is easy to see that~\eqref{eq:infinite_upperbound} is equal to $G(n,t)$ defined in~\eqref{eq:Gnt}.

By~\eqref{eq:upperbound_SV}, 
the safety of the said transaction confirmed using latency $t$ can be violated with a probability no higher than 
\begin{align}
    \inf_{n \in \{1,2,\dots\}}\{P^0(X\ge n)+G(n,t)\},
\end{align}
which is equal to~\eqref{eq:upper}. 

Since it is numerically inconvenient to compute infinite sums, we further derive an upper bound using only finite sums, which is (slightly) looser than~\eqref{eq:infinite_upperbound}.
For all positive integers $i_0,j_0,k_0$, it is easy to verify that
\begin{align}
    &\quad P(P_{0,t-2\Delta}' \leq A_{0,t+2\Delta}'+M+L+n-1) \notag \\  
    &\leq
    1-P(A_{0,t+2\Delta}'\leq i_0)P(M\leq j_0)P(L\leq k_0)  + P(P_{0,t-2\Delta}' \leq A_{0,t+2\Delta}'+M+L+n,A_{0,t+2\Delta}'\leq i_0,  M\leq j_0,L\leq k_0)\\
    &\leq   \overline{G}(n,t),
    \label{eq:upper_finite}
\end{align}
where $\overline{G}(n,t)$ is calculated as~\eqref{eq:Gnt_}.


Finally, we have that the safety of a given transaction confirmed using latency $t$ can be violated with a probability no higher than 
\begin{align}
\begin{split}
    &\inf_{n \in \{1,2,\dots\}}\{P^0(X\ge n)+\overline{G}(n,t)\}  =
    \inf_{n \in \{1,2,\dots\}}\{1-F(n-1)+\overline{G}(n,t)\}.    
\end{split}
\end{align}

\section{Proof of Theorem \ref{th:lower_time}}
\label{a:lower_time}
In this section, we lower bound the probability of safety violation 
achieved by a specific attack.  The attack is the same as described in Appendix~\ref{a:lower_depth}.
The first new pacer is the first H-block mined after $s$.
Thus the probability that the adversary wins is lower bounded by
\begin{align} \label{eq:LAP'}
    P & \left( 
    \bigcup_{d \geq s+t}\left\{L_{s} + A_{s,d}  \geq 
P'_{d-s-\Delta} \ge 1
\right\} 
 \right) .
 \end{align}
It is not difficult to see that~\eqref{eq:LAP'} is further lower bounded by
\begin{align}
\begin{split}
\label{eq:PU-PU}
P\left( \bigcup_{d \geq s+t}\left\{L_{s} + A_{s,d} \geq 
P'_{d-s-\Delta}
\right\} \right) -
    P \left( \bigcup_{d \geq s+t} \left \{P'_{d-s-\Delta} =0\right \} \right).
\end{split}
\end{align}

The second term in~\eqref{eq:PU-PU} is easily determined as
\begin{align}
    P(  \bigcup_{d \geq s+t} \left \{P'_{d-s-\Delta} =0\right \})
   & =
     P( P'_{t-\Delta} =0)\\
  & =
  e^{-h 
  (t-\Delta)} .
\end{align}

It remains to bound the first term in~\eqref{eq:PU-PU}.
Since $L_{s}$, $A_{s,d}$ and $P'_{d-s-\Delta}$
are independent, we can replace $A_{s,d}$ with identically distributed and independent $A'_{0,d-s}$ and replace $P'_{d-s-\Delta}$
with i.i.d. $P''_{0,d-s}$ in~\eqref{eq:LAP'} to obtain
\begin{align}
    P
    & (\bigcup_{d \geq s+t}\left\{L_{s} + A_{s,d}  \geq P'_{d-s-\Delta}\right\})\notag\\
    &= P(\bigcup_{d \geq s+t}\left\{ L_{s} + A'_{0,d-s}  \geq P''_{0,d-s}\right\})\label{eq:change_1}\\
  &=
    P(\bigcup_{q \geq t}\left\{L_{s} + A'_{0,t}+A'_{t,q}\geq P''_{0,t+\Delta}+P''_{t+\Delta,q}\right\}) . \label{eq:change_2}
\end{align}
Evidently, $A'_{t,q}$ is identically distributed as $A''_{0,q-t}$.  Moreover, as a delayed renewal processes with identical inter-arrival times, $P''_{t+\Delta,q}$ has statistically a larger delay than 
$P'_{\Delta,q-t}$.  Thus $P''_{t+\Delta,q}$ is stochastically dominated by 
$P''_{\Delta,q-t} = P''_{0,q-t}$.  Hence~\eqref{eq:change_2} implies that
\begin{align}
    P
    & (\bigcup_{d \geq s+t}\left\{L_{s} + A_{s,d}  \geq P'_{d-s-\Delta}\right\})\notag\\
    &\geq  P(\bigcup_{q \geq t}\left\{L_{s} + A'_{0,t}+A''_{0,q-t}\geq P'_{0,t+\Delta}+P''_{0,q-t}\right\})\label{eq:change_3}\\
   & \geq P(P'_{0,t+\Delta}\leq  A'_{0,t}+M+L_{s})
\end{align}
where $M$ is defined in~\eqref{eq:M=}.
The pmf of $M$ equals~\eqref{eq:e(i)}.
Similar to \cite{li2021close}, $P(P'_{0, t+\Delta}\leq  A'_{0,t}+M+L_{s})$ can be calculated using the pmfs of the relevant independent random variables.
Thus, given a latency $t$, the safety of a given transaction can be violated with a probability 
no less 
than given in~\eqref{eq:lower}.


\bibliographystyle{ieeetr}
\bibliography{ref}

\end{document}